\newtheorem{theorem}{Theorem}
\newtheorem{lemma}[theorem]{Lemma}
\newtheorem{proposition}[theorem]{Proposition}
\newtheorem{definition}[theorem]{Definition}
\newtheoremstyle{named}{}{}{\itshape}{}{\bfseries}{.}{.5em}{Restatement of #1 \thmnote{#3 }}
\theoremstyle{named}
\newcommand{\paddingsymbol}{\square}
\newcommand{\paddingtuple}[1]{\paddingsymbol^{\cartesianproduct #1}}
\newcommand{\alphabet}{\ensuremath{\Sigma}}
\newcommand{\aalphabet}{\alphabet_1}
\newcommand{\balphabet}{\alphabet_2}
\newcommand{\nbits}{\ensuremath{k}}
\newcommand{\alayer}{\ensuremath{B}}
\newcommand{\blayer}{\ensuremath{\alayer^{\prime}}}
\newcommand{\layer}{\alayer}
\newcommand{\odd}{\ensuremath{D}}
\newcommand{\aodd}{\ensuremath{\odd}}
\newcommand{\bodd}{\ensuremath{\odd^{\prime}}}
\newcommand{\layertransitions}{\ensuremath{T}}
\newcommand{\layerrightfrontier}{\ensuremath{r}}
\newcommand{\layerleftfrontier}{\ensuremath{\ell}}
\newcommand{\layerinitialstates}{\ensuremath{I}}
\newcommand{\layerfinalstates}{\ensuremath{F}}
\newcommand{\layerinitialflag}{\ensuremath{\iota}}
\newcommand{\layerfinalflag}{\ensuremath{\phi}}
\newcommand{\indexset}{\ensuremath{J}}
\newcommand{\lang}{\ensuremath{L}}
\newcommand{\oddlang}[1]{\ensuremath{\mathcal{L}(#1)}}
\newcommand{\automatonlang}[1]{\ensuremath{\bm{\mathcal{L}}(#1)}}
\newcommand{\finiteautomaton}{\ensuremath{\mathcal{F}}}
\newcommand{\automatonstates}{\ensuremath{Q}}
\newcommand{\automatontransitions}{\ensuremath{T}}
\newcommand{\automatoninitialstates}{\ensuremath{I}}
\newcommand{\automatonfinalstates}{\ensuremath{F}}
\newcommand{\N}{\ensuremath{\mathbb{N}}}
\newcommand{\pN}{\ensuremath{\N_+}}
\newcommand{\Q}{\ensuremath{\mathbb{Q}}}
\newcommand{\relation}{\ensuremath{R}}
\newcommand{\arelation}{\relation}
\newcommand{\brelation}{\ensuremath{\relation^{\prime}}}
\newcommand{\relationsymbol}{\ensuremath{\mathsf{R}}}
\newcommand{\structuredomain}{\ensuremath{\relationsymbol_{0}}}
\newcommand{\layeralphabet}[2]{\ensuremath{\mathcal{B}(#1, #2)}}
\newcommand{\layervocabularyalphabet}[3]{\ensuremath{\mathcal{B}(#1, #2, #3)}}
\newcommand{\projection}{\mathsf{proj}}
\newcommand{\identify}{\mathsf{ident}}
\newcommand{\perm}{\mathsf{perm}}
\newcommand{\fold}{\mathsf{fold}}
\newcommand{\unfold}{\mathsf{unfold}}
\newcommand{\directsum}{\oplus}
\newcommand{\width}{\ensuremath{w}}
\newcommand{\vocabulary}{\ensuremath{\tau}}
\newcommand{\arity}{\ensuremath{\mathfrak{a}}}
\newcommand{\arityrelation}{\ensuremath{\arity}}
\newcommand{\fol}[1]{\ensuremath{\mathrm{FO}\set{#1}}}
\newcommand{\msol}[1]{\ensuremath{\mathrm{MSO}\set{#1}}}
\newcommand{\amap}{\ensuremath{\alpha}}
\newcommand{\cmap}{\ensuremath{\beta}}
\newcommand{\permutation}{\ensuremath{\pi}}
\newcommand{\nrelations}{\ensuremath{l}}
\newcommand{\freevariables}{\mathrm{freevar}}
\newcommand{\nfreevariables}{\ensuremath{t}}
\newcommand{\structurestring}{\ensuremath{\mathbb{S}}}
\DeclarePairedDelimiter{\abs}{\lvert}{\rvert}
\DeclarePairedDelimiter{\tuple}{(}{)}
\DeclarePairedDelimiter{\set}{\lbrace}{\rbrace}
\DeclarePairedDelimiter{\sequence}{\langle}{\rangle}
\DeclarePairedDelimiter{\bbset}{[}{]}
\DeclarePairedDelimiter{\dbset}{\llbracket}{\rrbracket}
\newcommand{\bset}[1]{\ensuremath{\bbset{#1}}}
\newcommand{\setst}{\xspace\colon\xspace}
\newcommand{\BigOh}{\ensuremath{\mathcal{O}}}
\newcommand{\astring}{\ensuremath{s}}
\newcommand{\bstring}{\ensuremath{\astring^{\prime}}}
\newcommand{\cartesianproduct}{\times}
\newcommand{\tensorproduct}{\otimes}
\newcommand{\fwidth}{\omega}
\newcommand{\dwidth}[2]{\ensuremath{\fwidth(#1,#2)}}
\newcommand{\state}{\ensuremath{q}}
\newcommand{\astate}{\state}
\newcommand{\bstate}{\ensuremath{\state^{\prime}}}
\newcommand{\layerleftstate}{\ensuremath{\mathfrak{p}}}
\newcommand{\layerrightstate}{\ensuremath{\mathfrak{q}}}
\newcommand{\avariable}{x}
\newcommand{\bvariable}{y}
\newcommand{\variable}{\avariable}
\newcommand{\formula}{\ensuremath{\psi}}
\newcommand{\MSOformula}{\ensuremath{\varphi}}
\newcommand{\freevariableset}{\ensuremath{X}}
\newcommand{\aset}{S}
\newcommand{\vertex}{\ensuremath{u}}
\newcommand{\avertex}{\vertex}
\newcommand{\bvertex}{\ensuremath{v}}
\newcommand{\anumber}{\ensuremath{c}}
\newcommand{\oddlength}{\ensuremath{k}}
\newcommand{\stringlength}{\ensuremath{k}}
\newcommand{\asymbol}{\ensuremath{\sigma}}
\newcommand{\bsymbol}{\ensuremath{\nu}}
\newcommand{\padasymbol}{\ensuremath{\widetilde{\asymbol}}}
\newcommand{\padbsymbol}{\ensuremath{\widetilde{\bsymbol}}}
\newcommand{\hypercubegraph}[1]{\ensuremath{\mathcal{H}_{#1}}}
\newcommand{\hypercubegraphfamily}{\ensuremath{\mathscr{H}}}
\newcommand{\hypercubefamily}{\ensuremath{\mathscr{H}}}
\newcommand{\ie}{i.e.\xspace}
\newcommand{\true}{1}
\newcommand{\false}{0}
\newcommand{\defeq}{\doteq}
\newcommand{\layertransition}{\ensuremath{\delta}}
\newcommand{\odddefiningset}[3]{\ensuremath{\layeralphabet{#1}{#2}^{\protect\circ{#3}}}}
\newcommand{\playerstructuredefiningset}[3]{\ensuremath{\layervocabularyalphabet{#1}{#2}{#3}^{\circledast}}}
\newcommand{\podddefiningset}[2]{\ensuremath{\layeralphabet{#1}{#2}^{\circledast}}}
\newcommand{\associatedrelation}[1]{\ensuremath{\mathsf{rel}(#1)}}
\newcommand{\associatedlang}[1]{\ensuremath{\mathsf{lang}(#1)}}
\newcommand{\isomorphism}{\pi}
\newcommand{\isomorphic}{\ensuremath{\simeq}}
\newcommand{\dcup}{\ensuremath{\uplus}}
\newcommand{\anote}[1]{\todo{#1}}
\newcommand{\rchanged}[1]{{\color{black}#1}} 
\newcommand{\classstructures}{\mathcal{C}}
\newcommand{\structuraltuple}{\ensuremath{\mathcal{D}}}
\newcommand{\structure}{\ensuremath{\mathfrak{A}}}
\newcommand{\derivedstructure}{\ensuremath{\mathfrak{s}}}
\newcommand{\astructure}{\structure}
\newcommand{\bstructure}{\ensuremath{\structure^{\prime}}}
\newcommand{\derivedrelation}{\mathcal{R}}
\newcommand{\alogic}{\mathcal{L}}
\newcommand{\atheory}{\mathcal{T}}
\newcommand{\svocabulary}[1][\alphabet]{\ensuremath{\varrho(#1)}}
\providecommand{\keywords}[1]{\noindent\textbf{\textrm{Keywords.}} #1}
\begin{document}

\title{On the Width of Regular Classes of Finite Structures\footnote{A preliminary version of this work was published in the proceedings of the 27th International Conference on Automated Deduction \cite{MeloOliveira2019}} 
}

\author{Alexsander Andrade de Melo$^1$ \hspace{0.6cm} Mateus de Oliveira Oliveira$^2$ \\ 
\\
$^1$Federal University of Rio de Janeiro, Rio de Janeiro, Brazil \\ 
\texttt{aamelo@cos.ufrj.br} \\ 
$^2$University of Bergen, Bergen, Norway \\
\texttt{mateus.oliveira@uib.no}
}

\date{}

\maketitle

\begin{abstract}
In this work, we introduce the notion of decisional width of a finite
relational structure and the notion of decisional width of a regular class of
finite structures.  Our main result states that given a first-order formula
$\formula$ over a vocabulary $\vocabulary$, and a finite automaton
$\finiteautomaton$ over a suitable alphabet $\layervocabularyalphabet{\alphabet}{\width}{\vocabulary}$
representing a width-$\width$ regular-decisional class of $\vocabulary$-structures
$\classstructures$, one can decide in time $f(\vocabulary,\alphabet,\formula,\width)\cdot |\finiteautomaton|$
whether some $\vocabulary$-structure in $\classstructures$ satisfies $\formula$. Here, $f$ is a function that 
depends on the parameters $\vocabulary,\alphabet,\formula,\width$, but not on the size of the automaton
$\finiteautomaton$ representing the class. Therefore, besides implying that the first-order theory of 
any given regular-decisional class of finite structures is decidable, it also implies that when 
the parameters $\vocabulary$, $\formula$, $\alphabet$ and $\width$ are fixed,
decidability can be achieved in linear time on the size of the input automaton $\finiteautomaton$. 	
Building on the proof of our main result, we show that 
the problem of counting satisfying assignments for a first-order logic formula in a given 
structure $\astructure$ of width $\width$ is fixed-parameter tractable with 
respect to $\width$, and can be solved in quadratic time on the length of 
the input representation of $\astructure$. \\ 

\keywords{Automatic Structures, Width Measures, First Order Logic}
\end{abstract}

\section{Introduction}
Relational structures can be used to formalize a wide variety of mathematical constructions, 
such as graphs, hypergraphs, groups, rings, databases, etc. 
Not surprisingly, relational structures are a central object of study in several subfields 
of computer science, such as database theory, learning theory \cite{getoor2001learning}, 
constraint satisfaction theory \cite{Bulatov2016graphs,KolaitisVardi2000}, 
and automated theorem proving \cite{poon2008general,sutskever2009modelling,Courcelle1990MSO,CourcelleMakowskyRotics2000,Zaid2017}. 

Given a class $\classstructures$ of finite relational structures and a logic $\alogic$,
such as first-order logic (FO) or monadic second-order logic (MSO), 
the $\alogic$ theory of $\classstructures$ 
is the set $\atheory(\alogic,\classstructures)$ of all logical sentences from $\alogic$ that
are satisfied by at least one structure in $\classstructures$. The theory $\atheory(\alogic,\classstructures)$
is said to be {\em decidable} if the problem of determining whether a given sentence $\varphi$ belongs to
$\atheory(\alogic,\classstructures)$ is decidable. Showing that the $\alogic$ theory of a particular 
class of structures $\classstructures$ is  decidable is an endeavour of fundamental importance because 
many interesting mathematical statements can be formulated as the problem of determining whether some/every/no structure in
a given class $\classstructures$ of finite structures satisfies a given logical sentence $\varphi$.

The problem of deciding logical properties of relational structures has been studied extensively 
in the field of automatic structure theory. Intuitively, a relational structure is automatic if its domain 
and each of its relations can be defined by finite automata. Although the idea of representing certain 
types of relational structures using automata dates back to seminal work of B\"{u}chi 
and Rabin, 
and it has been extensively applied in the field of automatic group theory \cite{farb1992automatic},
Khoussainov and Nerode were the first to define and investigate a general 
notion of automatic structure \cite{khoussainov1995automatic}. One early use 
of the notion of automatic structure is in providing a simplified proof 
of a celebrated theorem of Presburger stating that the first-order theory of $(\N,+)$
is decidable.
In \cite{Kruckman2012} Kruckman et al. introduced the notion of structure with advice and used it to show that 
the first-order theory of $(\Q,+)$ is decidable. The theory of automatic structures with advice was extended 
by Zaid, Gr\"adel and Reinhardt and used to define classes of relational structures. 
They showed that an automatic class of structures with advice has decidable first-order logic theory if and only
if the set of used advices has a decidable monadic second-order logic theory \cite{Zaid2017}.

Interesting classes of structures with decidable MSO theory can be obtained by combining automata theoretic
techniques with techniques from parameterized complexity theory. In particular, for each fixed $k\in \N$, if $\classstructures$ is the class of graphs of treewidth at 
most $k$, and $\alogic$ is the monadic second order logic of graphs with vertex-set and edge-set quantifications (MSO$_2$ logic),
then the $\atheory(\alogic,\classstructures)$ is decidable \cite{Courcelle1990MSO,Zaid2017}. 
Given the expressive power of MSO$_2$ logic, this decidability result implies 
that for each fixed $k\in \N$, the validity of several long-standing conjectures, 
such as Hadwiger's conjecture for $K_r$-free graphs (for each fixed $r\geq 7$) can be decided automatically when restricted to the class of graphs of 
treewidth at most $k$. Similar decidability results can be obtained when $\classstructures$ is the class 
of graphs of cliquewidth at most $k$, for fixed $k\in \N$, and $\alogic$ is the monadic second-order 
logic of graphs with vertex-set quantifications (MSO$_1$ logic) \cite{CourcelleMakowskyRotics2000,Zaid2017}.
On the other hand, it is known that the MSO$_1$ theory of any class of graphs that contain arbitrarily large grids
is undecidable. Additionally, a celebrated long-standing conjecture due to Seese states that the 
MSO$_1$ theory of any class of graphs of unbounded cliquewidth is undecidable \cite{Seese1991structure}.

\subsection{Our Contributions}

In this work, we introduce a new width measure for finite relational structures that builds on the
notion of ordered decision diagrams (ODDs). An ODD  over an alphabet $\alphabet$ is essentially an
acyclic finite automaton over $\alphabet$ where states are split into a sequence of frontiers, 
and transitions are split into a sequence of layers in such a way that transitions in layer $i$
send states in frontier $i-1$ to states in frontier $i$. We note that ODDs over a binary alphabet have
been extensively studied in the literature and are usually called  
{\em OBDDs (ordered binary decision diagrams)}. This formalism has been widely used in the field of symbolic computation 
due to its ability to concisely represent certain classes of combinatorial structures with exponentially many elements in the domain
\cite{Bollig2014width,Bollig2012symbolic,Hachtel1993symbolic,Woelfel2006symbolic,Sawitzki2004implicit}. 

An important complexity measure when dealing with ODDs is the notion of {\em width}. This is simply the maximum number
of states in a frontier of the ODD. In this work, we will be concerned with relational structures that can be represented 
using ODDs whose width is bounded by a constant $\width$. More precisely, we say that a relational structure
$\structure = \tuple{\structuredomain(\structure), \relationsymbol_1(\structure), \ldots, \relationsymbol_{\nrelations}(\structure)}$
over a relational vocabulary $\vocabulary = \tuple{\relationsymbol_1, \ldots, \relationsymbol_{\nrelations}}$ is 
$(\alphabet,\width)$-decisional if its domain $\structuredomain(\structure)$ can be represented by an ODD of width at most $\width$
over the alphabet $\alphabet$, and each of its relations $\relationsymbol_i(\structure)\subseteq \structuredomain(\structure)^{\arity_i}$
can be represented by an ODD of width at most $\width$ over the alphabet $\alphabet^{\otimes \arity_i}$, i.e. the $\arity_i$-fold
Cartesian product of $\alphabet$. If we let $\layeralphabet{\alphabet}{\width}$ denote the set of all layers whose frontiers have 
size at most $\width$, then an ODD over $\alphabet$ can be represented as a word over the alphabet $\layeralphabet{\alphabet}{\width}$. 
Similarly, an ODD over $\alphabet^{\otimes \arity_i}$ of width at most $\width$ can be represented as a word over the alphabet $\layeralphabet{\alphabet^{\otimes \arity_i}}{\width}$. 
Therefore, a $(\alphabet,\width)$-relational structure can be represented as a word over the alphabet 
$\layervocabularyalphabet{\alphabet}{\width}{\vocabulary} \defeq \layeralphabet{\alphabet}{\width} \tensorproduct \layeralphabet{\alphabet^{\tensorproduct\arity_1}}{\width} \tensorproduct \cdots \tensorproduct \layeralphabet{\alphabet^{\tensorproduct \arity_{\nrelations}}}{\width}$. Following this point of view, one can define classes of finite $(\alphabet,\width)$-structures over a vocabulary $\vocabulary$
using languages over the alphabet $\layervocabularyalphabet{\alphabet}{\width}{\vocabulary}$. In particular, we will be concerned with classes of structures that 
can be defined using regular languages over $\layervocabularyalphabet{\alphabet}{\width}{\vocabulary}$. We say that that a class $\classstructures$ of finite relational structures
is $(\alphabet,\width)$-regular-decisional
if there is a finite automaton $\finiteautomaton$ over $\layervocabularyalphabet{\alphabet}{\width}{\vocabulary}$ whose words represent precisely those structures in $\classstructures$. 

In this work, we show that given a first order sentence $\formula$ over a vocabulary $\vocabulary$, an alphabet $\alphabet$ and a constant $\width$, 
the class of $(\alphabet,\width)$-decisional structures satisfying $\formula$ is $(\alphabet,\width)$-regular-decisional. Building on this result
we show that given a finite automaton $\finiteautomaton$ over the alphabet $\layervocabularyalphabet{\alphabet}{\width}{\vocabulary}$ representing 
a class $\classstructures$ of structures, one can decide in time  $f(\vocabulary,\alphabet,\formula,\width)\cdot |\finiteautomaton|$ whether 
some structure defined by a word in $\lang(\finiteautomaton)$ satisfies $\varphi$. Finally, we show that given a formula $\formula$ over $\vocabulary$ with $\nfreevariables$ variables
$\avariable_1,\dots,\avariable_{\nfreevariables}$, and a presentation of a $\vocabulary$-structure $\astructure$ as a tuple of ODDs of length $\oddlength$ and width $\width$, one can count in time 
$f(\vocabulary,\alphabet,\formula,\nfreevariables,\width)\cdot k$ the number of assignments for $\avariable_1,\dots,\avariable_{\nfreevariables}$ that satisfy $\formula$ on $\astructure$.
As a straightforward application of this result, we have that several counting problems 
that are $\#W[1]$-complete or $\#W[2]$-complete on general graphs, 
such as counting the number of independent sets of size $r$ or the number of
dominating sets of size $r$ (for fixed $r$), can be solved in FPT time on $(\alphabet,\width)$-decisional graphs, provided a presentation is given at the input.

An interesting feature of our width measure for classes of structures is that it behaves very differently from 
usual complexity measures for graphs studied in structural complexity theory. As an example of this fact, we 
note that the family of hypercube graphs has regular-decisional width $2$, while it has unbounded width in 
many of the studied measures, such as treewidth and cliquewidth. This class of graphs has also unbounded degeneracy, 
and therefore it is not no-where dense. As a consequence, we have that most algorithmic meta-theorems proved so
far dealing with the interplay between first-order logic and structural graph theory 
\cite{grohe2008algorithmic,kreutzer2008algorithmic,grohe2014algorithmic} fail on graphs of constant
decisional-width. In this regard, it is interesting to note that our first-order decidability result cannot be 
generalized to MSO$_1$ logic, since the class of grid graphs has constant decisional width, while the MSO$_1$ theory
of grids is known to be undecidable~\cite{Seese1991structure,Hlinveny2006}.

\section{Preliminaries}
We denote by $\N \defeq \set{0, 1, \ldots}$ the set of natural numbers (including zero), 
and by $\pN \defeq \N\setminus \set{0}$ the set of positive natural numbers.
For each $\anumber \in \pN$, we let 
$\bset{\anumber} \defeq \set{1, 2, \ldots, \anumber}$  and  
$\dbset{\anumber} \defeq \set{0, 1, \ldots, \anumber-1}$. 
Given a set $S$ and a number $a\in \pN$, we let $S^{\cartesianproduct a}$ be the 
set of all $a$-tuples of elements from $S$.

\paragraph{\bf Relational Structures.}
\label{subsection:RelationalStructures}

A \emph{relational vocabulary} is a tuple  $\vocabulary = \tuple{\relationsymbol_1, \ldots, \relationsymbol_{\nrelations}}$ 
of \emph{relation symbols} where for each $i\in [\nrelations]$, the relation symbol $\relationsymbol_{i}$ 
is associated with an arity $\arityrelation_i\in\pN$.
Let $\vocabulary = \tuple{\relationsymbol_1, \ldots, \relationsymbol_{\nrelations}}$ be a relational vocabulary. 
A finite \emph{$\vocabulary$-structure} is a tuple $\astructure = \tuple{\structuredomain(\astructure), \relation_1(\astructure), \ldots, \relation_\nrelations(\astructure)}$ such that
\begin{enumerate}
  \item $\structuredomain(\astructure)$ is a non-empty finite set, called the \emph{domain} of $\astructure$;
  \item for each $i\in\bset{\nrelations}$, $\relation_i(\astructure) \subseteq \structuredomain(\astructure)^{\cartesianproduct \arityrelation_i}$ is an $\arityrelation_i$-ary relation.
\end{enumerate}

Let $\astructure$ and $\bstructure$ be $\vocabulary$-structures.
An \emph{isomorphism from $\astructure$ to $\bstructure$} is a bijection $\isomorphism \colon \structuredomain(\astructure)\rightarrow \structuredomain(\bstructure)$ such that, for each  $i \in \bset{\nrelations}$, $\tuple{\vertex_{1},\ldots,\vertex_{\arity_{i}}} \in \relationsymbol_{i}(\astructure)$ if and only if $\tuple{\isomorphism(\vertex_{1}),\dots,\isomorphism(\vertex_{\arity_{i}})} \in \relationsymbol_{i}(\bstructure)$. 
If there exists an isomorphism from $\astructure$ to $\bstructure$, then we say that $\astructure$ is \emph{isomorphic} to $\bstructure$, and we denote this fact by $\astructure \isomorphic \bstructure$.

\paragraph{\bf First-Order Logic.} Now, we briefly recall some basic concepts from first-order logic.
Extensive treatments of this subject can be found in \cite{Ebbinghaus2005,EbbinghausFumThomas2013}.

Let $\vocabulary = \tuple{\relationsymbol_1, \ldots, \relationsymbol_{\nrelations}}$ be a relational vocabulary.
We denote by $\fol{\vocabulary}$ the set of all \emph{first-order logic formulas over $\vocabulary$}, \ie the logic formulas comprising: variables to be used as placeholders for elements from the domain of a $\vocabulary$-structure; the Boolean connectives $\vee$, $\wedge$, $\neg$, $\rightarrow$ and $\leftrightarrow$; the quantifiers $\exists$ and $\forall$ that can be applied to the variables;
and the atomic logic formulas $\avariable=\bvariable$, where $\avariable$ and $\bvariable$ are variables, and $\arelation_i(\avariable_1,\ldots,\avariable_{\arityrelation_i})$ for some $i\in \bset{\nrelations}$, where $\avariable_1, \ldots, \avariable_{\arityrelation_i}$ are variables.

A variable $\avariable$ is said to be {\em free} in a formula $\fol{\vocabulary}$ if some of its occurrences 
is outside the scope of any quantifier. We let $\freevariables(\formula)$ denote the set of variables that 
are free in $\formula$. We write $\formula(\avariable_1,\ldots, \avariable_{\nfreevariables})$ to indicate
that $\freevariables(\formula) \subseteq \set{\avariable_1,\ldots, \avariable_{\nfreevariables}}$.
A \emph{sentence} is a formula without free variables. 

Let $\structure = \tuple{\structuredomain(\structure), \relationsymbol_1(\structure), \ldots, \relationsymbol_{\nrelations}(\structure)}$ be a finite $\vocabulary$-structure, $\formula(\variable_1, \ldots, \variable_\nfreevariables) \in \fol{\vocabulary}$ and $\vertex_1, \ldots, \vertex_{\nfreevariables}\in\structuredomain(\structure)$.
We write $\structure \models \formula[\avertex_1, \ldots, \avertex_\nfreevariables]$ to mean that $\structure$ satisfies $\formula$ when all the free occurrences of the variables $\avariable_1, \ldots, \avariable_\nfreevariables$ are interpreted by the values $\avertex_1, \ldots, \avertex_\nfreevariables$, respectively.
In particular, if $\formula$ is a sentence, then we may write $\structure \models \formula$ to mean that $\structure$ satisfies $\formula$.
In this case, we also say that $\structure$ is a \emph{model} of $\formula$.
If $\formula(\avariable_1, \ldots, \avariable_\nfreevariables) \equiv \relationsymbol_{i}(\avariable_{\cmap(1)}, \ldots, \avariable_{\cmap(\arityrelation_{i})})$ for some mapping $\cmap \colon \bset{\arityrelation_{i}} \rightarrow \bset{\nfreevariables}$ and some $i \in \bset{\nrelations}$, then $\structure \models \formula[\vertex_{1}, \ldots, \vertex_{\nfreevariables}]$ if and only if $\tuple{\avertex_{\cmap(1)}, \ldots, \avertex_{\cmap(\arityrelation_{i})}}\in\relationsymbol_{i}(\structure)$.
The semantics of the equality symbol, of the quantifiers $\exists$ and $\forall$, and of the Boolean connectives $\vee$, $\wedge$, $\neg$, $\rightarrow$ and $\leftrightarrow$ are the usual ones.

\paragraph{\bf Languages.}
An \emph{alphabet} is any finite, non-empty set of symbols $\alphabet$.
A \emph{string} over $\alphabet$ is any finite sequence of symbols from $\alphabet$.
We denote by $\alphabet^*$ the set of all strings over $\alphabet$, including 
the empty string $\varepsilon$, and by $\alphabet^{+}$ the set of all (non-empty) strings over $\alphabet$.
A \emph{language} over $\alphabet$ is any subset $\lang$ of $\alphabet^{*}$.
In particular, for each $\stringlength \in \pN$, we let $\alphabet^{\stringlength}$ be the language of all strings of length $\stringlength$ over $\alphabet$, and we let $\alphabet^{\leq \stringlength} \defeq \alphabet^{1} \cup \cdots \cup \alphabet^{\stringlength}$ be the language of all (non-empty) strings of length at most $\stringlength$ over $\alphabet$. 

For each alphabet $\alphabet$, we let $\paddingsymbol$ be a special symbol, called the {\em padding symbol}, such that $\paddingsymbol \not\in \alphabet$, and we write $\alphabet \dcup \set{\paddingsymbol}$ to denote the disjoint union between $\alphabet$ and $\set{\paddingsymbol}$.
For each $\asymbol \in \alphabet \dcup \set{\paddingsymbol}$ and each $\anumber \in \pN$, we let $\asymbol^{\cartesianproduct \anumber}$ denote the tuple $\tuple{\asymbol, \ldots, \asymbol}$ composed by $\anumber$ copies of the symbol $\asymbol$.

\paragraph{\bf Tensor Product.}
\label{subsection:TensorProduct}

Let $\alphabet_1, \ldots, \alphabet_{\anumber}$ be $\anumber$ alphabets, where $\anumber \in \pN$.
The \emph{tensor product} of $\alphabet_1, \ldots, \alphabet_{\anumber}$ is defined as the alphabet $$\alphabet_1 \tensorproduct \cdots \tensorproduct \alphabet_{\anumber} \defeq \set{\tuple{\asymbol_1, \ldots, \asymbol_{\anumber}} \setst \asymbol_i\in\alphabet_i,\; i\in\bset{\anumber}}\text{.}$$
In particular, for each $\anumber\in\pN$ and each alphabet $\alphabet$, we define the \emph{$\anumber$-th tensor power} of $\alphabet$ as the alphabet $\alphabet^{\tensorproduct \anumber} \defeq 
\overbrace{\alphabet \tensorproduct \cdots \tensorproduct \alphabet}^{\anumber\;\mathit{times}}\text{.}$
Note that the tensor product of $\alphabet_1, \ldots, \alphabet_{\anumber}$ (the $\anumber$-th tensor power of $\alphabet$) 
may be simply regarded as the Cartesian product of 
$\alphabet_1, \ldots, \alphabet_{\anumber}$ (the $\anumber$-ary Cartesian power of $\alphabet$, respectively).
Also, note that $\alphabet^{\tensorproduct 1} = \alphabet$.

For each $i\in\bset{\anumber}$, let $\astring_i = \asymbol_{i,1} \cdots \asymbol_{i,\stringlength_{i}}$ be a string of length $\stringlength_{i}$, where $\stringlength_{i} \in \pN$, over the alphabet $\alphabet_{i}$.
The \emph{tensor product} of $\astring_1, \ldots, \astring_{\anumber}$ is defined as the string $$\astring_{1} \tensorproduct \cdots \tensorproduct \astring_{\anumber} \defeq \tuple{\padasymbol_{1,1}, \ldots, \padasymbol_{\anumber,1}} \cdots \tuple{\padasymbol_{1,\stringlength}, \ldots, \padasymbol_{\anumber,\stringlength}}\text{}$$
of length $\stringlength = \max\set{\stringlength_{1},\ldots,\stringlength_{\anumber}}$ over the alphabet $(\alphabet_1 \dcup \set{\paddingsymbol}) \tensorproduct \cdots \tensorproduct (\alphabet_{\anumber} \dcup \set{\paddingsymbol})$ such that, for each $i \in \bset{\anumber}$ and each $j \in \bset{\stringlength}$, 
\begin{equation*}
  \padasymbol_{i,j} \defeq \left\{
  \begin{array}{lr}
    \asymbol_{i,j} & \text{ if } j \leq \stringlength_{i} \\
    \paddingsymbol & \text{ otherwise.}
  \end{array}
  \right.
\end{equation*}
For instance, the tensor product of the strings $aabab$ and $abb$ over the alphabet $\set{a,b}$ is the string $\tuple{a,a}\tuple{a,b}\tuple{b,b}\tuple{a,\paddingsymbol}\tuple{b,\paddingsymbol}$ over the alphabet $(\set{a,b}\dcup\set{\paddingsymbol})^{\tensorproduct 2}$. 

For each $i\in\bset{\anumber}$, let $\lang_i \subseteq \alphabet_i^{+}$ be a language over the alphabet $\alphabet_{i}$. The \emph{tensor product} of $\lang_1, \ldots, \lang_{\anumber}$ is defined as the language
\begin{equation*}
  \lang_1 \tensorproduct \cdots \tensorproduct \lang_{\anumber} \defeq \set{\astring_1 \tensorproduct \cdots \tensorproduct \astring_{\anumber} \setst \astring_i\in\lang_i, i\in\bset{\anumber}}\text{.}
\end{equation*}

We remark that in the literature~\cite{Blumensath1999,Blumensath2000,KhoussainovMinnes2007three,Zaid2017}
the tensor product of strings (as well as the tensor product of languages) is also commonly called of \emph{convolution}.

\paragraph{\bf Finite Automata.}
\label{subsection:FiniteAutomata}

A \emph{finite automaton} is a tuple $\finiteautomaton = \tuple{\alphabet,\automatonstates,\automatontransitions,\automatoninitialstates,\automatonfinalstates}$, where $\alphabet$ is an alphabet, $\automatonstates$ is a finite set of \emph{states}, ${\automatontransitions\subseteq \automatonstates\times \alphabet\times \automatonstates}$  is a set of \emph{transitions}, $\automatoninitialstates\subseteq \automatonstates$ is a set of \emph{initial states} and $\automatonfinalstates\subseteq \automatonstates$ is a set of \emph{final states}.
The size of a finite automaton $\finiteautomaton$ is defined as 
$\abs{\finiteautomaton} \defeq \abs{\automatonstates} + \abs{\automatontransitions}\log |\alphabet|$.

Let $\stringlength\in\pN$ and $\astring = \asymbol_{1}\cdots\asymbol_{\stringlength}\in\alphabet^{\stringlength}$.  
We say that $\finiteautomaton$ \emph{accepts} $\astring$ if there exists a sequence of transitions $\sequence{\tuple{\state_{0},\asymbol_{1},\state_{1}},\tuple{\state_{1},\asymbol_{2},\state_{2}}, \ldots,\tuple{\state_{\stringlength-1},\asymbol_\stringlength,\state_\stringlength}}$, called an \emph{accepting sequence} for $\astring$ in $\finiteautomaton$, such that $\state_{0}\in \automatoninitialstates$, $\state_\stringlength\in\automatonfinalstates$ and, for each $i\in\bset{\stringlength}$, ${\tuple{\state_{i-1},\asymbol_{i},\state_{i}}\in\automatontransitions}$. 
The language of $\finiteautomaton$, denoted by $\automatonlang{\finiteautomaton}$, is defined as the set of all strings accepted by $\finiteautomaton$, \ie $\automatonlang{\finiteautomaton} \defeq \set*{\astring \in \alphabet^{+} \setst \astring \text{ is accepted by } \finiteautomaton}\text{.}$

\paragraph{\bf Regular Relations.}\label{subsection:AutomaticStructuresStructures}

Let $\alphabet$ be an alphabet.
A language $\lang \subseteq \alphabet^{+}$ is called \emph{regular} if there exists a finite automaton $\finiteautomaton$ over $\alphabet$ such that $\automatonlang{\finiteautomaton} = \lang$.

For each $\arityrelation \in \pN$ and each language $\lang \subseteq (\alphabet^{\tensorproduct \arityrelation})^{+}$ over the alphabet $\alphabet^{\tensorproduct \arityrelation}$, we let $\associatedrelation{\lang} \defeq \set{\tuple{\astring_{1},\ldots,\astring_{\arityrelation}} \setst \astring_{1}\tensorproduct \cdots \tensorproduct \astring_{\arityrelation} \in \lang}$ be the \emph{relation associated with $\lang$}.
On the other hand, for each $\arityrelation \in \pN$ and each $\arityrelation$-ary relation $\relation \subseteq (\alphabet^{+})^{\cartesianproduct \arityrelation}$, we let $\associatedlang{\relation} \defeq \set{\astring_{1}\tensorproduct \cdots \tensorproduct \astring_{\arityrelation} \setst \tuple{\astring_{1},\ldots,\astring_{\arityrelation}} \in \relation}$ be the \emph{language associated with $\relation$}.
We say that such a relation $\relation$ is \emph{regular} if its associated language $\associatedlang{\relation}$ is a regular
subset of $(\alphabet^{\tensorproduct \arityrelation})^+$.

\section{Ordered Decision Diagrams}
\label{section:ODDs}

Let $\alphabet$ be an alphabet and $\width\in\pN$.  
A \emph{$\tuple{\alphabet,\width}$-layer} is a tuple $\alayer \defeq \tuple{\layerleftfrontier,\layerrightfrontier, \layertransitions, \layerinitialstates, \layerfinalstates, \layerinitialflag, \layerfinalflag}$, where $\layerleftfrontier \subseteq \dbset{\width}$ is a set of \emph{left states}, $\layerrightfrontier \subseteq \dbset{\width}$ is a set of \emph{right states}, $\layertransitions \subseteq \layerleftfrontier \cartesianproduct
(\alphabet \dcup \set{\paddingsymbol}) \cartesianproduct \layerrightfrontier$ is a set of \emph{transitions}, $\layerinitialstates \subseteq \layerleftfrontier$ is a set of \emph{initial states}, $\layerfinalstates \subseteq \layerrightfrontier$ is a set of \emph{final states} and $\layerinitialflag, \layerfinalflag \in \set{\false, \true}$ are Boolean flags satisfying the following conditions: 
\begin{enumerate*}[label=\roman*)]
  \item $\layerinitialstates =\emptyset$ if $\layerinitialflag = \false$, and 
  \item $\layerfinalstates = \emptyset$ if $\layerfinalflag = \false$.
\end{enumerate*} 
We remark that it is possible that $\layerinitialflag = \true$ and $\layerinitialstates =\emptyset$. Similarly, it might be the case in which $\layerfinalflag = \true$ and $\layerfinalstates = \emptyset$.

In what follows, we may write $\layerleftfrontier(\alayer)$, $\layerrightfrontier(\alayer)$, $\layertransitions(\alayer)$, $\layerinitialstates(\alayer)$, $\layerfinalstates(\alayer)$, $\layerinitialflag(\alayer)$ and $\layerfinalflag(\alayer)$ to refer to the sets $\layerleftfrontier$, $\layerrightfrontier$, $\layertransitions$, $\layerinitialstates$ and $\layerfinalstates$ and to the Boolean flags $\layerinitialflag$ and $\layerfinalflag$, respectively.

We let $\layeralphabet{\alphabet}{\width}$ denote the set of all $\tuple{\alphabet,\width}$-layers. 
Note that, $\layeralphabet{\alphabet}{\width}$ is non-empty and has at most $2^{\BigOh(\abs{\alphabet} \cdot \width^2)}$ elements. 
Therefore, $\layeralphabet{\alphabet}{\width}$ may be regarded as an alphabet.

Let $\oddlength\in \pN$. A \emph{$(\alphabet,\width)$-ordered decision diagram} (or simply, $(\alphabet,\width)$-\emph{ODD}) of \emph{length} $\oddlength$ is a string $\aodd \defeq \alayer_1 \cdots \alayer_\oddlength \in \layeralphabet{\alphabet}{\width}^{\oddlength}$ of length $\oddlength$ over the alphabet $\layeralphabet{\alphabet}{\width}$ satisfying the following conditions:
\begin{enumerate}
  \item for each $i\in\bset{\oddlength-1}$, $\layerleftfrontier(\alayer_{i+1}) = \layerrightfrontier(\alayer_i)$;
  \item $\layerinitialflag(\alayer_{1}) = \true$ and, for each $i \in \set{2, \ldots, \oddlength}$, $\layerinitialflag(\alayer_{i}) = \false$;
  \item $\layerfinalflag(\alayer_{\oddlength}) = \true$ and, for each $i \in \bset{\oddlength-1}$, $\layerfinalflag(\alayer_{i}) = \false$.
\end{enumerate}
The \emph{width} of $\aodd$ is defined as $\fwidth(\aodd) \defeq \max\set*{\abs{\layerleftfrontier(\alayer_{1})}, \ldots, \abs{\layerleftfrontier(\alayer_{\oddlength})}, \abs{\layerrightfrontier(\alayer_{\oddlength})}}\text{.}$
We remark that $\fwidth(\aodd) \leq \width$.

For each $\oddlength \in \pN$, we denote by $\odddefiningset{\alphabet}{\width}{\oddlength}$ the set of all $(\alphabet,\width)$-ODDs of length $\oddlength$. 
And, more generally, we let $\podddefiningset{\alphabet}{\width}$ be the set of all $(\alphabet,\width)$-ODDs, \ie
$\podddefiningset{\alphabet}{\width} \defeq \bigcup_{k\in \pN}  \odddefiningset{\alphabet}{\width}{\oddlength}$.

Let $\odd = \layer_1 \cdots \layer_{\oddlength} \in \odddefiningset{\alphabet}{\width}{\oddlength}$ and $\astring = \asymbol_{1} \cdots \asymbol_{\oddlength^{\prime}} \in \alphabet^{\leq \oddlength}$. 
A \emph{valid sequence} for $\astring$ in $\odd$ is a sequence of transitions $\sequence{\tuple{\layerleftstate_1,\padasymbol_{1},\layerrightstate_1}, \ldots, \tuple{\layerleftstate_{\oddlength},\padasymbol_{\oddlength},\layerrightstate_{\oddlength}}}$ satisfying the following conditions:
\begin{enumerate}
  \item for each $i\in\bset{\oddlength}$, $\padasymbol_{i} = \asymbol_{i}$ if $i \leq \oddlength^{\prime}$, and $\padasymbol_{i} = \paddingsymbol$ otherwise;
  \item for each $i\in\bset{\oddlength}$, $\tuple{\layerleftstate_{i}, \padasymbol_{i}, \layerrightstate_{i}}\in\layertransitions(\alayer_{i})$;
  \item for each $i\in\bset{\oddlength-1}$, $\layerleftstate_{i+1} = \layerrightstate_{i}$.
\end{enumerate}
Such a sequence is called \emph{accepting} for $\astring$ if, in addition to the above conditions, $\layerleftstate_{1} \in \layerinitialstates(\alayer_1)$ and $\layerrightstate_{\oddlength} \in \layerfinalstates(\alayer_{\oddlength})$.
We say that $\odd$ \emph{accepts} $\astring$ if there exists an accepting sequence for $\astring$ in $\aodd$. 
The language of $\aodd$, denoted by $\oddlang{\aodd}$, is defined as the set of all strings accepted by $\aodd$, \ie $\oddlang{\aodd} \defeq \set*{\astring \in \alphabet^{\leq \oddlength} \setst \astring \text{ is accepted by } \aodd}\text{.}$

\begin{proposition}\label{proposition:odd_length}
Let $\alphabet$ be an alphabet, $\width, \oddlength \in \pN$ and $\aodd \in \odddefiningset{\alphabet}{\width}{\oddlength}$.
For each natural number $\oddlength^{\prime} \geq \oddlength$, there exists an ODD $\bodd \in \odddefiningset{\alphabet}{\width}{\oddlength^{\prime}}$\!\! such that $\oddlang{\bodd}=\oddlang{\aodd}$.
\end{proposition}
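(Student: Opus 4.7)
The plan is to proceed by induction on $\oddlength' - \oddlength$. The base case $\oddlength'=\oddlength$ is trivial, so the whole content of the argument lies in showing how to extend an ODD by one layer without changing its language. Iterating this extension $\oddlength' - \oddlength$ times yields $\bodd$.

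For the inductive step, given $\aodd = \alayer_1 \cdots \alayer_\oddlength$, I would build $\bodd = \blayer_1 \cdots \blayer_{\oddlength+1} \in \odddefiningset{\alphabet}{\width}{\oddlength+1}$ by keeping $\blayer_i = \alayer_i$ for $i \in \bset{\oddlength-1}$, modifying $\blayer_\oddlength$ to be $\alayer_\oddlength$ with $\layerfinalflag$ reset to $\false$ and $\layerfinalstates$ reset to $\emptyset$, and appending a fresh \emph{pass-through layer} $\blayer_{\oddlength+1}$ defined by
$\layerleftfrontier(\blayer_{\oddlength+1}) = \layerrightfrontier(\blayer_{\oddlength+1}) = \layerrightfrontier(\alayer_\oddlength)$,
$\layertransitions(\blayer_{\oddlength+1}) = \set{(\state, \paddingsymbol, \state) \setst \state \in \layerrightfrontier(\alayer_\oddlength)}$,
$\layerinitialflag = \false$, $\layerinitialstates = \emptyset$,
$\layerfinalflag = \true$, and $\layerfinalstates = \layerfinalstates(\alayer_\oddlength)$. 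A quick check confirms that $\bodd$ satisfies the three defining conditions of an ODD (frontier compatibility, unique initial layer, unique final layer), that all frontiers have size at most $\width$ (since the new frontiers copy an existing frontier of $\aodd$), and that the $\tuple{\alphabet,\width}$-layer constraints hold.

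Next I would verify $\oddlang{\bodd} = \oddlang{\aodd}$ by a symmetric argument on accepting sequences. For the inclusion $\oddlang{\aodd} \subseteq \oddlang{\bodd}$, take $\astring \in \oddlang{\aodd}$ with accepting sequence $\sequence{\tuple{\layerleftstate_1,\padasymbol_1,\layerrightstate_1}, \ldots, \tuple{\layerleftstate_\oddlength, \padasymbol_\oddlength, \layerrightstate_\oddlength}}$; since $|\astring| \leq \oddlength \leq \oddlength + 1$, I can extend this sequence by $\tuple{\layerrightstate_\oddlength, \paddingsymbol, \layerrightstate_\oddlength}$, which is a valid transition in $\blayer_{\oddlength+1}$, and $\layerrightstate_\oddlength \in \layerfinalstates(\alayer_\oddlength) = \layerfinalstates(\blayer_{\oddlength+1})$ makes this extended sequence accepting in $\bodd$. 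For the reverse inclusion, any accepting sequence in $\bodd$ must use a transition of the form $\tuple{\state, \paddingsymbol, \state}$ in its last position, forcing the last padded symbol to be $\paddingsymbol$, hence $|\astring| \leq \oddlength$. Truncating the sequence to its first $\oddlength$ transitions gives a valid sequence in $\aodd$ whose final state $\state$ lies in $\layerfinalstates(\blayer_{\oddlength+1}) = \layerfinalstates(\alayer_\oddlength)$, making it accepting in $\aodd$.

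There is no real obstacle here; the only point demanding care is bookkeeping around the $\layerinitialflag$ and $\layerfinalflag$ conditions, specifically remembering to turn off $\layerfinalflag$ on the former last layer so that the defining property (only the final layer of the ODD carries $\layerfinalflag = \true$) is preserved, and to transfer $\layerfinalstates(\alayer_\oddlength)$ to the new trailing layer so that language preservation goes through in both directions.
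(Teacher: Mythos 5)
Your proof is correct and follows essentially the same approach as the paper: reset the final flag and final states of the old last layer, and append trailing layers that read only $\paddingsymbol$ and carry the acceptance condition. The only differences are organizational — you proceed by induction one layer at a time and keep the full frontier with identity $\paddingsymbol$-transitions, whereas the paper constructs all $\oddlength'-\oddlength$ padding layers at once and collapses them to a single state $0$ — but the underlying idea and the language-preservation argument are the same.
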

\begin{proof}
  Assume that $\aodd = \alayer_{1}\cdots\alayer_{\oddlength} \in \odddefiningset{\alphabet}{\width}{\oddlength}$.
    If $\oddlength^{\prime} = \oddlength$, then simply define $\bodd = \aodd$.
    Thus, assume that $\oddlength^{\prime} > \oddlength$, and let $\bodd = \blayer_{1}\cdots\blayer_{\oddlength}\blayer_{\oddlength+1}\cdots\blayer_{\oddlength^{\prime}}$ be the ODD in $\odddefiningset{\alphabet}{\width}{\oddlength^{\prime}}$ defined as follows.
    \begin{itemize}
      \item For each $i \in \bset{\oddlength - 1}$, we let $\blayer_{i} = \alayer_{i}$.
      \item We let $\blayer_{\oddlength}$ be the layer in $\layeralphabet{\alphabet}{\width}$ obtained from $\alayer_{\oddlength}$ by setting $\layerfinalstates(\blayer_{\oddlength}) = \emptyset$ and $\layerfinalflag(\blayer_{\oddlength}) = \false$.
      \item We let $\blayer_{\oddlength+1}$ be the layer in $\layeralphabet{\alphabet}{\width}$ defined as follows: $\layerleftfrontier(\blayer_{\oddlength+1}) = \layerfinalstates(\alayer_{\oddlength})$; $\layerrightfrontier(\blayer_{\oddlength+1}) = \set{0}$; $\layertransitions(\blayer_{\oddlength+1}) = \set{\tuple{\layerleftstate, \paddingsymbol, 0} \setst \layerleftstate \in \layerfinalstates(\alayer_{\oddlength})}$; $\layerinitialstates(\blayer_{\oddlength+1}) = \emptyset$; $\layerfinalstates(\blayer_{\oddlength+1}) = \emptyset$ if $\oddlength^{\prime} > \oddlength+1$, and $\layerfinalstates(\blayer_{\oddlength+1}) = \set{0}$ otherwise; $\layerinitialflag(\blayer_{\oddlength+1}) = \false$; and $\layerfinalflag(\blayer_{\oddlength+1}) = \false$ if $\oddlength^{\prime} > \oddlength+1$, and $\layerfinalflag(\blayer_{\oddlength+1}) = \true$ otherwise.
      \item Finally, for each $i\in \set{\oddlength+2, \ldots, \oddlength^{\prime}}$, we let $\blayer_{i}$ be the layer in $\layeralphabet{\alphabet}{\width}$ defined as follows: $\layerleftfrontier(\blayer_{i}) = \layerrightfrontier(\blayer_{i}) = \set{0}$; $\layertransitions(\blayer_{i}) = \set{\tuple{0, \paddingsymbol, 0}}$; $\layerinitialstates(\blayer_{i}) = \emptyset$; $\layerfinalstates(\blayer_{i}) = \emptyset$ if $i < \oddlength^{\prime}$, and $\layerfinalstates(\blayer_{i}) = \set{0}$ otherwise; $\layerinitialflag(\blayer_{i}) = \false$; and $\layerfinalflag(\blayer_{i}) = \false$ if $i < \oddlength^{\prime}$, and $\layerfinalflag(\blayer_{i}) = \true$ otherwise.
    \end{itemize}
One can verify that $\bodd$ is in fact an ODD in $\odddefiningset{\alphabet}{\width}{\oddlength^{\prime}}$ and $\oddlang{\bodd} = \oddlang{\aodd}$.
\end{proof}

\section{Regular-Decisional Classes of Finite Relational Structures}
\label{section:RegularDecisional}

In this section we combine ODDs with finite automata in order to define infinite classes 
of finite structures. We start by defining the notion of $(\alphabet,\width,\vocabulary)$-structural 
tuples of ODDs as a way to encode single finite $\vocabulary$-structures. Subsequently, we use finite 
automata over a suitable alphabet in order to uniformly define infinite families of finite $\vocabulary$-structures.

\subsection{Decisional Relations and Decisional Relational Structures}
\label{subsection:DecisionalStructures}

Let $\alphabet$ be an alphabet and $\arityrelation, \width \in \pN$. 
A \emph{finite} \mbox{$\arityrelation$-ary} relation $\relation \subseteq (\alphabet^{+})^{\cartesianproduct \arityrelation}$ 
is said 
to be strongly  $(\alphabet,\width)$-decisional if there exists an ODD $\odd$ in $\podddefiningset{\alphabet^{\tensorproduct \arityrelation}}{\width}$ such that $\associatedrelation{\oddlang{\odd}} = \relation$. Given a set $\aset$, we say that a relation 
$\relation \subseteq \aset^{\cartesianproduct \arityrelation}$ is $(\alphabet,\width)$-decisional if $\relation$ is isomorphic 
to some strongly $(\alphabet,\width)$-decisional structure.

Let $\vocabulary = \tuple{\relationsymbol_1,\dots,\relationsymbol_{\nrelations}}$ be a relational vocabulary, $\alphabet$ be an alphabet and $\width \in \pN$. 
We say that a tuple $\structuraltuple = \tuple{\odd_{0},\odd_{1},\dots,\odd_{\nrelations}}$ of ODDs is 
\emph{$(\alphabet,\width,\vocabulary)$-structural} if there exists a positive natural number $\oddlength\in \pN$, called the \emph{length of $\structuraltuple$}, such that the following conditions are satisfied:
\begin{enumerate}
  \item $\odd_{0}$ is an ODD in $\odddefiningset{\alphabet}{\width}{\oddlength}$;
  \item for each $i\in \bset{\nrelations}$, $\odd_{i}$ is an ODD in $\odddefiningset{\alphabet^{\tensorproduct \arity_i}}{\width}{\oddlength}$;
  \item for each $i\in \bset{\nrelations}$, $\oddlang{\odd_{i}}\subseteq \oddlang{\odd_{0}}^{\tensorproduct \arityrelation_{i}}$. 
\end{enumerate}

Let $\structuraltuple = \tuple{\odd_{0},\odd_{1},\ldots,\odd_{\nrelations}}$ be a $(\alphabet,\width,\vocabulary)$-structural tuple. 
The \emph{$\vocabulary$-structure derived from $\structuraltuple$} is defined as the finite $\vocabulary$-structure $$\derivedstructure(\structuraltuple) \defeq \tuple{\structuredomain(\derivedstructure(\structuraltuple)),\relationsymbol_{1}(\derivedstructure(\structuraltuple)),\ldots,\relationsymbol_{\nrelations}(\derivedstructure(\structuraltuple))}\text{,}$$ with domain $\structuredomain(\derivedstructure(\structuraltuple)) = \oddlang{\odd_0}$, such that $\relationsymbol_{i}(\derivedstructure(\structuraltuple)) = \associatedrelation{\oddlang{\odd_{i}}}$ for each $i\in \bset{\nrelations}$.

We say that a finite $\vocabulary$-structure $\structure$  is \emph{strongly $(\alphabet,\width)$-decisional} if there exists some
$(\alphabet,\width,\vocabulary)$-structural tuple $\structuraltuple$ such that 
$\structure  = \derivedstructure(\structuraltuple)$. We say that a finite $\vocabulary$-structure $\structure$,
whose domain is not necessarily a subset of $\alphabet^+$, is {\em $(\alphabet,\width)$-decisional} if $\structure$ is isomorphic to some strongly $(\alphabet,\width)$-decisional structure.

The \emph{$\alphabet$-decisional width} of a finite $\vocabulary$-structure $\structure$, denoted by $\dwidth{\alphabet}{\structure}$,
is defined as the minimum $\width \in \pN$ such that $\structure$ is $(\alphabet,\width)$-decisional. The following proposition
states that if a relation is $(\alphabet,\width)$-decisional, then it is also $(\set{0,1},\width^{\prime})$-decisional
for a suitable $\width^{\prime} \in \pN$.

\begin{proposition}\label{proposition:odd_alphabet}
  Let $\vocabulary = \tuple{\relationsymbol_{1}, \ldots, \relationsymbol_{\nrelations}}$ be a relational vocabulary, $\alphabet$ be an alphabet and $\width \in \pN$.
  If $\structure$ is a finite $\vocabulary$-structure with $\dwidth{\alphabet}{\structure} \leq \width$ for some $\width \in \pN$, then 
$\dwidth{\set{0,1}}{\structure} \leq \width^{2}\cdot\abs{\alphabet}^{\max\{1,\arityrelation_1,\ldots,\arityrelation_{\nrelations}\}}$.
\end{proposition}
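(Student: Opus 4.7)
The plan is to construct an explicit binarization of any $(\alphabet,\width,\vocabulary)$-structural tuple representing $\structure$. Let $\structuraltuple = \tuple{\odd_0, \odd_1, \ldots, \odd_\nrelations}$ of common length $\oddlength$ be such a tuple, \ie $\derivedstructure(\structuraltuple) \isomorphic \structure$; the existence of $\structuraltuple$ follows from the hypothesis $\dwidth{\alphabet}{\structure} \leq \width$. I would build a binary structural tuple $\structuraltuple' = \tuple{\odd_0', \odd_1', \ldots, \odd_\nrelations'}$ over $\set{0,1}$ of width $W \leq \width^{2}\cdot\abs{\alphabet}^{A}$ (with $A \defeq \max\set{1,\arityrelation_1,\ldots,\arityrelation_\nrelations}$) and a common length $\oddlength' = \oddlength \cdot b$, together with a bijection $\isomorphism \colon \oddlang{\odd_0} \to \oddlang{\odd_0'}$ that extends to an isomorphism between $\derivedstructure(\structuraltuple)$ and $\derivedstructure(\structuraltuple')$, thereby witnessing $\dwidth{\set{0,1}}{\structure} \leq W$.

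For the construction, I would first fix an injective encoding $\mathsf{enc} \colon \alphabet \to \set{0,1}^{b}$ with $b \defeq \ceil{\log_2 (\abs{\alphabet}+1)}$, reserving one of the $2^b$ codewords to mark the padding symbol $\paddingsymbol$, and set $\isomorphism(\asymbol_1 \cdots \asymbol_k) \defeq \mathsf{enc}(\asymbol_1) \cdots \mathsf{enc}(\asymbol_k)$. For each $i \in \set{0,1,\ldots,\nrelations}$, the new ODD $\odd_i'$ of length $\oddlength \cdot b$ is obtained by replacing each original layer $\alayer_j$ of $\odd_i$ by a \emph{block} of $b$ binary mini-layers that together read the $b$-bit encoding of one original symbol (coordinate-wise for $i \geq 1$, where each mini-layer reads a tuple in $\set{0,1}^{\tensorproduct \arityrelation_i}$). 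Between consecutive blocks, the left and right frontiers of $\odd_i'$ coincide with those of $\odd_i$. Within a block, the intermediate frontiers carry states of the form $\tuple{s, t, \bar{p}}$, where $s \in \layerleftfrontier(\alayer_j)$ is the entry state, $t \in \layerrightfrontier(\alayer_j)$ is the (nondeterministically guessed) exit state, and $\bar{p}$ is a tuple of partial binary prefixes, one per coordinate; tracking $t$ from the start of the block lets us commit to a specific original transition and route the simulation to $\layerrightfrontier(\alayer_j)$ at the end of the block. A coordinate at any bit position in a block is compatible with at most $\abs{\alphabet}$ original symbols (plus the padding option, absorbed into the encoding), so the intermediate frontiers have at most $\width^{2} \cdot \abs{\alphabet}^{\arityrelation_i} \leq \width^{2}\cdot\abs{\alphabet}^{A}$ states.

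The step I expect to be the main obstacle is verifying that the tensor-product containment $\oddlang{\odd_i'} \subseteq \oddlang{\odd_0'}^{\tensorproduct \arityrelation_i}$ is preserved by the binarization. In the original tuple, the tensor product of $\arityrelation_i$ strings pads the shorter components with $\paddingsymbol$ at the end; the binary simulation must reproduce the same padding pattern on the encoded side. The crucial observation is that every $\alphabet$-symbol is encoded in exactly $b$ bits, so padding necessarily occurs at block boundaries of the binary tensor product, and within any block a coordinate is uniformly either reading the encoding of a real $\alphabet$-symbol for the full $b$ mini-layers or fully padded for the full $b$ mini-layers; no mixed block can arise. Once this block alignment is in place, the remaining verifications, \ie that each $\odd_i'$ is a legal element of $\odddefiningset{\set{0,1}^{\tensorproduct \arityrelation_i}}{W}{\oddlength \cdot b}$ (invoking Proposition~\ref{proposition:odd_length} only if lengths need to be aligned with $\odd_0'$), that $\isomorphism$ is a bijection between $\oddlang{\odd_0}$ and $\oddlang{\odd_0'}$, and that $\isomorphism$ preserves every relation $\relationsymbol_i$, reduce to routine case checking on the layers of the ODDs involved.
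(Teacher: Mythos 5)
Your construction takes essentially the same route as the paper's proof: expand each layer of each ODD into a block of $b$ binary mini-layers, keep the block-boundary frontiers equal to the original frontiers, let the intermediate frontiers carry just enough information to finish simulating one original transition, and observe that a uniform-length encoding forces the padding in tensor products to be block-aligned. That last observation is exactly the point the paper relies on, and your bookkeeping $(s,t,\bar{p})$ is only superficially different from the paper's device of indexing the intermediate states by the transitions of the original layer; both give the same order of width.

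The one step that would fail as written is your treatment of the padding symbol. You reserve a genuine binary codeword for $\paddingsymbol$ (hence $b=\ceil{\log_2(\abs{\alphabet}+1)}$), whereas the paper's injection $\amap$ sends $\paddingsymbol$ to $\paddingsymbol^{\anumber}$, i.e.\ to $\anumber$ copies of the padding symbol of the \emph{new} layer alphabet. This is not cosmetic: condition (3) of structural tuples and the operator $\associatedrelation{\cdot}$ are both phrased in terms of literal tensor products, and the tensor product of binary strings pads short coordinates with the symbol $\paddingsymbol$ itself, never with a codeword. If, inside a block of $\odd_{i}^{\prime}$ where coordinate $j$ has already ended, that coordinate carries the reserved codeword rather than $\paddingsymbol$, then the accepted string is the tensor product of the \emph{wrong} component strings (the $j$-th component now ends in copies of the reserved codeword, and typically lies outside $\oddlang{\odd_{0}^{\prime}}$); consequently $\oddlang{\odd_{i}^{\prime}} \not\subseteq \oddlang{\odd_{0}^{\prime}}^{\tensorproduct \arityrelation_{i}}$ and every tuple of $\relationsymbol_{i}(\structure)$ whose components have different lengths is lost from, or corrupted in, $\associatedrelation{\oddlang{\odd_{i}^{\prime}}}$. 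Your own remark that a padded coordinate is ``fully padded for the full $b$ mini-layers'' already contains the repair: emit $\paddingsymbol$ in those coordinates, drop the reserved codeword, and take $b=\ceil{\log_2\abs{\alphabet}}$; with that change your argument goes through and coincides with the paper's.
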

\begin{proof}
  Since by hypothesis $\dwidth{\alphabet}{\structure} \leq \width$, there exists a strongly $(\alphabet,\width)$-decisional $\vocabulary$-structure that is isomorphic to $\structure$.
  Thus, for simplicity, assume without loss of generality that $\structure$ is strongly $(\alphabet,\width)$-decisional.
  For each $i \in \dbset{\nrelations+1}$, let $\odd_{i} \in \podddefiningset{\alphabet^{\tensorproduct \arityrelation_{i}}}{\width}$ such that $\associatedrelation{\oddlang{\odd_{i}}} = \relationsymbol_{i}(\structure)$, where $\arityrelation_{0} = 1$.
  Based on Proposition~\ref{proposition:odd_length}, we can assume without loss of generality that the ODDs $\odd_{0}, \odd_{1}, \ldots, \odd_{\nrelations}$ have the same length, say $\oddlength \in \pN$.
  Thus, assume that $\odd_{i} = \layer_{i,1}\cdots\layer_{i,\oddlength}$ for each $i \in \dbset{\nrelations+1}$.

  Let $\anumber = \lceil\log_{2}\abs{\alphabet}\rceil$ and $\amap \colon (\alphabet \dcup \set{\paddingsymbol}) \rightarrow (\set{0,1} \dcup \set{\paddingsymbol})^{\anumber}$ be an injection, 
\anote{(R1.11)}\rchanged{which sends each symbol in $\alphabet\dcup\set{\paddingsymbol}$ to a string of length $\anumber$ over $\set{0,1}\dcup\set{\paddingsymbol}$,} such that $\amap(\paddingsymbol) \in \set{\paddingsymbol}^{\anumber}$ and, for each $\asymbol \in \alphabet$, $\amap(\asymbol) \in \set{0,1}^{\anumber}$. 
  Note that, each symbol in $\alphabet$ can be associated with an exclusive number in $\dbset{\abs{\alphabet}}$.
  Thus, we might, for instance, consider $\amap$ as the encoding which sends each symbol in $\alphabet$ to the usual binary representation, with $\anumber$ bits, of its associated number in $\dbset{\abs{\alphabet}}$. 

  For each symbol $\padasymbol \in \alphabet \dcup \set{\paddingsymbol}$ and each $h \in \bset{\anumber}$, we let $\amap(\padasymbol,h)$ denote the $h$-th symbol of $\amap(\padasymbol)$, \ie, if $\amap(\padasymbol) = \padbsymbol_{1}\cdots\padbsymbol_{\anumber}$, then $\amap(\padasymbol,h) = \padbsymbol_{h} \in \set{0,1} \dcup \set{\paddingsymbol}$.

  Let $\oddlength^{\prime} = \oddlength \cdot \anumber$, and let $j = g\cdot \anumber + h$, where $g \in \dbset{\oddlength}$ and $h \in \bset{\anumber}$.
  For each $i \in \dbset{\nrelations+1}$, consider the layer $\blayer_{i,j} \in \layeralphabet{\set{0,1}^{\tensorproduct \arityrelation_{i}}}{\width^{2}\cdot\abs{\alphabet}^{\arityrelation_{i}}}$ with left state set: $\layerleftfrontier(\blayer_{i,j}) = \layerleftfrontier(\layer_{i,g+1})$ if $h = 1$, and $\layerleftfrontier(\blayer_{i,j}) = \dbset{\abs{\layertransitions(\layer_{i,g+1})}}$ otherwise; right state set:  $\layerrightfrontier(\blayer_{i,j}) = \layerrightfrontier(\layer_{i,g+1})$ if $h = \anumber$, and $\layerrightfrontier(\blayer_{i,j}) = \dbset{\abs{\layertransitions(\layer_{i,g+1})}}$ otherwise; transition set 
  \begin{equation*}
    \layertransitions(\blayer_{i,j}) = 
    \begin{cases}
      \begin{multlined}[0.56\textwidth]
        \big\{\tuple*{\layerleftstate, \tuple*{\amap(\padasymbol_{1}), \ldots, \amap(\padasymbol_{\arityrelation_{i}})}, \layerrightstate} \setst \\[-2.5ex] \hfil \tuple{\layerleftstate, \tuple{\padasymbol_{1}, \ldots, \padasymbol_{\arityrelation_{i}}}, \layerrightstate} \in \layertransitions(\layer_{i,g+1})\big)
      \end{multlined} & \text{ if } \anumber = 1\text{} \\[2.5ex]

      \begin{multlined}[0.56\textwidth]
        \big\{\tuple*{\layerleftstate, \tuple*{\rchanged{\amap(\padasymbol_{1},1)}, \ldots, \rchanged{\amap(\padasymbol_{\arityrelation_{i}},1)}}, \cmap(\layertransition)} \setst \\[-2.5ex] \hfil \exists\,\layertransition=\tuple{\layerleftstate, \tuple{\padasymbol_{1}, \ldots, \padasymbol_{\arityrelation_{i}}}, \layerrightstate} \in \layertransitions(\layer_{i,g+1})\big)
      \end{multlined} & \text{ if } \anumber \neq 1 \text{ and } h = 1\text{} \\[2.5ex]

      \begin{multlined}[0.56\textwidth]
        \big\{\tuple*{\cmap(\layertransition), \tuple*{\rchanged{\amap(\padasymbol_{1},\anumber)}, \ldots, \rchanged{\amap(\padasymbol_{\arityrelation_{i}},\anumber)}}, \layerrightstate} \setst \\[-2.5ex] \hfil \exists\,\layertransition=\tuple{\layerleftstate, \tuple{\padasymbol_{1}, \ldots, \padasymbol_{\arityrelation_{i}}}, \layerrightstate} \in \layertransitions(\layer_{i,g+1})\big)
      \end{multlined} & \text{ if } \anumber \neq 1 \text{ and } h = \anumber\text{} \\[2.5ex]

      \begin{multlined}[0.56\textwidth]
        \big\{\tuple*{\cmap(\layertransition), \tuple*{\rchanged{\amap(\padasymbol_{1},h)}, \ldots, \rchanged{\amap(\padasymbol_{\arityrelation_{i}},h)}}, \cmap(\layertransition)} \setst \\[-2.5ex] \hfil \exists\,\layertransition=\tuple{\layerleftstate, \tuple{\padasymbol_{1}, \ldots, \padasymbol_{\arityrelation_{i}}}, \layerrightstate} \in \layertransitions(\layer_{i,g+1})\big)
      \end{multlined} & \text{ otherwise, } \\[2.5ex]
    \end{cases}
  \end{equation*}
  where $\cmap \colon \layertransitions(\layer_{i,g+1}) \rightarrow \dbset{\abs{\layertransitions(\layer_{i,g+1})}}$ denotes an arbitrary bijection that sends each transition in $\layertransitions(\layer_{i,g+1})$ to an exclusive number in $\dbset{\abs{\layertransitions(\layer_{i,g+1})}}$; 
  initial state set: $\layerinitialstates(\blayer_{i,j}) = \layerinitialstates(\layer_{i,g+1})$ if $j = 1$, and $\layerinitialstates(\blayer_{i,j}) = \emptyset$ otherwise; final state set: $\layerfinalstates(\blayer_{i,j}) = \layerfinalstates(\layer_{i,g+1})$ if $j = \oddlength^{\prime}$, and $\layerfinalstates(\blayer_{i,j}) = \emptyset$ otherwise; initial Boolean flag: $\layerinitialflag(\blayer_{i,j}) = \true$ if $j = 1$, and $\layerinitialflag(\blayer_{i,j}) = \false$ otherwise; and final Boolean flag: $\layerfinalflag(\blayer_{i,j}) = \true$ if $j = \oddlength^{\prime}$, and $\layerfinalflag(\blayer_{i,j}) = \false$ otherwise.

  One can verify that, for each $i \in \dbset{\nrelations+1}$, $\bodd_{i} = \blayer_{i,1}\cdots\blayer_{i,\oddlength^{\prime}}$ is an ODD in $\odddefiningset{\set{0,1}^{\tensorproduct \arityrelation_{i}}}{\width^{2}\cdot \abs{\alphabet}^{\arityrelation_{i}}}{\oddlength^{\prime}}$ with language  
  \begin{equation*}
    \begin{multlined}[0.95\textwidth]
      \oddlang{\bodd_{i}} = \Big\{\big(\amap(\padasymbol_{1,1}, 1),\ldots,\amap(\padasymbol_{1,\arityrelation_{i}}, 1)\big) \cdots \tuple*{\amap(\padasymbol_{1,1}, \anumber),\ldots,\amap(\padasymbol_{1,\arityrelation_{i}},\anumber)} \cdots \\ \hfill \big(\amap(\padasymbol_{\oddlength,1}, 1),\ldots,\amap(\padasymbol_{\oddlength,\arityrelation_{i}}, 1)\big) \cdots \big(\amap(\padasymbol_{\oddlength,1}, \anumber),\ldots,\amap(\padasymbol_{\oddlength,\arityrelation_{i}}, \anumber)\big) \setst \hspace{4.0ex} \\ \hfill \big(\padasymbol_{1,1},\ldots,\padasymbol_{1,\arityrelation_{i}}\big) \cdots \big(\padasymbol_{\oddlength,1},\ldots,\padasymbol_{\oddlength,\arityrelation_{i}}\big) \in \oddlang{\bodd_i}\Big\}\text{.}
    \end{multlined}
  \end{equation*}

  Let $\bstructure$ be the $\vocabulary$-structure, with domain $\structuredomain(\bstructure) = \associatedrelation{\oddlang{\bodd_{0}}}$, such that $\relationsymbol_{i}(\bstructure) = \associatedrelation{\oddlang{\bodd_{i}}}$ for each $i \in \bset{\nrelations}$.
  By construction, $\bstructure$ is $(\set{0,1}, \width^{2}\cdot\abs{\alphabet}^{\max\set{1,\arityrelation_1,\ldots,\arityrelation_{\nrelations}}})$-decisional.
  Moreover, $\bstructure$ is isomorphic to $\structure$.
  Indeed, note that the permutation $\isomorphism \colon \structuredomain(\bstructure) \rightarrow \structuredomain(\structure)$, where $\permutation(\vertex) = \padasymbol_{1}\cdots\padasymbol_{\oddlength}$ for each $\vertex = \amap(\padasymbol_{1})\cdots\padasymbol(\asymbol_{\oddlength}) \in \structuredomain(\bstructure)$, is an isomorphism from $\bstructure$ to $\structure$.
\end{proof}

\paragraph{\bf The hypercube graph $\hypercubegraph{\oddlength}$.}
Let $\oddlength \in \pN$. The \emph{$\oddlength$-dimensional hypercube graph} is
the graph 
$\hypercubegraph{\oddlength} = 
(\relationsymbol_0(\hypercubegraph{\oddlength}),\relationsymbol_1(\hypercubegraph{\oddlength}))$ 
whose vertex set $\relationsymbol_0(\hypercubegraph{\oddlength}) = \{0,1\}^{\oddlength}$ is the set of all 
$\oddlength$-bit binary strings, and whose edge set $\relationsymbol_1(\hypercubegraph{\oddlength})$ 
is the set of all pairs of $\oddlength$-bit strings that differ in exactly one position. 
$$\relationsymbol_1(\hypercubegraph{\oddlength})= \{(b_1b_2\cdots b_{\oddlength}, b_1'b_2'\cdots b_{\oddlength}')\;:\; 
\mbox{There is a unique $i\in [\oddlength]$ such that $b_i\neq b'_i$}\}.$$ 
Therefore, the $\oddlength$-dimensional hypercube is equal to the structure derived from the 
structural pair $\structuraltuple_{\oddlength} = (\odd_0,\odd_1)$, where the ODD $\odd_0 \in \odddefiningset{\{0,1\}}{2}{\oddlength}$ 
accepts all $\oddlength$-bit strings of length $\oddlength$, and $\odd_1\in \odddefiningset{\{0,1\}^{\tensorproduct 2}}{2}{\oddlength}$ 
is the ODD that accepts all strings of the form $b_1\cdots b_k \tensorproduct b_1'\cdots b_k'$ such that $b_1\cdots b_k$ and 
$b_1'\cdots b_k'$ differ in exactly one entry. In other words, we have 
$\hypercubegraph{\oddlength} = \derivedstructure(\structuraltuple_{\oddlength})$. This structural pair
is depicted in Figure \ref{figure:HypercubeODD}.

\begin{figure}[ht]\centering
{\includegraphics[scale=1.7]{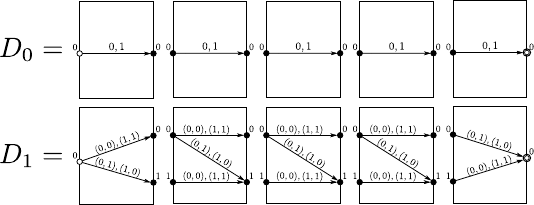}}
\caption{An example of a $(\{0,1\},2,\vocabulary)$-structural pair $\structuraltuple_5 = (\odd_0,\odd_1)$, where $\vocabulary$
is the relational vocabulary of directed graphs. 
The ODD $\odd_0$ accepts all binary strings of length $5$. The odd $\odd_1$ accepts all strings of the 
form $b_1\cdots b_5 \tensorproduct b_1'\cdots b_5' = (b_1,b_1')\cdots (b_5,b_5')$ 
such that $b_1\cdots b_5$ and $b_1'\cdots b_5'$ differ in exactly one bit. 
The structure derived from $\structuraltuple_5$ is the $5$-dimensional directed hypercube 
  $\hypercubegraph{5} = \derivedstructure(\structuraltuple_5)$. \label{figure:HypercubeODD} } 
\end{figure}

\subsection{Regular-Decisional Classes}
\label{subsection:RegularDecisional}
Let $\alphabet$ be an alphabet, $\width \in \pN$ and $\vocabulary = \tuple{\relationsymbol_{1}, \ldots, \relationsymbol_{\nrelations}}$ be a relational vocabulary.
We let $\derivedrelation(\alphabet,\width,\vocabulary)$ 
denote the relation constituted by all $(\alphabet,\width,\vocabulary)$-structural tuples. 
Note that the relation $\derivedrelation(\alphabet,\width,\vocabulary)$ is a subset of the following set of tuples of ODDs 
$$ \bigcup_{\oddlength\in \pN} 
\odddefiningset{\alphabet}{\width}{\oddlength}\cartesianproduct \odddefiningset{\alphabet^{\tensorproduct \arity_1}}{\width}{\oddlength} \cartesianproduct \cdots \cartesianproduct \odddefiningset{\alphabet^{\tensorproduct \arity_{\nrelations}}}{\width}{\oddlength}.$$

Since each $(\alphabet,\width,\vocabulary)$-structural tuple $\structuraltuple\in \derivedrelation(\alphabet,\width,\vocabulary)$ 
corresponds to a $(\alphabet,\width)$-decisional structure $\derivedstructure(\structuraltuple)$, we can associate
with each sub-relation 
$\relation\subseteq \derivedrelation(\alphabet,\width,\vocabulary)$ a class 
$\derivedstructure(\relation) =
\set*{\structure \setst  \structuraltuple \in \relation, \derivedstructure(\structuraltuple) \isomorphic \structure}$
of $(\alphabet,\width)$-decisional $\vocabulary$-structures. 

\begin{definition}
\label{definition:RegularDecisional}
Let $\alphabet$ be an alphabet, $\width\in \pN$, and $\vocabulary$ be a relational vocabulary. 
We say that a class $\classstructures$ of \emph{finite} $\vocabulary$-structures is \emph{$(\alphabet,\width)$-regular-decisional} if 
there exists a regular sub-relation $\relation \subseteq \derivedrelation(\alphabet,\width,\vocabulary)$ 
such that $\classstructures = \derivedstructure(\relation)$. 
\end{definition}

The \emph{$\alphabet$-regular-decisional width} of a class $\classstructures$ of finite $\vocabulary$-structures, denoted by 
$\dwidth{\alphabet}{\classstructures}$, is defined as the minimum $\width \in \pN$ such that 
$\classstructures$ is $(\alphabet,\width)$-regular-decisional. We note that this minimum $w$ may not exist. In this case, 
we set $\dwidth{\alphabet}{\classstructures} = \infty$. 

Now, consider the alphabet 
\begin{equation}
\label{equation:VocabularyAlphabet}
\layervocabularyalphabet{\alphabet}{\width}{\vocabulary} \defeq \layeralphabet{\alphabet}{\width} \tensorproduct \layeralphabet{\alphabet^{\tensorproduct\arity_1}}{\width} \tensorproduct \cdots \tensorproduct \layeralphabet{\alphabet^{\tensorproduct \arity_{\nrelations}}}{\width}.
\end{equation}

\todo[inline]{(Up to now it is not known that the set 
$\layervocabularyalphabet{\alphabet}{\width}{\vocabulary}^{\circledast}$ is regular).
We let $\layervocabularyalphabet{\alphabet}{\width}{\vocabulary}^{\circledast}$
be the set of strings $\odd_{0}\tensorproduct \odd_{1}\tensorproduct \ldots \tensorproduct \odd_{\nrelations}$ over 
$\layervocabularyalphabet{\alphabet}{\width}{\vocabulary}$ such that $(\odd_0,\odd_1,\ldots,\odd_{\nrelations})$ is a 
$(\alphabet,\width,\vocabulary)$-structural tuple.}

Then a class $\classstructures$ of finite $\vocabulary$-structures is $(\alphabet,\width)$-regular-decisional if and only if 
there exists a finite automaton $\finiteautomaton$ over $\layervocabularyalphabet{\alphabet}{\width}{\vocabulary}$ 
such that 
$\associatedrelation{\automatonlang{\finiteautomaton}}\subseteq \derivedrelation(\alphabet,\width,\vocabulary)$, 
and $\classstructures=\derivedstructure(\associatedrelation{\automatonlang{\finiteautomaton}})$. 

\begin{definition}
\label{definition:FiniteAutomatonRepresentation}
Let $\alphabet$ be an alphabet, $\width\in \pN$, $\vocabulary=\tuple{\relationsymbol_1,\dots,\relationsymbol_{\nrelations}}$
be a relational vocabulary and $\classstructures$ be a $(\alphabet,\width,\vocabulary)$-regular-decisional
class of finite structures. We say that a finite automaton 
$\finiteautomaton$ over $\layervocabularyalphabet{\alphabet}{\width}{\vocabulary}$
represents $\classstructures$ if 
$\associatedrelation{\automatonlang{\finiteautomaton}}\subseteq \derivedrelation(\alphabet,\width,\vocabulary)$
and $\classstructures=\derivedstructure(\associatedrelation{\automatonlang{\finiteautomaton}})$. 
\end{definition}

\subsection{The Hypercube Language.}

\todo[inline]{Revisit this paragraph. It is not clear whether we are talking about equality or isomorphism here.
Same thing with definition above.}

In order to show that a class $\classstructures$ of finite $\vocabulary$-structures 
is $(\alphabet,\width)$-decisional, it is enough to construct a finite automaton 
$\finiteautomaton$ over $\layervocabularyalphabet{\alphabet}{\width}{\vocabulary}$
representing $\classstructures$. More precisely, we need to define such an automaton 
$\finiteautomaton$ such that for each $\oddlength\in \N$, and each string 
$\odd_0\tensorproduct \odd_1\tensorproduct \ldots \tensorproduct \odd_{\nrelations}
\in \layervocabularyalphabet{\alphabet}{\width}{\vocabulary}^{\oddlength}$, 
$\odd_0\tensorproduct \odd_1\tensorproduct \ldots \tensorproduct \odd_{\nrelations}$
belongs to $\automatonlang{\finiteautomaton}$ if and only if
$\structuraltuple = (\odd_0,\odd_1,\ldots,\odd_{\nrelations})$ is a
$(\alphabet,\width,\vocabulary)$-structural tuple and $\derivedstructure(\structuraltuple)\in \classstructures$.

To illustrate this type of construction, we show in Proposition
\ref{proposition:hypercube_graph} that the class $\hypercubegraphfamily$ of
hypercubes is $(\set{0,1},2)$-regular-decisional. Since it can be easily shown
that this class is not $(\set{0,1},1)$-regular-decisional, we have that the
$\set{0,1}$-regular-decisional width of $\hypercubegraphfamily$ is $2$.

\begin{proposition}
\label{proposition:hypercube_graph}
Let $\hypercubegraphfamily \defeq \set*{\hypercubegraph{\nbits} \setst \nbits \in \pN}$ be the class of all hypercube graphs.
The class $\hypercubegraphfamily$ is $(\set{0,1}, 2)$-regular-decisional.
\end{proposition}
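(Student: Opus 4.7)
The plan is to exhibit, for each $\oddlength \in \pN$, a canonical $(\set{0,1},2,\vocabulary)$-structural pair $\structuraltuple_{\oddlength} = (\odd_{0,\oddlength}, \odd_{1,\oddlength})$ with $\derivedstructure(\structuraltuple_{\oddlength}) = \hypercubegraph{\oddlength}$, and then to construct a finite automaton $\finiteautomaton$ over $\layervocabularyalphabet{\set{0,1}}{2}{\vocabulary}$ whose language is precisely the set of convolutions $\odd_{0,\oddlength} \tensorproduct \odd_{1,\oddlength}$ for $\oddlength \in \pN$. By Definition~\ref{definition:FiniteAutomatonRepresentation}, this yields the claim.

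For the domain ODD $\odd_{0,\oddlength}$ I would take the width-$1$ ODD with single-state frontiers $\set{0}$ and transitions $(0,0,0)$ and $(0,1,0)$ in every layer, so that $\oddlang{\odd_{0,\oddlength}} = \set{0,1}^{\oddlength}$. For the edge ODD $\odd_{1,\oddlength}$ I would use a width-$2$ ODD with frontiers $\set{0,1}$, where state $0$ records ``no differing coordinate encountered so far'' and state $1$ records ``exactly one differing coordinate encountered''. The transitions in each layer would be $(0,(b,b),0)$, $(0,(b,1-b),1)$ and $(1,(b,b),1)$ for $b \in \set{0,1}$, with initial state $0$ on the first layer and final state $1$ on the last. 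A direct verification shows that $\oddlang{\odd_{1,\oddlength}} \subseteq \oddlang{\odd_{0,\oddlength}}^{\tensorproduct 2}$ and that $\associatedrelation{\oddlang{\odd_{1,\oddlength}}}$ equals the edge relation of $\hypercubegraph{\oddlength}$, so $\derivedstructure(\structuraltuple_{\oddlength}) = \hypercubegraph{\oddlength}$.

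The key observation for the automaton is that, apart from the boundary flags, every layer of $\structuraltuple_{\oddlength}$ is the same element $\Lambda$ of $\layervocabularyalphabet{\set{0,1}}{2}{\vocabulary}$. Denote by $\Lambda^{\iota}$, $\Lambda^{\phi}$ and $\Lambda^{\iota\phi}$ the variants of $\Lambda$ in which the first-layer flag, the last-layer flag, or both are set to true (with the appropriate initial/final state sets attached). The set of valid convolutions is then the regular language
\begin{equation*}
\Lambda^{\iota\phi} \;\cup\; \Lambda^{\iota}\,\Lambda^{*}\,\Lambda^{\phi},
\end{equation*}
which is recognized by a four-state automaton with a single initial state $q_0$, a unique final state $q_f$, an auxiliary state $q_1$, and transitions $(q_0, \Lambda^{\iota\phi}, q_f)$, $(q_0, \Lambda^{\iota}, q_1)$, $(q_1, \Lambda, q_1)$, and $(q_1, \Lambda^{\phi}, q_f)$.

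The main obstacle is purely bookkeeping: writing out the four distinguished layer symbols $\Lambda, \Lambda^{\iota}, \Lambda^{\phi}, \Lambda^{\iota\phi}$ in full detail and checking that every word in $\automatonlang{\finiteautomaton}$ is indeed the convolution of a valid structural pair, while every hypercube $\hypercubegraph{\oddlength}$ is represented. No conceptual difficulty arises, since both the domain ODDs and the edge ODDs for hypercubes are uniform across layers apart from the boundary flags.
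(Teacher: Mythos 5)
Your proposal is correct and follows essentially the same route as the paper: the paper's proof simply exhibits (in a figure) the canonical width-$2$ structural pair for $\hypercubegraph{\oddlength}$, whose layers are identical up to boundary flags, and a small automaton with one looping middle state that accepts exactly one convolution $\odd_0\tensorproduct\odd_1$ per length $\oddlength$. The only quibble is cosmetic: you announce a ``four-state'' automaton but list three states, which is all that is needed.
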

\begin{proof}
In Figure \ref{figure:HypercubeClass} we depict an automaton $\finiteautomaton$ which accepts a 
string $\odd_0\tensorproduct \odd_1$ of length $\oddlength$ if and only if the pair 
$\structuraltuple_{\oddlength} = (\odd_0,\odd_1)$ is a structural pair whose derived structure 
  $\derivedstructure(\structuraltuple_{\oddlength})$ is the hypercube graph $\hypercubegraph{k}$. 
\end{proof}

\begin{figure}[ht]\centering
\includegraphics[scale=1.2]{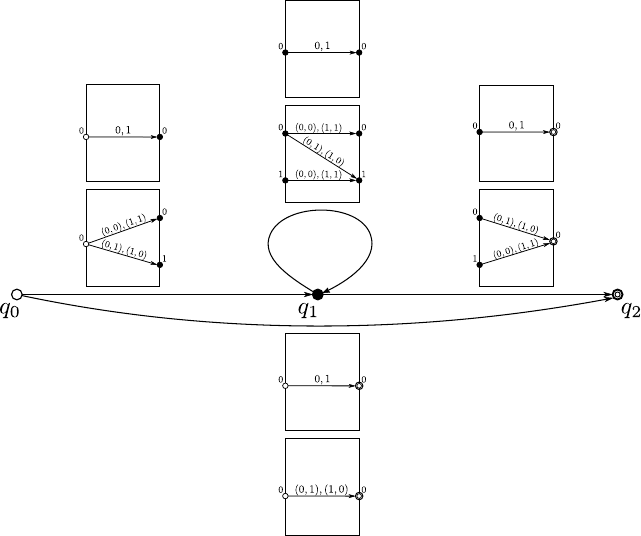}
\caption{An automaton $\finiteautomaton$ over the alphabet $\layervocabularyalphabet{\{0,1\}}{2}{\vocabulary}$,  
where $\vocabulary$ is the vocabulary of directed graphs. 
This automaton accepts exactly one string of length $\oddlength$ for each $\oddlength \in \pN$. 
For each such $\oddlength$, if $\odd_0\tensorproduct \odd_1$ is the unique string of length $\oddlength$ accepted 
by $\finiteautomaton$, then the pair $\structuraltuple_{\oddlength} = (\odd_0,\odd_1)$ is structural, 
and $\derivedstructure(\structuraltuple_{\oddlength})$ is the hypercube graph $\hypercubegraph{\oddlength}$.
In particular, the string $\odd_0\tensorproduct \odd_1$ represented in Figure \ref{figure:HypercubeODD} is accepted 
by $\finiteautomaton$ upon following the sequence of states $\state_0\state_1\state_1\state_1\state_1\state_2$.}\label{figure:HypercubeClass}
\end{figure}

An interesting aspect of Proposition \ref{proposition:hypercube_graph} is 
that it states that the class $\hypercubefamily$ of hypercube graphs has 
regular-decisional width $2$, 
while this class has unbounded width with respect to most 
traditional width measures studied in structural graph theory. For instance, it can be shown that the hypercube 
graph $\hypercubegraph{\oddlength}$ has treewidth $\Theta(2^{\oddlength}/\sqrt{\oddlength})$ 
\cite{Chandran2006treewidth} and cliquewidth $\Omega(2^{\oddlength}/\sqrt{\oddlength})$ 
\cite{bonomo2016graph}. Therefore, $\hypercubegraph{\oddlength}$ has also exponential
bandwidth, carving width, pathwidth, treedepth and rank-width. Additionally,
since all vertices of $\hypercubegraph{\oddlength}$ have degree $\oddlength$, 
the degeneracy of the family $\hypercubefamily$ is $\Theta(\oddlength)$. 
Therefore, $\hypercubefamily$ is {\em not} a nowhere dense class of graphs. 
This also implies that the graphs in $\hypercubefamily$ have unbounded genus, unbounded local treewidth, etc.

\section{First-Order Definable Classes of Structures of Constant Width}
\label{section:FirstOrderLogic}

For each $\fol{\vocabulary}$-sentence $\formula$, we let
$\derivedrelation(\alphabet,\width,\vocabulary,\formula)$ denote the
subset of $\derivedrelation(\alphabet,\width,\vocabulary)$ consisting of
all tuples $\structuraltuple \in
\derivedrelation(\alphabet,\width,\vocabulary)$ whose associated structure
$\derivedstructure(\structuraltuple)$ satisfies $\formula$.
$$\derivedrelation(\alphabet,\width,\vocabulary,\formula) \defeq
\set{\structuraltuple \in \derivedrelation(\alphabet,\width,\vocabulary) \setst
\derivedstructure(\structuraltuple) \models \formula}\text{.}$$ The next theorem
(Theorem~\ref{theorem:FirstOrderRegular}), states that the relation
$\derivedrelation(\alphabet,\width,\vocabulary,\formula)$ is regular. 
 
\begin{theorem}\label{theorem:FirstOrderRegular}
Let $\alphabet$ be an alphabet, $\width\in \pN$, and $\vocabulary = \tuple{\relationsymbol_{1},\ldots,\relationsymbol_{\nrelations}}$
be a relational vocabulary. For each $\fol{\vocabulary}$-sentence $\formula$, the relation
$\derivedrelation(\alphabet,\width,\vocabulary,\formula)$ is regular. 
\end{theorem}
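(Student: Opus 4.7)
The plan is to proceed by induction on the structure of $\formula$, but for the induction to go through I strengthen the claim to allow free variables. Concretely, for each $\formula(\avariable_1, \ldots, \avariable_\nfreevariables) \in \fol{\vocabulary}$, I show that
\[
\derivedrelation(\alphabet,\width,\vocabulary,\formula) \defeq \set{(\structuraltuple, \vertex_1, \ldots, \vertex_\nfreevariables) \setst \structuraltuple \in \derivedrelation(\alphabet,\width,\vocabulary),\ \vertex_i \in \oddlang{\odd_0},\ \derivedstructure(\structuraltuple) \models \formula[\vertex_1, \ldots, \vertex_\nfreevariables]}
\]
is regular as a language over $\layervocabularyalphabet{\alphabet}{\width}{\vocabulary} \tensorproduct \alphabet^{\tensorproduct \nfreevariables}$, after identifying each such tuple with the tensor product of its structural ODDs and its variable-assignment strings. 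The theorem follows from the case $\nfreevariables = 0$.

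For the base case, atomic formulas $\avariable_i = \avariable_j$ reduce to symbol-by-symbol equality checking on the $i$-th and $j$-th variable tracks (modulo padding). For an atomic formula $\relationsymbol_k(\avariable_{\cmap(1)}, \ldots, \avariable_{\cmap(\arityrelation_k)})$, I build an automaton that reads the tensor-product input layer by layer and, via subset construction, simulates the acceptance run of $\odd_k$ (whose successive layers appear on a dedicated track of the input) on the tensor $\vertex_{\cmap(1)} \tensorproduct \cdots \tensorproduct \vertex_{\cmap(\arityrelation_k)}$ (read from other tracks). Since each ODD frontier is a subset of $\dbset{\width}$, only the $2^{\width}$ such subsets need to be tracked, yielding a finite state space. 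In parallel, the automaton verifies the structural conditions of Section~\ref{section:RegularDecisional} (matching frontiers between consecutive layers, correctness of the initial and final flags, and the subsumption $\oddlang{\odd_k} \subseteq \oddlang{\odd_0}^{\tensorproduct \arityrelation_k}$ via a product simulation of $\odd_k$ against the $\arityrelation_k$-fold tensor of $\odd_0$) and that each $\vertex_i$ is accepted by $\odd_0$.

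For the inductive step, the Boolean connectives $\wedge$, $\vee$ and $\neg$ correspond to intersection, union and complementation (the latter taken relative to the regular language of well-formed inputs produced by the base case), all of which preserve regularity. The existential quantifier $\exists \avariable_\nfreevariables.\formula'$ corresponds to projecting out the last variable track from the tensor-product alphabet, which also preserves regularity; $\forall$ reduces to $\neg \exists \neg$.

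The principal obstacle is the atomic case for relation symbols: unlike in the classical setting where each relation is defined by a fixed automaton, here the ODD defining the relation is itself part of the input being read, so its run on the other input tracks must be simulated \emph{on the fly}, with the layers currently being consumed supplying the transitions whose action on already-read variable tracks is being followed. The width bound $\width$ is essential, as it confines the frontier-state subsets encountered during the simulation to a fixed, finite collection $\pwset{\dbset{\width}}$, ensuring that the constructed automaton has boundedly many states regardless of the length of the input. Once this case is carried out carefully, the remaining clauses of the induction are routine applications of the closure properties of regular languages.
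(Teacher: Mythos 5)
Your proposal is correct and follows essentially the same route as the paper: the paper likewise strengthens the claim to formulas with free variables (Theorem~\ref{theorem:RegularInduction}), treats $\wedge$, $\vee$, $\neg$, $\exists$ via closure of regular relations under intersection, union, complementation relative to the well-formed inputs, and projection, and locates the real work in the atomic and structural-ness checks, where the ODDs to be simulated arrive as part of the input and the width bound confines the on-the-fly simulation to finitely many frontier subsets. The only divergence is presentational: you build the base-case automaton monolithically by subset and product constructions, whereas the paper assembles it compositionally from a few primitive regular relations (ODD well-formedness, membership, non-membership, and language-containment between input ODDs) combined via direct sum, identification and projection.
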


A constructive proof of Theorem \ref{theorem:FirstOrderRegular} will be given in Section~\ref{section:ProofMainTheorem}. 
As a consequence of Theorem~\ref{theorem:FirstOrderRegular}, we have that the problem of determining whether a given
$\fol{\vocabulary}$-sentence $\formula$ is satisfied  by some structure $\astructure$ in the class of structures specified
by a given finite automaton $\finiteautomaton$ over the alphabet $\layervocabularyalphabet{\alphabet}{\width}{\vocabulary}$ is 
fixed-parameter linear in the size of $\finiteautomaton$ when parameterized by $\vocabulary$, $\formula$, $\alphabet$
and $\width$ (Theorem \ref{theorem:DecidableFirstOrderTheory}). 

We note that in natural applications, the parameter $\vocabulary$ is often fixed.
For instance, if our goal is to decide properties about classes of directed graphs, then 
$\vocabulary$ is simply the vocabulary of directed graphs. Additionally, often the sentence 
$\formula$ is fixed, as in the case where the goal is to determine whether all graphs in
a given class of graphs, specified by a finite automaton $\finiteautomaton$
over the alphabet $\layervocabularyalphabet{\alphabet}{\width}{\vocabulary}$,
are triangle-free. Finally, by Proposition \ref{proposition:odd_alphabet},
the alphabet $\alphabet$ can be assumed to be the binary alphabet $\{0,1\}$, with only a moderate increase in width. Therefore, 
in applications where $\vocabulary$, $\formula$ and $\alphabet$ are fixed, the only relevant
parameter is the width parameter $\width$. This parameter intuitively is a measure of 
the complexity of each individual structure in the class specified by $\finiteautomaton$, while
the size of $\finiteautomaton$ is a measure of the complexity of the class itself.

\begin{theorem}\label{theorem:DecidableFirstOrderTheory}
Let $\alphabet$ be an alphabet, $\width\in \pN$, $\vocabulary = \tuple{\relationsymbol_1,\dots,\relationsymbol_{\nrelations}}$
be a relational vocabulary, and $\formula$ be a $\fol{\vocabulary}$-sentence. Given a finite automaton
$\finiteautomaton$ over the alphabet $\layervocabularyalphabet{\alphabet}{\width}{\vocabulary}$
representing a $(\alphabet,\width)$-regular-decisional class of $\vocabulary$-structures,
one can determine in time $f(\alphabet,\width,\vocabulary,\formula) \cdot |\finiteautomaton|$
whether some $\vocabulary$-structure $\astructure\in \classstructures$ satisfies $\formula$, for some 
computable function $f$.
\end{theorem}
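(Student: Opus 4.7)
The plan is to reduce the decision problem to the emptiness problem for a product automaton, using Theorem~\ref{theorem:FirstOrderRegular} as the main black box.

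First, I would invoke Theorem~\ref{theorem:FirstOrderRegular} on the tuple $(\alphabet, \width, \vocabulary, \formula)$ to obtain a finite automaton $\finiteautomaton_{\formula}$ over the alphabet $\layervocabularyalphabet{\alphabet}{\width}{\vocabulary}$ whose language (viewed as a relation via the tensor-product convention) is precisely $\derivedrelation(\alphabet,\width,\vocabulary,\formula)$. In particular, a string $\odd_{0}\tensorproduct \odd_{1}\tensorproduct\cdots\tensorproduct \odd_{\nrelations}$ belongs to $\automatonlang{\finiteautomaton_{\formula}}$ if and only if the tuple $\structuraltuple = (\odd_{0},\odd_{1},\ldots,\odd_{\nrelations})$ is $(\alphabet,\width,\vocabulary)$-structural and $\derivedstructure(\structuraltuple)\models \formula$. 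Crucially, the size of $\finiteautomaton_{\formula}$ depends only on $\alphabet$, $\width$, $\vocabulary$ and $\formula$, and not on the input automaton $\finiteautomaton$; let $g(\alphabet,\width,\vocabulary,\formula)$ be a computable bound on $|\finiteautomaton_{\formula}|$.

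Second, I would form the standard synchronous product automaton $\finiteautomaton \times \finiteautomaton_{\formula}$ over $\layervocabularyalphabet{\alphabet}{\width}{\vocabulary}$, whose states are pairs $(\astate,\bstate)$ with $\astate$ a state of $\finiteautomaton$ and $\bstate$ a state of $\finiteautomaton_{\formula}$, and whose transitions, initial and final states are defined componentwise. By construction,
\[
\automatonlang{\finiteautomaton \times \finiteautomaton_{\formula}} \;=\; \automatonlang{\finiteautomaton}\cap \automatonlang{\finiteautomaton_{\formula}}.
\]
Because $\finiteautomaton$ represents $\classstructures$ in the sense of Definition~\ref{definition:FiniteAutomatonRepresentation}, every string in $\automatonlang{\finiteautomaton}$ already codes a structural tuple whose derived structure lies in $\classstructures$. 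Hence $\automatonlang{\finiteautomaton \times \finiteautomaton_{\formula}}$ is nonempty if and only if there exists some $\astructure \in \classstructures$ with $\astructure \models \formula$.

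Third, I would decide nonemptiness of $\finiteautomaton \times \finiteautomaton_{\formula}$ by a standard reachability computation (forward search from the initial states) in time linear in the size of the product automaton. Since $|\finiteautomaton \times \finiteautomaton_{\formula}| = O(|\finiteautomaton|\cdot g(\alphabet,\width,\vocabulary,\formula))$, this yields the desired running time $f(\alphabet,\width,\vocabulary,\formula)\cdot |\finiteautomaton|$ with $f$ a computable function absorbing $g$ and the constants of the reachability procedure.

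The only genuinely nontrivial ingredient is Theorem~\ref{theorem:FirstOrderRegular}, whose proof is deferred to Section~\ref{section:ProofMainTheorem}; once it is in hand, the step that needs a minor care is the product construction, because the alphabet $\layervocabularyalphabet{\alphabet}{\width}{\vocabulary}$ is already a tensor product of layer alphabets and the representation size $|\finiteautomaton| = |\automatonstates|+|\automatontransitions|\log|\layervocabularyalphabet{\alphabet}{\width}{\vocabulary}|$ includes a $\log$ factor depending on $\alphabet,\width,\vocabulary$ but again not on the input, so it can be folded into $f$.
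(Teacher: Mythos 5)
Your proposal is correct and follows essentially the same route as the paper: construct the automaton for $\derivedrelation(\alphabet,\width,\vocabulary,\formula)$ via Theorem~\ref{theorem:FirstOrderRegular}, intersect with the input automaton $\finiteautomaton$, and test emptiness, absorbing the size of the formula automaton into $f$. Your additional remarks on the isomorphism-invariance direction of the correctness argument and on the $\log|\layervocabularyalphabet{\alphabet}{\width}{\vocabulary}|$ factor in the size measure are sound and only make explicit what the paper leaves implicit.
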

\begin{proof}
The proof of Theorem~\ref{theorem:FirstOrderRegular} shows how to construct 
an automaton $\finiteautomaton(\alphabet,\width,\vocabulary,\formula)$ 
over the alphabet $\layervocabularyalphabet{\alphabet}{\width}{\vocabulary}$
such that
$\automatonlang{\finiteautomaton(\alphabet,\width,\vocabulary,\formula)}
= \associatedlang{\derivedrelation(\alphabet,\width,\vocabulary,\formula)}$.
Therefore, there is some $\vocabulary$-structure $\astructure\in \classstructures$
that satisfies $\formula$, if and only if the language 
$\automatonlang{\finiteautomaton}\cap \automatonlang{\finiteautomaton(\alphabet,\width,\vocabulary,\formula)}$
is non-empty. Since this non-emptiness of intersection test can be performed in time
$O(|\finiteautomaton(\alphabet,\width,\vocabulary,\formula)| \cdot |\finiteautomaton|)$, the 
theorem follows by setting $f(\alphabet,\width,\vocabulary,\formula) = 
c\cdot |\finiteautomaton(\alphabet,\width,\vocabulary,\formula)| + d$ for some sufficiently large 
constants $c,d\in \N$. 
\end{proof}

\newcommand{\anotheralphabet}{\Gamma}

Let $\vocabulary = \tuple{\relationsymbol_{1}, \ldots, \relationsymbol_{\nrelations}}$ be a relational vocabulary.  We denote by $\msol{\vocabulary}$ the set
of all monadic second-order logic formulas over $\vocabulary$, \ie the
extension of $\fol{\vocabulary}$ that, additionally, allows variables to be
used as placeholders for sets of elements from the domain of a
finite $\vocabulary$-structure and allows quantification over such variables. 
We note that neither Theorem \ref{theorem:FirstOrderRegular} nor Theorem
\ref{theorem:DecidableFirstOrderTheory} can be generalized to the logic
$\msol{\vocabulary}$ for an arbitrary relational vocabulary $\vocabulary$.
Indeed, it is well known that the MSO theory of unlabeled grids is already
undecidable~\cite{Seese1991structure,Hlinveny2006}
\footnote{Note that the well known fact that the first-order theory of {\em unlabeled} grids
is {\em decidable} is a special case of Theorem \ref{theorem:DecidableFirstOrderTheory},
since unlabeled grids have constant decisional-width. On the other hand, it is also 
well known that the first-order theory of {\em labeled grids} is {\em undecidable}. 
Note that labeled grids may have arbitrarily large decidisonal-width, 
due to the fact that the ODDs representing
the vertices in each label class may have arbitrarily large width.}

Nevertheless, instead of using MSO logic to reason about properties of classes of
$\vocabulary$-structures, we can use MSO logic over the vocabulary of 
$\layervocabularyalphabet{\alphabet}{\width}{\vocabulary}$-strings to
{\em define} $(\alphabet,\width)$-regular-decisional classes of $\vocabulary$-structures. 
More precisely, for a given alphabet $\anotheralphabet$, let $\svocabulary[\anotheralphabet]$ be the
vocabulary of strings over $\anotheralphabet$. From B\"{u}chi-Elgot's theorem 
\cite{Buchi1960,Elgot1961}, a language $\lang\subseteq \anotheralphabet^*$ is regular if and only
if $\lang$ can be defined by an $\msol{\svocabulary[\anotheralphabet]}$-sentence.
In particular, a class $\classstructures$ of finite $\vocabulary$-structures is $(\alphabet,\width)$-regular decisional if and only if there is an
$\msol{\svocabulary[\layervocabularyalphabet{\alphabet}{\width}{\vocabulary}]}$-sentence $\MSOformula$
such that for each $\oddlength\in \pN$, and each string 
$\structurestring= \odd_{0}\tensorproduct \odd_{1}\tensorproduct \ldots \tensorproduct \odd_{\nrelations}$ in
$\layervocabularyalphabet{\alphabet}{\width}{\vocabulary}^{\oddlength}$, $\structurestring$ satisfies $\MSOformula$ if and only if
$\structuraltuple = (\odd_0,\odd_1,\ldots,\odd_{\nrelations})$ is a $(\alphabet,\width,\vocabulary)$-structural tuple
and $\derivedstructure(\structuraltuple)$ belongs to $\classstructures$. 
 
\begin{theorem} 
\label{theorem:StringsVsStructures}
  Let $\alphabet$ be an alphabet, $\width \in \pN$ and $\vocabulary = \tuple{\relationsymbol_{1}, \ldots, \relationsymbol_{\nrelations}}$ be a relational vocabulary.
  Given an $\msol{\svocabulary[\layervocabularyalphabet{\alphabet}{\width}{\vocabulary}]}$-sentence
  $\MSOformula$ and an $\fol{\vocabulary}$-sentence $\formula$, one can 
  decide whether there exists some string $\structurestring = \odd_{0}\tensorproduct \odd_{1}\tensorproduct \ldots \tensorproduct \odd_{\nrelations} \in \playerstructuredefiningset{\alphabet}{\width}{\vocabulary}$ 
  such that $\structurestring\models \MSOformula$ and $\derivedstructure(\structuraltuple) \models \formula$, where $\structuraltuple = \tuple{\odd_{0}, \odd_{1}, \ldots, \odd_{\nrelations}}$.
\end{theorem}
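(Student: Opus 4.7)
The plan is to reduce the problem to a non-emptiness check for the intersection of two finite automata over the alphabet $\layervocabularyalphabet{\alphabet}{\width}{\vocabulary}$. The key observation is that both the MSO requirement on the string $\structurestring$ and the combined structural/FO requirement on the derived structure can be captured by regular languages over this alphabet, so their intersection is also regular and its non-emptiness is decidable.

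First, I would apply B\"{u}chi-Elgot's theorem to the $\msol{\svocabulary[\layervocabularyalphabet{\alphabet}{\width}{\vocabulary}]}$-sentence $\MSOformula$ to obtain a finite automaton $\finiteautomaton_{\MSOformula}$ over the alphabet $\layervocabularyalphabet{\alphabet}{\width}{\vocabulary}$ such that $\automatonlang{\finiteautomaton_{\MSOformula}}$ consists of precisely those strings $\structurestring \in \layervocabularyalphabet{\alphabet}{\width}{\vocabulary}^{+}$ satisfying $\MSOformula$. Next, I would apply Theorem~\ref{theorem:FirstOrderRegular} to the $\fol{\vocabulary}$-sentence $\formula$ to obtain the regularity of $\derivedrelation(\alphabet,\width,\vocabulary,\formula)$, and hence a finite automaton $\finiteautomaton(\alphabet,\width,\vocabulary,\formula)$ over $\layervocabularyalphabet{\alphabet}{\width}{\vocabulary}$ whose accepted language equals $\associatedlang{\derivedrelation(\alphabet,\width,\vocabulary,\formula)}$. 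By the definition of $\derivedrelation(\alphabet,\width,\vocabulary,\formula)$, a string $\structurestring = \odd_0 \tensorproduct \odd_1 \tensorproduct \cdots \tensorproduct \odd_{\nrelations}$ lies in this language if and only if the tuple $\structuraltuple = \tuple{\odd_0, \odd_1, \ldots, \odd_{\nrelations}}$ is $(\alphabet,\width,\vocabulary)$-structural (equivalently, $\structurestring \in \playerstructuredefiningset{\alphabet}{\width}{\vocabulary}$) and the derived structure satisfies $\formula$.

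The crucial point is that the structural requirement $\structurestring \in \playerstructuredefiningset{\alphabet}{\width}{\vocabulary}$ is already built into $\automatonlang{\finiteautomaton(\alphabet,\width,\vocabulary,\formula)}$, so we do not need a separate automaton recognising the whole set of structural strings (whose regularity, as noted earlier, is not required to be established independently). I would then form the standard product automaton $\finiteautomaton^{\cartesianproduct}$ recognising $\automatonlang{\finiteautomaton_{\MSOformula}} \cap \automatonlang{\finiteautomaton(\alphabet,\width,\vocabulary,\formula)}$. A string $\structurestring$ belonging to this intersection is precisely a witness for the existential claim in the theorem: it is structural, its associated tuple $\structuraltuple$ yields a derived structure $\derivedstructure(\structuraltuple)$ with $\derivedstructure(\structuraltuple) \models \formula$, and $\structurestring \models \MSOformula$. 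Conversely, any such witness is accepted by both components.

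Finally, decidability follows by checking emptiness of $\automatonlang{\finiteautomaton^{\cartesianproduct}}$, which is a standard linear-time reachability computation on the product automaton's transition graph. There is no genuine obstacle in this plan: the two nontrivial ingredients, namely B\"{u}chi-Elgot and Theorem~\ref{theorem:FirstOrderRegular}, are available off the shelf, and the rest is a routine product construction and emptiness test. The only thing requiring some care is verifying that the intersection faithfully captures the joint requirement, which is immediate from the fact that $\associatedlang{\derivedrelation(\alphabet,\width,\vocabulary,\formula)} \subseteq \associatedlang{\derivedrelation(\alphabet,\width,\vocabulary)}$ already enforces structural validity.
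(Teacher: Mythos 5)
Your proposal is correct and follows essentially the same route as the paper's own proof: Büchi--Elgot for the MSO sentence, Theorem~\ref{theorem:FirstOrderRegular} for the first-order part (with the structural requirement already absorbed into that automaton's language), then a product construction and an emptiness test. No discrepancies to report.
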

\begin{proof} 
  By using B\"{u}chi-Elgot's Theorem, one can construct a finite automaton $\finiteautomaton_{1}$ over $\layervocabularyalphabet{\alphabet}{\width}{\vocabulary}$ that 
  accepts a string \anote{(R2.31)}$\structurestring\in \layervocabularyalphabet{\alphabet}{\width}{\vocabulary}^{+}$
  if and only if $\structurestring\models \MSOformula$. Now, from Theorem \ref{theorem:FirstOrderRegular},
        we can construct a finite automaton $\finiteautomaton_{2}$ 
  over $\layervocabularyalphabet{\alphabet}{\width}{\vocabulary}$ which accepts a string 
  $\odd_{0}\tensorproduct \odd_{1}\tensorproduct \ldots \tensorproduct \odd_{\nrelations} \in 
  \layervocabularyalphabet{\alphabet}{\width}{\vocabulary}^{+}$ if and only if 
  $\structuraltuple = \tuple{\odd_0,\odd_1,\ldots,\odd_{\nrelations}}$ is a $(\alphabet,\width,\vocabulary)$-structural tuple and 
  $\derivedstructure(\structuraltuple) \models \formula$. Let $\finiteautomaton_{\cap}$ be a finite automaton
  that accepts the language $\automatonlang{\finiteautomaton_1}\cap \automatonlang{\finiteautomaton_2}$. Then, we have that 
  $\automatonlang{\finiteautomaton_{\cap}}$ is non-empty if and only if there exists some string $\structurestring = \odd_{0}\tensorproduct \odd_{1}\tensorproduct \ldots \tensorproduct \odd_{\nrelations} \in \playerstructuredefiningset{\alphabet}{\width}{\vocabulary}$ 
  such that $\structurestring\models \MSOformula$ and $\derivedstructure(\structuraltuple) \models \formula$, where $\structuraltuple = \tuple{\odd_{0}, \odd_{1}, \ldots, \odd_{\nrelations}}$. 
  Since emptiness is decidable for finite automata, the theorem follows. 
\end{proof}

\section{Proof of Theorem~\ref{theorem:FirstOrderRegular}}\label{section:ProofMainTheorem}
We dedicate this section to the proof of Theorem~\ref{theorem:FirstOrderRegular}.
The proof follows a traditional strategy combined with new machinery for the implicit
manipulation of ODDs. More precisely, given an alphabet $\alphabet$, $\width \in \pN$, a relational vocabulary 
$\vocabulary = \tuple{\relationsymbol_{1},\ldots,\relationsymbol_{\nrelations}}$ and an 
$\fol{\vocabulary}$-formula $\formula(\variable_1,\ldots,\variable_{\nfreevariables})$ with 
free variables $\freevariables(\formula) \subseteq\, \freevariableset_{\nfreevariables} = \{\variable_1,\ldots,\variable_{\nfreevariables}\}$, 
we define $\derivedrelation(\alphabet,\width,\vocabulary,\formula,\freevariableset_{\nfreevariables})$ as the relation 
containing precisely the tuples of the form $(\odd_0,\odd_1,\ldots,\odd_{\nrelations},\vertex_1,\ldots,\vertex_{\nfreevariables})$ 
such that $\structuraltuple = (\odd_0,\odd_1,\ldots,\odd_{\nrelations})$ is $(\alphabet,\width,\vocabulary)$-structural 
and $\derivedstructure(\structuraltuple) \models \formula[\vertex_1,\ldots,\vertex_{\nfreevariables}]$. 
The Boolean connectives $\wedge,\vee$ and $\neg$ and the existential 
quantification $\exists$ are handled using closure properties from regular languages. 

The technically involved part of the proof however will be the construction of an {\em initial} automaton which accepts precisely those
strings $$\odd_0\tensorproduct \odd_1\tensorproduct \ldots \tensorproduct \odd_{\nrelations}\tensorproduct \vertex_1\ldots\vertex_{\nfreevariables}$$
such that $(\odd_0,\odd_1,\ldots,\odd_{\nrelations})$ is a $(\alphabet,\width,\vocabulary)$-structural tuple,
and $\vertex_1,\ldots,\vertex_{\nfreevariables}$ belong to the domain $\oddlang{\odd_0}$. 
In particular, we need to guaranteed that for each $i\in [\nrelations]$, 
the language $\oddlang{\odd_i}$ is contained in the tensored language $\oddlang{\odd_0}^{\tensorproduct \arity_i}$.

\subsection{Basic General Operations}
In this section, we introduce some basic low-level operations that will be used repeatedly in 
the proof of Theorem~\ref{theorem:FirstOrderRegular}. More precisely, we consider the following
operations: \emph{projection}, \emph{identification}, \emph{permutation of coordinates}, \emph{fold}, \emph{unfold}, \emph{direct sum}, \emph{union}, \emph{intersection} and \emph{complementation}.

Let $\alphabet$ be an alphabet, $\arityrelation \in \pN$ and $\relation \subseteq (\alphabet^{+})^{{\cartesianproduct} \arityrelation}$ be an $\arityrelation$-ary relation.

For each permutation $\permutation \colon \bset{\arityrelation} \rightarrow \bset{\arityrelation}$, we let $\perm(\relation,\permutation)$ be the relation obtained from $\relation$ by permuting the coordinates of each tuple in $\relation$ according to $\permutation$. In other words, $\perm(\relation,\permutation) \defeq \set*{\tuple*{\astring_{\permutation(1)},\ldots,\astring_{\permutation(\arityrelation)}} \setst \tuple*{\astring_{1}, \ldots, \astring_{\arityrelation}} \in \relation}$.

For each $i \in \bset{\arityrelation}$, the \emph{projection of the $i$-th coordinate of $\relation$} is defined as the $(\arityrelation-1)$-ary relation $\projection(\relation,i) \defeq \set*{\tuple*{\astring_1,\dots,\astring_{i-1},\astring_{i+1},\dots,\astring_{\arityrelation}} \setst \tuple*{\astring_{1}, \ldots, \astring_{\arityrelation}} \in \relation}\text{}$ 
obtained from $\relation$ by removing the $i$-th coordinate of each tuple in $\relation$.
More generally, for each $\indexset \subseteq \bset{\arityrelation}$, we let $\projection(\relation,\indexset)$ denote the relation obtained from $\relation$ by removing all the $i$-th coordinates of each tuple in $\relation$, where $i \in \indexset$.

For each $i, j \in \bset{\arityrelation}$, the \emph{identification of the $i$-th and $j$-th coordinates of $\relation$} is defined as the relation $\identify(\relation,i,j) \defeq \set{\tuple{\astring_{1},\ldots,\astring_{\arityrelation}} \in \relation \setst \astring_{i}=\astring_{j}}\text{}$ obtained from $\relation$ by removing each tuple $\tuple{\astring_{1},\ldots,\astring_{\arityrelation}} \in \relation$ such that $\astring_{i} \neq \astring_{j}$.
More generally, for each $\indexset \subseteq \bset{\arityrelation} \cartesianproduct \bset{\arityrelation}$, we let $\identify(\relation,\indexset) \defeq \set{\tuple{\astring_{1},\ldots,\astring_{\arityrelation}} \in \relation \setst \astring_{i}=\astring_{j}, \tuple{i,j} \in \indexset}$.

For each $i, j \in \bset{\arityrelation}$, with $i \leq j$, we let $$\fold(\relation,i,j) \defeq \set*{\tuple*{\astring_{1}, \ldots, \astring_{i-1}, \astring_{i} \tensorproduct \cdots \tensorproduct \astring_{j}, \astring_{j+1}, \ldots, \astring_{\arityrelation}} \setst \tuple*{\astring_{1}, \ldots, \astring_{\arityrelation}} \in \relation}\text{.}$$
On the other hand, if $\arelation = \fold(\brelation,i,j)$ for some relation $\brelation$ and some $i, j \in \bset{\arityrelation}$, with $i \leq j$, then we let $\unfold(\arelation,i) = \brelation$, \ie the inverse operation of $\fold$.

Let $\aalphabet$ and $\balphabet$ be two alphabets, $\arityrelation_{1}, \arityrelation_{2} \in \pN$, $\relation_{1} \subseteq (\aalphabet^{+})^{\cartesianproduct \arityrelation_{1}}$ be an $\arityrelation_{1}$-ary relation and $\relation_{2} \subseteq (\balphabet^{+})^{{\cartesianproduct} \arityrelation_{2}}$ be an $\arityrelation_{2}$-ary relation. 
If $\relation_{1}$ and $\relation_{2}$ are non-empty, then we define the \emph{direct sum} of $\relation_{1}$ with $\relation_{2}$ as the $(\arityrelation_{1}+\arityrelation_{2})$-ary relation 
\begin{equation*}
  \relation_{1}\directsum \relation_{2} \defeq \set*{\tuple*{\astring_{1}, \ldots, \astring_{\arityrelation_{1}}, \bstring_{1}, \ldots, \bstring_{\arityrelation_{2}}} \setst \tuple*{\astring_{1}, \ldots, \astring_{\arityrelation_{1}}} \in \relation_{1}, \tuple*{\bstring_{1}, \ldots, \bstring_{\arityrelation_{2}}} \in \relation_{2}}\text{.}
\end{equation*}
Otherwise, we let $\relation_{1} \directsum \emptyset \defeq \relation_{1}$ and $\emptyset \directsum \relation_{2} \defeq \relation_{2}$.

\begin{proposition}\label{proposition:BasicProperties1}
Let $\aalphabet$ and $\balphabet$ be two alphabets, $\arityrelation_{1}, \arityrelation_{2} \in \pN$, $\relation_{1} \subseteq (\aalphabet^{+})^{{\cartesianproduct} \arityrelation_{1}}$ be a regular $\arityrelation_{1}$-ary relation and $\relation_{2} \subseteq (\balphabet^{+})^{{\cartesianproduct} \arityrelation_{2}}$ be a regular $\arityrelation_{2}$-ary relation. 
  The following closure properties are held: 
  \begin{enumerate}
    \item\label{item:perm} for each permutation $\permutation \colon \bset{\arityrelation_{1}} \rightarrow \bset{\arityrelation_{1}}$, $\perm(\relation_{1},\permutation)$ is regular;
    \item\label{item:proj} for each $\indexset \subseteq \bset{\arityrelation_{1}}$, $\projection(\relation_{1},\indexset)$ is regular;
    \item\label{item:ident} for each $\indexset \subseteq \bset{\arityrelation_{1}} \times \bset{\arityrelation_{1}}$, $\identify(\relation_{1},\indexset)$ is regular;
    \item\label{item:fold} for each $i, j \in \bset{\arityrelation_{1}}$, with $i \leq j$, $\fold(\relation_{1},i,j)$ is regular;
    \item\label{item:unfold} if $\arelation_{1} = \fold(\brelation_{1},i,j)$ for some relation $\brelation_{1}$ and some $i, j \in \bset{\arityrelation_{1}}$, with $i \leq j$, then $\unfold(\relation_{1},i) = \brelation_{1}$ is regular;
    \item\label{item:directsum} $\relation_{1}\directsum \relation_{2}$ is regular.
  \end{enumerate}
\end{proposition}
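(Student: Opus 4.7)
The plan is to prove each item by giving a finite automaton construction on the associated languages. For a regular relation $\relation \subseteq (\alphabet^{+})^{\cartesianproduct \arityrelation}$ we are given, by definition of regularity, a finite automaton $\finiteautomaton$ over $\alphabet^{\tensorproduct \arityrelation}$ with $\automatonlang{\finiteautomaton} = \associatedlang{\relation}$. Each of the six operations listed corresponds to a standard transformation of $\finiteautomaton$, once the tensor-product padding convention is handled carefully.

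For item \ref{item:perm} (permutation), the resulting automaton is obtained from $\finiteautomaton$ by relabelling each transition $(\state, \tuple{\asymbol_{1}, \ldots, \asymbol_{\arityrelation_{1}}}, \bstate)$ to $(\state, \tuple{\asymbol_{\permutation(1)}, \ldots, \asymbol_{\permutation(\arityrelation_{1})}}, \bstate)$; since the tensor-product encoding treats coordinates symmetrically, this exactly realises $\perm(\relation_{1},\permutation)$. Items \ref{item:fold} and \ref{item:unfold} are handled similarly: folding merely reinterprets adjacent coordinates $i, \ldots, j$ as a single coordinate over $\alphabet^{\tensorproduct (j-i+1)}$, and unfolding does the reverse; in both cases one simply rewrites the transition alphabet with an appropriate bijection, treating any position where some but not all of the coordinates $i, \ldots, j$ are $\paddingsymbol$ as reading a symbol in $(\alphabet \dcup \set{\paddingsymbol})^{\tensorproduct (j-i+1)}$ rather than in $\alphabet^{\tensorproduct (j-i+1)} \dcup \set{\paddingsymbol^{\cartesianproduct (j-i+1)}}$. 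For item \ref{item:ident}, we intersect $\automatonlang{\finiteautomaton}$ with the language of strings over $\alphabet^{\tensorproduct \arityrelation_{1}}$ that agree on coordinates $i$ and $j$ at every position for each $\tuple{i,j} \in \indexset$; this language is trivially regular, so $\identify(\relation_{1},\indexset)$ is regular by closure of regular languages under intersection.

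Item \ref{item:proj} is the most subtle. To project out coordinates in $\indexset$, the natural first step is to relabel each transition $(\state, \tuple{\padasymbol_{1}, \ldots, \padasymbol_{\arityrelation_{1}}}, \bstate)$ of $\finiteautomaton$ by dropping the coordinates indexed by $\indexset$, obtaining an automaton over $\alphabet^{\tensorproduct (\arityrelation_{1}-|\indexset|)}$. The catch is that the remaining string may now end in an entire block of $\paddingsymbol^{\cartesianproduct (\arityrelation_{1}-|\indexset|)}$ symbols (precisely when the only surviving coordinate attaining the original maximum length was in $\indexset$), which does not belong to the image under $\associatedlang{\cdot}$ of any relation. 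This is corrected by composing with a further standard closure: strip any trailing run of $\paddingsymbol^{\cartesianproduct (\arityrelation_{1}-|\indexset|)}$ from accepted strings, which is implemented by a straightforward local modification of the final states.

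Item \ref{item:directsum} is handled by a product construction that additionally tracks which sides have ``finished.'' Given automata $\finiteautomaton_{1}$ and $\finiteautomaton_{2}$ over $\aalphabet^{\tensorproduct \arityrelation_{1}}$ and $\balphabet^{\tensorproduct \arityrelation_{2}}$ accepting $\associatedlang{\relation_{1}}$ and $\associatedlang{\relation_{2}}$ respectively, construct an automaton over $(\aalphabet \dcup \set{\paddingsymbol})^{\tensorproduct \arityrelation_{1}} \tensorproduct (\balphabet \dcup \set{\paddingsymbol})^{\tensorproduct \arityrelation_{2}}$ whose states are tuples $\tuple{\state_{1}, \state_{2}, \mathit{done}_{1}, \mathit{done}_{2}}$ with $\mathit{done}_{i} \in \set{\false,\true}$. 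A transition reads $\tuple{\asymbol,\bsymbol}$; if $\mathit{done}_{1}=\false$, the first component must make a transition on $\asymbol$ in $\finiteautomaton_{1}$, and $\mathit{done}_{1}$ may be set to $\true$ when $\state_{1}$ becomes a final state of $\finiteautomaton_{1}$; once $\mathit{done}_{1}=\true$, the only legal symbol in the first component is $\paddingsymbol^{\cartesianproduct \arityrelation_{1}}$ and $\state_{1}$ is frozen. The second component is handled symmetrically. A state is accepting when both $\mathit{done}_{i}$ flags are $\true$. This precisely produces $\associatedlang{\relation_{1} \directsum \relation_{2}}$, since the tensor product pads shorter tuples to the common maximum length. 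I expect this last construction, whose correctness must be verified against the padding convention of the tensor product, to be the main source of bookkeeping in the proof.
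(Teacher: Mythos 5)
Your proposal is correct and follows essentially the same route as the paper: each operation is realised by an explicit automaton construction or letter-to-letter recoding on the associated languages, and your handling of the padding convention in items (2), (4) and (5) is in fact more careful than the paper's (the paper's projection construction silently ignores the trailing all-padding blocks that you correctly strip). The only slip is in item (6): as described, once both $\mathit{done}$ flags are true your product automaton keeps looping in an accepting state on the symbol $\paddingsymbol^{\cartesianproduct(\arityrelation_{1}+\arityrelation_{2})}$, so it also accepts strings with trailing globally-padded symbols, which do not lie in $\associatedlang{\relation_{1}\directsum\relation_{2}}$; the paper avoids this by simply forbidding the all-padding symbol $\paddingsymbol^{\cartesianproduct(\arityrelation_{1}+\arityrelation_{2})}$ in every transition of the product, and the same one-line restriction repairs your construction.
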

\begin{proof}
  Since $\relation_{1}$ is a regular relation, there exists a finite automaton $\finiteautomaton_{1}$ over the alphabet $\aalphabet^{\tensorproduct \arityrelation_{1}}$ such that $\automatonlang{\finiteautomaton_{1}} = \associatedlang{\relation_{1}}$.
  Analogously, there exists a finite automaton $\finiteautomaton_{2}$ over the alphabet $\balphabet^{\tensorproduct \arityrelation_{2}}$ such that $\automatonlang{\finiteautomaton_{2}} = \associatedlang{\brelation}$. 

  \ref{item:perm}. Let $\finiteautomaton_{1}^{\prime}$ be the finite automaton with state set $\automatonstates(\finiteautomaton_{1}^{\prime}) = \automatonstates(\finiteautomaton_{1})$, transition set $\automatontransitions(\finiteautomaton_{1}^{\prime}) = \set*{\tuple{\bstate, \tuple{\padasymbol_{\permutation(1)}, \ldots, \padasymbol_{\permutation(\arityrelation_{1})}}, \astate} \setst \tuple{\bstate, \tuple{\padasymbol_{1}, \ldots, \padasymbol_{\arityrelation_{1}}}, \astate} \in \automatontransitions(\finiteautomaton_{1})}\text{,}$ initial state set $\automatoninitialstates(\finiteautomaton_{1}^{\prime}) = \automatoninitialstates(\finiteautomaton_{1})$, and final state set $\automatonfinalstates(\finiteautomaton_{1}^{\prime}) = \automatonfinalstates(\finiteautomaton_{1})$.
  One can verify that $\automatonlang{\finiteautomaton_{1}^{\prime}} = \associatedlang{\perm(\relation_{1},\permutation)}$.
  Therefore, $\perm(\relation_{1},\permutation)$ is regular.
   
  \ref{item:proj}. For each symbol $\tuple{\padasymbol_{1}, \ldots, \padasymbol_{\arityrelation_{1}}} \in (\alphabet_{1} \dcup \set{\paddingsymbol})^{\tensorproduct \arityrelation_{1}}$, we let $\projection(\padasymbol_{1}, \ldots, \padasymbol_{\arityrelation_{1}}, \indexset)$ denote the symbol in $(\alphabet_{1} \dcup \set{\paddingsymbol})^{\tensorproduct (\arityrelation_{1}-\abs{\indexset})}$ obtained from $\tuple{\padasymbol_{1}, \ldots, \padasymbol_{\arityrelation_{1}}}$ by removing each $i$-th coordinate, where $i \in \indexset$.
  Thus, let $\finiteautomaton_{1}^{\prime}$ be the finite automaton with state set $\automatonstates(\finiteautomaton_{1}^{\prime}) = \automatonstates(\finiteautomaton_{1})$, transition set $$\automatontransitions(\finiteautomaton_{1}^{\prime}) = \set*{\tuple{\bstate, \projection(\padasymbol_{1}, \ldots, \padasymbol_{\arityrelation_{1}}, \indexset), \astate} \in \automatontransitions(\finiteautomaton_{1}) \setst \tuple{\bstate, \tuple{\padasymbol_{1}, \ldots, \padasymbol_{\arityrelation_{1}}}, \astate} \in \automatontransitions(\finiteautomaton_{1})}\text{,}$$ initial state set $\automatoninitialstates(\finiteautomaton_{1}^{\prime}) = \automatoninitialstates(\finiteautomaton_{1})$, and final state set $\automatonfinalstates(\finiteautomaton_{1}^{\prime}) = \automatonfinalstates(\finiteautomaton_{1})$.
  One can verify that $\automatonlang{\finiteautomaton_{1}^{\prime}} = \associatedlang{\projection(\relation_{1},\indexset)}$.
  Therefore, $\projection(\relation_{1},\indexset)$ is regular.

  \ref{item:ident}. Let $\finiteautomaton_{1}^{\prime}$ be the finite automaton with state set $\automatonstates(\finiteautomaton_{1}^{\prime}) = \automatonstates(\finiteautomaton_{1})$, transition set $\automatontransitions(\finiteautomaton_{1}^{\prime}) = \automatontransitions(\finiteautomaton_{1}) \setminus \set*{\tuple{\bstate, \tuple{\padasymbol_{1}, \ldots, \padasymbol_{\arityrelation_{1}}}, \astate} \in \automatontransitions(\finiteautomaton_{1}) \setst \padasymbol_{i} \neq \padasymbol_{j}, \tuple{i,j} \in \indexset}\text{,}$ initial state set $\automatoninitialstates(\finiteautomaton_{1}^{\prime}) = \automatoninitialstates(\finiteautomaton_{1})$, and final state set $\automatonfinalstates(\finiteautomaton_{1}^{\prime}) = \automatonfinalstates(\finiteautomaton_{1})$.
  One can verify that $\automatonlang{\finiteautomaton_{1}^{\prime}} = \associatedlang{\identify(\relation_{1},\indexset)}$.
  Therefore, $\identify(\relation_{1},\indexset)$ is regular.

\ref{item:fold}. Let $\alpha$ be the map that sends each tuple of symbols $(\sigma_1,...,\sigma_{\arity_1})\in \alphabet_1^{\arity_1}$ to the tuple 
	$(\sigma_1,...,\sigma_{i-1},(\sigma_i,...,\sigma_j ),\sigma_{j+1},...,\sigma_{\arity_1})$. For a language
	$L\subseteq (\alphabet_1^{\otimes \arity_1})^{+}$, let $\alpha(L)$ be the homomorphic image of $L$ under $\alpha$. 
	Since regular languages are closed under homomorphism, and $\fold(\relation_{1},i,j) = \associatedrelation{\alpha(\associatedlang{\relation_1})}$ we have 
	that $\fold(\relation_1,i,j)$ is regular. 

	\ref{item:unfold}. Let $\beta$ be the map that sends each tuple $(\sigma_1,...,\sigma_{i-1},(\sigma_i,...,\sigma_j ),\sigma_{j+1},...,\sigma_{\arity_1})$ 
	to the tuple $(\sigma_1,...,\sigma_{\arity_1})\in \alphabet_1^{\arity_1}$. For a language	
	$L\subseteq \alphabet_1^{\otimes_{i-1}}\times (\alphabet_1^{\otimes j-i+1})\times\alphabet_1^{\otimes \arity_1-j+1}$ we let $\beta(L)$ be 
	the homomorphic image of $L$ under $\beta$. Since regular languages are closed under homomorphism, and 
	$\unfold(\relation_1,i) = \associatedrelation{\beta(\associatedlang{\relation_1})}$, we have that 
	$\unfold(\relation_1,i)$ is regular.  

  \ref{item:directsum}. For each $i \in \bset{2}$, let $\finiteautomaton_{i}^{\prime}$ be the finite automaton with state set $\automatonstates(\finiteautomaton_{i}^{\prime}) = \automatonstates(\finiteautomaton_{i})$, transition set $\automatontransitions(\finiteautomaton_{i}^{\prime}) = \automatontransitions(\finiteautomaton_{i}) \cup \set{\tuple{\state, \paddingtuple{\arityrelation_{i}}, \state} \setst \state \in \automatonfinalstates(\finiteautomaton_{i})}\text{,}$ initial state set $\automatoninitialstates(\finiteautomaton_{i}^{\prime}) = \automatoninitialstates(\finiteautomaton_{i})$ and final state set $\automatonfinalstates(\finiteautomaton_{i}^{\prime}) = \automatonfinalstates(\finiteautomaton_{i})$.
  Now, consider the finite automaton $\finiteautomaton^{\prime}$ with state set $\automatonstates(\finiteautomaton^{\prime}) = \automatonstates(\finiteautomaton_{1}^{\prime}) \cartesianproduct \automatonstates(\finiteautomaton_{2}^{\prime})$, transition set
  \begin{equation*}
    \begin{multlined}[0.9\displaywidth]
      \automatontransitions(\finiteautomaton^{\prime}) = \Big\{\tuple*{\tuple{\bstate_{1},\bstate_{2}}, \tuple{\padasymbol_{1}, \ldots, \padasymbol_{\arityrelation_{1}}, \padbsymbol_{1}, \ldots, \padbsymbol_{\arityrelation_{2}}}, \tuple{\astate_{1},\astate_{2}}} \setst \\[1.0ex] \hfil \tuple*{\bstate_{1}, \tuple{\padasymbol_{1}, \ldots, \padasymbol_{\arityrelation_{1}}}, \astate_{1}} \in \automatontransitions(\finiteautomaton_{1}^{\prime}), \tuple*{\bstate_{2}, \tuple{\padbsymbol_{1}, \ldots, \padbsymbol_{\arityrelation_{2}}}, \astate_{2}} \in \automatontransitions(\finiteautomaton_{2}^{\prime}) \\ \hfil \tuple{\padasymbol_{1}, \ldots, \padasymbol_{\arityrelation_{1}}, \padbsymbol_{1}, \ldots, \padbsymbol_{\arityrelation_{2}}} \neq \paddingtuple{(\arityrelation_{1}+ \arityrelation_{2})}\Big\}\text{,}
    \end{multlined}
  \end{equation*}
  initial state set $\automatoninitialstates(\finiteautomaton^{\prime}) = \automatoninitialstates(\finiteautomaton_{1}^{\prime}) \cartesianproduct \automatoninitialstates(\finiteautomaton_{2}^{\prime})$ and final state set $\automatonfinalstates(\finiteautomaton^{\prime}) = \automatonfinalstates(\finiteautomaton_{1}^{\prime}) \cartesianproduct \automatonfinalstates(\finiteautomaton_{2}^{\prime})$.
  One can verify that $\automatonlang{\finiteautomaton^{\prime}} = \associatedlang{\relation_{1}\directsum \brelation}$.
  Therefore, $\relation_{1}\directsum \brelation$ is regular.
\end{proof}

Besides the operations described above, it is worth noting that, if $\relation_{1}$ and $\relation_{2}$ 
have the same arity, \ie $\arityrelation_1 = \arityrelation_2$, then the \emph{union} $\relation_{1} \cup \relation_{2}$ 
and the \emph{intersection} $\relation_{1} \cap \relation_{2}$ of $\relation_{1}$ and $\relation_{2}$ are regular relations. 
Moreover, if $\relation \subseteq (\alphabet^{+})^{{\cartesianproduct} \arityrelation}$ is a regular $\arityrelation$-ary relation, then the complement $\neg\relation \defeq (\alphabet^{+})^{{\cartesianproduct} \arityrelation} \setminus \relation$ of $\relation$ is also a regular $\arityrelation$-ary relation.

\subsection{Core Relations}

In this subsection we introduce some non-standard relations and prove that these relations are regular.
Intuitively, these relations will be used to implicitly manipulate tuples of ODDs, and in particular 
to construct a finite automaton accepting a string 
$\odd_0\tensorproduct \odd_1\tensorproduct \ldots \tensorproduct \odd_{\nrelations}\tensorproduct \vertex_1\ldots\vertex_{\nfreevariables}$ 
if and only if the tuple $(\odd_0,\odd_1,\ldots,\odd_{\nrelations})$ is $(\alphabet,\width,\vocabulary)$-structural 
and $\vertex_1,\ldots,\vertex_{\nfreevariables}$ belong to the domain $\oddlang{\odd_0}$.

\newcommand{\rr}[2]{\ensuremath{\mathcal{R}_{\in}(#1,#2)}}
\newcommand{\arr}[3]{\ensuremath{\widetilde{\mathcal{R}}(#1,#2,#3)}}
\newcommand{\crr}[3]{\ensuremath{\widetilde{\mathcal{R}}_{\in}(#1,#2,#3)}}
\newcommand{\nrr}[3]{\ensuremath{\widetilde{\mathcal{R}}_{\not\in}(#1,#2,#3)}}
\newcommand{\drr}[3]{\ensuremath{\mathcal{R}(#1,#2,#3)}}
\newcommand{\srr}[3]{\ensuremath{\mathcal{R}_{\subseteq}(#1,#2,#3)}}

\begin{proposition}\label{proposition:regularity_odd_alphabet}
  For each alphabet $\alphabet$ and each $\width \in \pN$, the language $\podddefiningset{\alphabet}{\width}$ is regular.
\end{proposition}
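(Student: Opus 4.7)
The plan is to exhibit a finite automaton $\finiteautomaton$ over the alphabet $\layeralphabet{\alphabet}{\width}$ whose language is exactly $\podddefiningset{\alphabet}{\width}$. Since the defining conditions of a $(\alphabet,\width)$-ODD are all local between consecutive layers, a small automaton suffices, whose state at any point records just enough information to check the next layer for consistency.

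Concretely, I will take the state set to be $\automatonstates \defeq \set{q_{\text{start}}} \cup \set{q_{S} \setst S \subseteq \dbset{\width}} \cup \set{q_{\text{accept}}}$, with $\automatoninitialstates = \set{q_{\text{start}}}$ and $\automatonfinalstates = \set{q_{\text{accept}}}$. Intuitively, a state $q_{S}$ means that the most recently read layer $\alayer$ has $\layerrightfrontier(\alayer) = S$, no final layer has yet been seen, and the layers read so far satisfy the ODD conditions. The transitions are then defined as follows. From $q_{\text{start}}$, on reading a layer $\alayer$ with $\layerinitialflag(\alayer) = \true$ and $\layerfinalflag(\alayer) = \false$, the automaton moves to $q_{\layerrightfrontier(\alayer)}$; if additionally $\layerfinalflag(\alayer) = \true$, it moves directly to $q_{\text{accept}}$ (handling the length-$1$ case). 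From a state $q_{S}$, on reading a layer $\alayer$ with $\layerleftfrontier(\alayer) = S$ and $\layerinitialflag(\alayer) = \false$, the automaton moves to $q_{\layerrightfrontier(\alayer)}$ if $\layerfinalflag(\alayer) = \false$, and to $q_{\text{accept}}$ if $\layerfinalflag(\alayer) = \true$. All other transitions are undefined (equivalently, they lead to a dead trap state which I will omit). The state $q_{\text{accept}}$ has no outgoing transitions.

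Next, I will verify by straightforward induction on the length $\oddlength$ that a string $\aodd = \alayer_1 \cdots \alayer_\oddlength \in \layeralphabet{\alphabet}{\width}^{\oddlength}$ is accepted by $\finiteautomaton$ if and only if $\aodd \in \odddefiningset{\alphabet}{\width}{\oddlength}$. The forward direction follows because the only way to reach $q_{\text{accept}}$ is through a sequence of transitions enforcing: (i) the initial-flag condition on the first layer and its absence on subsequent layers; (ii) the final-flag condition on the last layer and its absence on earlier layers; and (iii) the equality $\layerleftfrontier(\alayer_{i+1}) = \layerrightfrontier(\alayer_i)$ encoded by the state $q_{\layerrightfrontier(\alayer_i)}$. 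Conversely, if $\aodd$ satisfies the ODD conditions, the sequence $q_{\text{start}}, q_{\layerrightfrontier(\alayer_1)}, \ldots, q_{\layerrightfrontier(\alayer_{\oddlength-1})}, q_{\text{accept}}$ is an accepting run. Hence $\automatonlang{\finiteautomaton} = \podddefiningset{\alphabet}{\width}$, so the language is regular.

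There is no real obstacle here: the number of states is $2 + 2^{\width}$, which is finite once $\width$ is fixed, and all three clauses in the definition of an ODD are expressible by looking only at consecutive pairs of layers together with a binary distinction between the first, interior, and last positions. The only thing to be slightly careful about is the degenerate case $\oddlength = 1$, where a single layer must carry both $\layerinitialflag = \true$ and $\layerfinalflag = \true$; this is handled by the direct transition from $q_{\text{start}}$ to $q_{\text{accept}}$ introduced above.
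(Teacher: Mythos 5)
Your proof is correct and follows essentially the same approach as the paper: both construct a finite automaton over $\layeralphabet{\alphabet}{\width}$ that checks the three local ODD conditions between consecutive layers. The only (cosmetic) difference is that the paper's states remember the entire previously read layer (with final states being those whose layer has $\layerfinalflag = \true$), whereas yours remember only its right frontier together with a dedicated accepting sink, which yields a slightly smaller automaton but the same argument.
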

\begin{proof}
  Consider the finite automaton $\finiteautomaton$ over the alphabet $\layeralphabet{\alphabet}{\width}$ with state set $\automatonstates(\finiteautomaton) = \set*{\state_{\automatoninitialstates}} \cup \set{\state_{\layer} \setst \layer \in \layeralphabet{\alphabet}{\width}}$, transition set
  \begin{equation*}
      \begin{aligned}[t] 
        \automatontransitions(\finiteautomaton) = & \set*{\tuple{\state_{\automatoninitialstates}, \layer, \state_{\layer}} \setst \layer \in \layeralphabet{\alphabet}{\width}, \layerinitialflag(\layer) = \true} \\ 
        \cup & \set{\tuple{\state_{\blayer}, \alayer, \state_{\alayer}} \setst \alayer, \blayer \in \layeralphabet{\alphabet}{\width},\, \layerleftfrontier(\alayer) = \layerrightfrontier(\blayer), \layerfinalflag(\blayer) = \false, \layerinitialflag(\alayer) = \false}\text{,}
      \end{aligned}
  \end{equation*}
  initial state set $\automatoninitialstates(\finiteautomaton) = \set{\state_{\automatoninitialstates}}$ and final state set $\automatonfinalstates(\finiteautomaton) = \set*{\rchanged{\state_{\layer} \in \automatonstates(\finiteautomaton)} \setst \layerfinalflag(\layer) = \true}$.
  One can verify that $\automatonlang{\finiteautomaton} = \podddefiningset{\alphabet}{\width}$.
  Therefore, $\podddefiningset{\alphabet}{\width}$ is regular. 
\end{proof}

Let $\alphabet$ be an alphabet and $\width \in \pN$.
We let $\rr{\alphabet}{\width}$ be the relation defined as follows:
$$\rr{\alphabet}{\width} \defeq \big\{\tuple{\odd, \astring} \setst \odd \in \podddefiningset{\alphabet}{\width}, \astring \in \oddlang{\odd}\big\}\text{.}$$

\begin{proposition}\label{proposition:fundamental_relation}
  For each alphabet $\alphabet$ and each $\width \in \pN$, the relation $\rr{\alphabet}{\width}$ is regular.
\end{proposition}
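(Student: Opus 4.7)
The plan is to build a finite automaton $\finiteautomaton$ over the alphabet $(\layeralphabet{\alphabet}{\width} \dcup \set{\paddingsymbol}) \tensorproduct (\alphabet \dcup \set{\paddingsymbol})$ whose language is exactly $\associatedlang{\rr{\alphabet}{\width}}$. Such an automaton has to simultaneously verify two properties of an input string $\odd \tensorproduct \astring$: first, that the first coordinate spells out a valid ODD in $\podddefiningset{\alphabet}{\width}$ (much as in Proposition~\ref{proposition:regularity_odd_alphabet}); and second, that the second coordinate, padded to the length of the ODD, admits an accepting sequence in $\odd$. Both tasks are to be carried out on-the-fly, position by position, while reading the tensor product.

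Following the strategy of Proposition~\ref{proposition:regularity_odd_alphabet}, the automaton's state records (i) the most recently read layer $\layer \in \layeralphabet{\alphabet}{\width}$, so that left/right frontier consistency and initial/final flag correctness with the next layer can be checked, and (ii) the current ODD state $q \in \layerrightfrontier(\layer)$ reached by the simulated run on the symbols of the second coordinate read so far. Formally, the state space is $\set{q_I} \cup \set{(\layer, q) \setst \layer \in \layeralphabet{\alphabet}{\width},\, q \in \layerrightfrontier(\layer)}$, plus an extra bit tracking whether a $\paddingsymbol$ has already appeared in the second coordinate. Transitions are the natural ones: from $q_I$ reading $(\layer, \padasymbol)$ go to $(\layer, q)$ provided $\layerinitialflag(\layer) = \true$ and some $p \in \layerinitialstates(\layer)$ satisfies $\tuple{p, \padasymbol, q} \in \layertransitions(\layer)$; from $(\layer, p)$ reading $(\layer', \padasymbol)$ go to $(\layer', q)$ whenever $\layerleftfrontier(\layer') = \layerrightfrontier(\layer)$, $\layerfinalflag(\layer) = \false$, $\layerinitialflag(\layer') = \false$, and $\tuple{p, \padasymbol, q} \in \layertransitions(\layer')$. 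The final states are those $(\layer, q)$ with $\layerfinalflag(\layer) = \true$ and $q \in \layerfinalstates(\layer)$.

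The one genuine subtlety, and the step I expect to be the main obstacle, is the interaction between the relation's associated language and the padding symbol in the second coordinate. A $\paddingsymbol$ in the second coordinate encodes the fact that $\astring$ has ended while $\odd$ has not, so by the definition of tensor product these $\paddingsymbol$ symbols can appear only as a suffix; an alphabet symbol occurring after a $\paddingsymbol$ would correspond to no valid pair $\tuple{\odd, \astring}$. The extra flag bit in the state is used precisely to enforce this: once it is set, only $\paddingsymbol$ is allowed in the second coordinate. Also, the first coordinate is never $\paddingsymbol$ on valid tensor products (since $|\astring| \leq |\odd|$ for $\astring \in \oddlang{\odd}$), so the automaton simply rejects such strings.

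Once the construction is in place, a routine induction on the length of the string, essentially mirroring the inductive definition of ``valid sequence for $\astring$ in $\odd$,'' shows that $\finiteautomaton$ reaches a final state on $\odd \tensorproduct \astring$ iff $\odd \in \podddefiningset{\alphabet}{\width}$ and $\astring \in \oddlang{\odd}$. Hence $\automatonlang{\finiteautomaton} = \associatedlang{\rr{\alphabet}{\width}}$, and $\rr{\alphabet}{\width}$ is regular.
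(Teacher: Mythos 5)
Your construction is correct and essentially identical to the paper's: a single automaton over $\layeralphabet{\alphabet}{\width}\tensorproduct(\alphabet\dcup\set{\paddingsymbol})$ whose states couple the layer‑consistency check of Proposition~\ref{proposition:regularity_odd_alphabet} with an on‑the‑fly simulation of the ODD's run on the second coordinate (the paper stores the full current transition $[\layerleftstate,\padasymbol,\layerrightstate]$ where you store just the reached right state, an immaterial difference). Your extra flag forcing the $\paddingsymbol$'s of the second coordinate to form a suffix is a sensible refinement that the paper's construction leaves implicit.
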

\begin{proof}
  Consider the finite automaton $\finiteautomaton$ over the alphabet $\layeralphabet{\alphabet}{\width} \tensorproduct (\alphabet \dcup \set{\paddingsymbol})$ defined as follows.
  \begin{itemize}\setlength\itemsep{1.0ex}
    \item $\automatonstates(\finiteautomaton) = \set*{\state_{\automatoninitialstates}} \cup \set{\state_{\layer,{[\layerleftstate,\padasymbol,\layerrightstate]}} \setst \layer \in \layeralphabet{\alphabet}{\width}, \tuple{\layerleftstate,\padasymbol,\layerrightstate} \in \layertransitions(\layer)}$;
    
    \item 
    $\begin{aligned}[t] 
      \automatontransitions(\finiteautomaton) = & \set*{\tuple{{\state_{\automatoninitialstates}, \tuple{\layer,\padasymbol}, \state_{\layer,{[\layerleftstate,\padasymbol,\layerrightstate]}}}} \setst \state_{\layer,{[\layerleftstate,\padasymbol,\layerrightstate]}} \in \automatonstates(\finiteautomaton) \setminus \set{\state_{\automatoninitialstates}}, \layerleftstate \in \layerinitialstates(\layer)} \\ 
      \cup & \begin{multlined}[t][0.45\displaywidth] \Big\{\tuple{\state_{\blayer,{[\layerleftstate^{\prime},\padbsymbol,\layerrightstate^{\prime}]}}, \tuple{\alayer,\padasymbol}, \state_{\alayer, {[\layerleftstate,\padasymbol,\layerrightstate]}}} \setst \state_{\blayer,{[\layerleftstate^{\prime},\padbsymbol,\layerrightstate^{\prime}]}},\, \state_{\alayer,{[\layerleftstate,\padasymbol,\layerrightstate]}} \in \automatonstates(\finiteautomaton) \setminus \set{\state_{\automatoninitialstates}}, \\ \layerleftfrontier(\alayer) = \layerrightfrontier(\blayer), \layerfinalflag(\blayer) = \false, \layerinitialflag(\alayer) = \false, \layerleftstate = \layerrightstate^{\prime}\Big\}\text{;}\end{multlined}\end{aligned}$
    
    \item $\automatoninitialstates(\finiteautomaton) = \set{\state_{\automatoninitialstates}}$; 
    $\automatonfinalstates(\finiteautomaton) = \set*{\state_{\layer,{[\layerleftstate,\padasymbol,\layerrightstate]}} \in \automatonstates(\finiteautomaton) \setminus \set{\state_{\automatoninitialstates}} \setst  \layerrightstate \in \layerfinalstates(\layer)}$.
  \end{itemize}
  One can verify that $\automatonlang{\finiteautomaton} = \set*{\odd \tensorproduct \astring \setst \odd \in \podddefiningset{\alphabet}{\width}, \astring \in \oddlang{\odd}}$.
  In other words, $\associatedrelation{\automatonlang{\finiteautomaton}} = \rr{\alphabet}{\width}$.
  Therefore, $\rr{\alphabet}{\width}$ is a regular relation.
\end{proof}

\begin{lemma}\label{lemma:BasicAutomaticStructures}
 For each alphabet $\alphabet$ and each $\width, \anumber \in \pN$, the following relations are regular: 
 \begin{enumerate}
    \item \label{item:crr} $\begin{multlined}[t][0.9\displaywidth]\textstyle
    \crr{\alphabet}{\width}{\anumber} \defeq \big\{\tuple{\odd, \astring_{1},\ldots,\astring_{\anumber}} \setst \odd \in \podddefiningset{\alphabet}{\width},\, \astring_{1},\ldots,\astring_{\anumber} \in \oddlang{\odd}\big\}\text{;}\end{multlined}$
    
    \item \label{item:arr} $\begin{multlined}[t][0.9\displaywidth]\textstyle \arr{\alphabet}{\width}{\anumber} \defeq \set{\tuple{\odd, \astring_{1},\ldots,\astring_{\anumber}} \setst \odd \in \odddefiningset{\alphabet}{\width}{\oddlength},\, \astring_{1},\ldots,\astring_{\anumber} \in \alphabet^{\leq \oddlength}, \oddlength \in \pN}\text{;}\end{multlined}$

    \item \label{item:nrr} $\begin{multlined}[t][0.92\textwidth]\textstyle \nrr{\alphabet}{\width}{\anumber} \defeq \big\{\tuple*{\odd, \astring_{1},\ldots,\astring_{\anumber}} \setst \odd \in \odddefiningset{\alphabet}{\width}{\oddlength},\, \astring_{1},\ldots,\astring_{\anumber} \in \alphabet^{\leq \oddlength}, \\ \hfill \textstyle \astring_{i} \not\in \oddlang{\odd} \text{ for some } i \in \bset{\anumber},\, \oddlength \in \pN\big\}\text{.}\end{multlined}$
 \end{enumerate}
\end{lemma}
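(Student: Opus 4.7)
My plan is to build the three relations compositionally from earlier results: Proposition~\ref{proposition:regularity_odd_alphabet} (regularity of $\podddefiningset{\alphabet}{\width}$), Proposition~\ref{proposition:fundamental_relation} (regularity of $\rr{\alphabet}{\width}$), and the closure operations of Proposition~\ref{proposition:BasicProperties1}.

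For item~(\ref{item:crr}), I would proceed by induction on $\anumber$. The base case $\anumber=1$ is simply $\rr{\alphabet}{\width}$. For the inductive step, I would form the direct sum $\crr{\alphabet}{\width}{\anumber-1} \directsum \rr{\alphabet}{\width}$, which is regular by Proposition~\ref{proposition:BasicProperties1}\ref{item:directsum}; its tuples have the form $\tuple{\odd, \astring_1, \ldots, \astring_{\anumber-1}, \odd', \astring_{\anumber}}$ with $\odd, \odd' \in \podddefiningset{\alphabet}{\width}$, $\astring_1, \ldots, \astring_{\anumber-1} \in \oddlang{\odd}$, and $\astring_{\anumber} \in \oddlang{\odd'}$. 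Applying $\identify$ to coordinates $1$ and $\anumber+1$ (Proposition~\ref{proposition:BasicProperties1}\ref{item:ident}) would force $\odd = \odd'$ as strings over $\layeralphabet{\alphabet}{\width}$, and projecting out coordinate $\anumber+1$ (Proposition~\ref{proposition:BasicProperties1}\ref{item:proj}) would then yield exactly $\crr{\alphabet}{\width}{\anumber}$.

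For item~(\ref{item:arr}), I would construct directly a finite automaton over $\layeralphabet{\alphabet}{\width} \tensorproduct \alphabet^{\tensorproduct \anumber}$ (with padding extensions on the string coordinates). This automaton would be the synchronous product of (i) the automaton of Proposition~\ref{proposition:regularity_odd_alphabet} checking that the first-coordinate projection forms a valid ODD in $\podddefiningset{\alphabet}{\width}$, and (ii) for each of the $\anumber$ string coordinates, a two-state gadget enforcing the pattern $\alphabet^{\ast}\paddingsymbol^{\ast}$ (once a $\paddingsymbol$ has been read in a given string coordinate, all subsequent symbols in that coordinate must also be $\paddingsymbol$). Since valid ODDs contain no $\paddingsymbol$ in the first coordinate, the length of each accepted tensor equals $\abs{\odd}$, which automatically enforces $\abs{\astring_i} \leq \abs{\odd}$ for all~$i$; the recognized language is therefore $\associatedlang{\arr{\alphabet}{\width}{\anumber}}$.

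For item~(\ref{item:nrr}), I would use the set-theoretic identity $\nrr{\alphabet}{\width}{\anumber} = \arr{\alphabet}{\width}{\anumber} \setminus \crr{\alphabet}{\width}{\anumber}$, which holds because a tuple of $\arr$ fails to lie in $\crr$ precisely when some $\astring_i \notin \oddlang{\odd}$. Since both relations have the same arity and are regular by the previous two items, and regular relations of the same arity are closed under complement and intersection, the difference is regular. The delicate point to verify is the identification step in item~(\ref{item:crr}): because $\identify$ operates on coordinate-level string equality, it genuinely forces the two ODD coordinates to coincide as strings over $\layeralphabet{\alphabet}{\width}$ (rather than merely agreeing in some padded sense); everything else is a routine application of the closure properties already established.
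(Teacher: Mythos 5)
Your proposal follows essentially the same route as the paper: item~(1) is obtained from direct sums of $\rr{\alphabet}{\width}$ combined with $\identify$ and $\projection$ (the paper identifies all $\anumber$ ODD coordinates of a single $\anumber$-fold direct sum at once rather than inducting, but this is cosmetic), item~(2) is a direct product-automaton construction over $\layeralphabet{\alphabet}{\width}\tensorproduct(\alphabet\dcup\set{\paddingsymbol})^{\tensorproduct \anumber}$, and item~(3) uses the same identity $\nrr{\alphabet}{\width}{\anumber}=\arr{\alphabet}{\width}{\anumber}\cap\neg\crr{\alphabet}{\width}{\anumber}$. The only nit is that your per-coordinate gadget in item~(2) should enforce the pattern $\alphabet^{+}\paddingsymbol^{*}$ rather than $\alphabet^{*}\paddingsymbol^{*}$, since the strings $\astring_{1},\ldots,\astring_{\anumber}\in\alphabet^{\leq\oddlength}$ are required to be non-empty (the paper's automaton achieves this by demanding non-padding symbols on the initial transition).
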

\begin{proof}
$ $ \\
  \ref{item:crr}. First, we prove that $\crr{\alphabet}{\width}{\anumber}$ is regular. 
  Consider the relation $\relation$ defined as follows:
$$\relation = \identify\big(\overbrace{\rr{\alphabet}{\width} \directsum \cdots \directsum \rr{\alphabet}{\width}}^{\anumber \text{ times }}, \set{\rchanged{(2i+1, 2i+3) \setst i \in \dbset{\anumber-1}}}\big)\text{.}$$
  In other words, $\relation$ consists of all tuples $\tuple*{\odd, \astring_{1}, \ldots, \odd, \astring_{\anumber}}$ such that $\odd \in \podddefiningset{\alphabet}{\width}$ and $\astring_{1}, \ldots, \astring_{\anumber} \in \oddlang{\odd}$.
  Therefore, $\crr{\alphabet}{\width}{\anumber}$ is regular, since it may be defined as the regular relation $\projection(\relation, \set{2i+1 \setst i \in \bset{\anumber - 1}})$.

  \ref{item:arr}. Consider the finite automaton $\finiteautomaton$ over the alphabet $\layeralphabet{\alphabet}{\width} \tensorproduct (\alphabet \dcup \set{\paddingsymbol})^{\tensorproduct \anumber}$ defined as follows:
  \begin{itemize}\setlength\itemsep{1.0ex}
    \item $\automatonstates(\finiteautomaton) = \set*{\state_{\automatoninitialstates}} \cup \set*{\state_{\layer,{[\padasymbol_{1},\ldots,\padasymbol_{\anumber}]}} \setst \layer \in \layeralphabet{\alphabet}{\width},\, \padasymbol_{1},\ldots,\padasymbol_{\anumber} \in \alphabet \dcup \set{\paddingsymbol}}$;
    
    \item 
    $\begin{aligned}[t] 
      \automatontransitions(\finiteautomaton) = & \begin{multlined}[t][0.85\textwidth]\Big\{\tuple*{\state_{\automatoninitialstates}, \tuple{\layer,\asymbol_{1},\ldots,\asymbol_{\anumber}}, \state_{\layer,[\asymbol_{1},\ldots,\asymbol_{\anumber}]}} \setst \layer \in \layeralphabet{\alphabet}{\width}, \layerinitialflag(\layer) = \true,\, \\[-1.5ex] \hfill \asymbol_{1},\ldots,\asymbol_{\anumber} \in \alphabet\Big\}\end{multlined} \\
      \cup & \begin{multlined}[t][0.85\textwidth] \Big\{\tuple{\state_{\blayer,[\padbsymbol_{1},\ldots,\padbsymbol_{\anumber}]}, \tuple{\alayer,\padasymbol_{1},\ldots,\padasymbol_{\anumber}}, \state_{\alayer, {[\padasymbol_{1},\ldots,\padasymbol_{\anumber}]}}} \setst \alayer, \blayer \in \layeralphabet{\alphabet}{\width}, \\[0.5ex] \hfill \layerleftfrontier(\alayer) = \layerrightfrontier(\blayer), \layerfinalflag(\blayer) = \false, \layerinitialflag(\alayer) = \false,\, \padbsymbol_{i},\padasymbol_{i} \in \alphabet \dcup \set{\paddingsymbol}, \hspace{4.0ex} \\[-1.0ex] \hfill \rchanged{(\padbsymbol_{i} = \paddingsymbol \Rightarrow \padasymbol_{i} = \paddingsymbol) \text{ for each } i \in \bset{\anumber}}\Big\}\text{;}\end{multlined}\end{aligned}$ 
	      
    \item $\automatoninitialstates(\finiteautomaton) = \set{\state_{\automatoninitialstates}}$; 
    $\automatonfinalstates(\finiteautomaton) = \set*{\state_{\layer,{[\padasymbol_{1},\ldots,\padasymbol_{\anumber}]}} \in \automatonstates(\finiteautomaton) \setminus \set{\state_{\automatoninitialstates}} \setst \layerfinalflag(\layer) = \true}$.
  \end{itemize}
  One can verify that $$\automatonlang{\finiteautomaton} = \set*{\odd \tensorproduct (\astring_{1}\tensorproduct\cdots\tensorproduct\astring_{\anumber}) \setst \odd \in \odddefiningset{\alphabet}{\width}{\oddlength},\, \astring_{1},\ldots,\astring_{\anumber} \in \alphabet^{\leq \oddlength}, \oddlength \in \pN}\text{.}$$
  Therefore, $\arr{\alphabet}{\width}{\anumber}$ is a regular relation.

  \ref{item:nrr}. Note that, $\nrr{\alphabet}{\width}{\anumber}$ consists of all tuples $\tuple*{\odd, \astring_{1}, \ldots, \astring_{\anumber}}\in \arr{\alphabet}{\width}{\oddlength}$ such that $\astring_{i} \not\in \oddlang{\odd}$ for some $i \in \bset{\anumber}$.
  Therefore, $\nrr{\alphabet}{\width}{\anumber}$ may be defined as the regular relation $\arr{\alphabet}{\width}{\anumber} \setminus \crr{\alphabet}{\width}{\anumber} = \arr{\alphabet}{\width}{\anumber} \cap \neg\crr{\alphabet}{\width}{\anumber}$.
\end{proof}

\begin{lemma}\label{lemma:drr}
  For each two alphabets $\aalphabet$ and $\balphabet$ and each $\width \in \pN$, the relation $\drr{\aalphabet}{\balphabet}{\width} \defeq \set*{\tuple*{\aodd, \bodd} \setst \aodd \in \podddefiningset{\aalphabet}{\width},\, \bodd \in \podddefiningset{\balphabet}{\width}}$ is regular.
\end{lemma}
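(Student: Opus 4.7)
The plan is to reduce this directly to closure properties that have already been established. By Proposition~\ref{proposition:regularity_odd_alphabet}, both $\podddefiningset{\aalphabet}{\width}$ and $\podddefiningset{\balphabet}{\width}$ are regular languages over the alphabets $\layeralphabet{\aalphabet}{\width}$ and $\layeralphabet{\balphabet}{\width}$, respectively. Regarding each of these as a unary regular relation, their direct sum (in the sense introduced just before Proposition~\ref{proposition:BasicProperties1}) consists precisely of pairs $\tuple{\aodd, \bodd}$ with $\aodd \in \podddefiningset{\aalphabet}{\width}$ and $\bodd \in \podddefiningset{\balphabet}{\width}$, which is by definition $\drr{\aalphabet}{\balphabet}{\width}$. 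Applying item~\ref{item:directsum} of Proposition~\ref{proposition:BasicProperties1}, the direct sum of two regular relations is regular, so $\drr{\aalphabet}{\balphabet}{\width}$ is regular.

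Since the proof is essentially a one-line appeal to previous results, there is no serious technical obstacle. If a self-contained automaton-theoretic argument is preferred, one can instead build a finite automaton $\finiteautomaton$ over $\layeralphabet{\aalphabet}{\width}\tensorproduct\layeralphabet{\balphabet}{\width}$ by taking a product of two copies of the automaton constructed in the proof of Proposition~\ref{proposition:regularity_odd_alphabet}: one copy tracking the validity of the first coordinate as an ODD in $\podddefiningset{\aalphabet}{\width}$, the other tracking the validity of the second coordinate as an ODD in $\podddefiningset{\balphabet}{\width}$. The only extra bookkeeping is to allow the padding symbol $\paddingsymbol$ on the shorter coordinate once that coordinate's ODD has been accepted, so that the two components may have different lengths; this mirrors the padding convention in the definition of the tensor product of strings. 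Both coordinates must end in accepting states of their respective automata for $\finiteautomaton$ to accept, yielding $\associatedrelation{\automatonlang{\finiteautomaton}} = \drr{\aalphabet}{\balphabet}{\width}$.
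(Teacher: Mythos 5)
Your proposal is correct and follows exactly the paper's own argument: invoke Proposition~\ref{proposition:regularity_odd_alphabet} to view $\podddefiningset{\aalphabet}{\width}$ and $\podddefiningset{\balphabet}{\width}$ as regular unary relations, and then apply the direct-sum closure property from Proposition~\ref{proposition:BasicProperties1}. The additional explicit product-automaton sketch is a faithful unfolding of what the direct-sum construction already does, so nothing further is needed.
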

\begin{proof}
  It follows from Proposition~\ref{proposition:regularity_odd_alphabet} that $\podddefiningset{\aalphabet}{\width}$ and $\podddefiningset{\balphabet}{\width}$ are regular languages.
  In other words, $\relation_{1} = \podddefiningset{\aalphabet}{\width}$ and $\relation_{2} = \podddefiningset{\balphabet}{\width}$ are regular unary relations.
  Therefore, $\drr{\aalphabet}{\balphabet}{\width}$ is regular, since it may be defined as the regular relation $\relation_{1} \directsum \relation_{2}$.
\end{proof}

Let $\alphabet$ be an alphabet and $\width, \arityrelation \in \pN$.
We let $\srr{\alphabet}{\width}{\arityrelation}$ be the relation defined as follows:
\begin{equation*}
    \srr{\alphabet}{\width}{\arityrelation} \defeq \big\{\tuple*{\aodd, \bodd} \setst \aodd \in \podddefiningset{\alphabet}{\width},\, \bodd \in \podddefiningset{\alphabet^{\tensorproduct \arityrelation}}{\width},\, \oddlang{\bodd} \subseteq \oddlang{\aodd}^{\tensorproduct \arityrelation}\big\}\text{.}
\end{equation*}

\begin{proposition}\label{proposition:srr}
  For each alphabet $\alphabet$ and each $\width, \arityrelation \in \pN$, the relation $\srr{\alphabet}{\width}{\arityrelation}$ is regular.
\end{proposition}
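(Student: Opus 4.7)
The plan is to characterize the complement of $\srr{\alphabet}{\width}{\arityrelation}$ inside the regular binary relation $\drr{\alphabet}{\alphabet^{\tensorproduct \arityrelation}}{\width}$ (regular by Lemma~\ref{lemma:drr}), and then conclude the regularity of $\srr{\alphabet}{\width}{\arityrelation}$ by set difference. Indeed, a pair $\tuple{\aodd,\bodd}\in \drr{\alphabet}{\alphabet^{\tensorproduct\arityrelation}}{\width}$ fails to lie in $\srr{\alphabet}{\width}{\arityrelation}$ if and only if there exists a witness tuple $\tuple{\astring_1,\ldots,\astring_\arityrelation}$ with $\astring_1\tensorproduct\cdots\tensorproduct\astring_\arityrelation\in \oddlang{\bodd}$ and $\astring_i\notin \oddlang{\aodd}$ for some $i\in\bset{\arityrelation}$. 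I will build a regular relation of arity $\arityrelation+2$ containing exactly these tuples augmented by their witnesses, and then project the witnesses away to obtain a ``bad'' relation $\relation_{\text{bad}}$.

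First I would assemble two ingredients. Starting from $\rr{\alphabet^{\tensorproduct\arityrelation}}{\width}$, which is regular by Proposition~\ref{proposition:fundamental_relation}, I apply $\unfold$ to its second coordinate to get
\[
\relation_1 \defeq \unfold(\rr{\alphabet^{\tensorproduct\arityrelation}}{\width},2) = \set{\tuple{\bodd,\astring_1,\ldots,\astring_\arityrelation} \setst \bodd\in \podddefiningset{\alphabet^{\tensorproduct\arityrelation}}{\width},\ \astring_1\tensorproduct\cdots\tensorproduct\astring_\arityrelation\in \oddlang{\bodd}}.
\]
This relation is regular by Proposition~\ref{proposition:BasicProperties1}(\ref{item:unfold}), since $\rr{\alphabet^{\tensorproduct\arityrelation}}{\width}$ is precisely $\fold(\relation_1,2,\arityrelation+1)$; this works because $\oddlang{\bodd}\subseteq (\alphabet^{\tensorproduct\arityrelation})^+$ forces the components $\astring_1,\ldots,\astring_\arityrelation$ of each tuple of $\relation_1$ to have the same length, so no padding arises in their tensor product. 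As a second ingredient, $\relation_2\defeq \nrr{\alphabet}{\width}{\arityrelation}$ is regular by Lemma~\ref{lemma:BasicAutomaticStructures}(\ref{item:nrr}) and records tuples $\tuple{\aodd,\astring_1,\ldots,\astring_\arityrelation}$ in which at least one $\astring_i$ is not accepted by $\aodd$.

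Next I would form the direct sum $\relation_1\directsum \relation_2$ (regular by Proposition~\ref{proposition:BasicProperties1}(\ref{item:directsum})), yielding an arity-$(2\arityrelation+2)$ relation on coordinates $\tuple{\bodd,\astring_1,\ldots,\astring_\arityrelation,\aodd,\astring_1^{\prime},\ldots,\astring_\arityrelation^{\prime}}$. Applying $\identify$ with index set $\set{\tuple{i+1,\arityrelation+2+i} \setst i\in\bset{\arityrelation}}$ forces $\astring_i=\astring_i^{\prime}$ for each $i$; then $\projection$ erases the duplicated coordinates $\astring_1^{\prime},\ldots,\astring_\arityrelation^{\prime}$; a final $\perm$ reorders the result to $\tuple{\aodd,\bodd,\astring_1,\ldots,\astring_\arityrelation}$. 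Projecting out the last $\arityrelation$ coordinates yields $\relation_{\text{bad}}$, and the proposition follows from $\srr{\alphabet}{\width}{\arityrelation} = \drr{\alphabet}{\alphabet^{\tensorproduct\arityrelation}}{\width}\setminus \relation_{\text{bad}}$ together with closure under set difference.

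The main obstacle I expect is the bookkeeping around the fold/unfold identity: verifying that $\rr{\alphabet^{\tensorproduct\arityrelation}}{\width}$ really is in the image of $\fold$ over coordinates $2$ through $\arityrelation+1$ of $\relation_1$, and that the alphabet of the folded coordinate comes out to be exactly $\alphabet^{\tensorproduct\arityrelation}$ (with no component-wise padding), because equal length of the witnesses is automatically enforced by membership in $\oddlang{\bodd}$. A related subtlety is that $\relation_1$ only involves equal-length witnesses while $\relation_2$ handles witnesses of possibly different lengths; however, the $\identify$ step reconciles both views by retaining only tuples whose corresponding witnesses coincide as strings over $\alphabet$, which automatically enforces the common-length constraint where needed. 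Once these alignment issues are settled, the remainder is a clean composition of the closure operations from Proposition~\ref{proposition:BasicProperties1}.
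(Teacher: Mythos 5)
Your proposal is correct and follows essentially the same route as the paper: both characterize the ``bad'' pairs $\tuple{\aodd,\bodd}$ with $\oddlang{\bodd}\nsubseteq\oddlang{\aodd}^{\tensorproduct\arityrelation}$ as a projection of an identification of a direct sum built from $\nrr{\alphabet}{\width}{\arityrelation}$ and $\rr{\alphabet^{\tensorproduct\arityrelation}}{\width}$, and then intersect the complement with the regular relation $\drr{\alphabet}{\alphabet^{\tensorproduct\arityrelation}}{\width}$. The only differences are cosmetic (the paper folds $\nrr$ and identifies the two tensored coordinates in one step, whereas you unfold $\rr$ and identify component-wise, with an extra permutation), and your justification that the unfold is legitimate because strings in $\oddlang{\bodd}$ decompose uniquely into equal-length components is exactly the right point.
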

\begin{proof}
  Consider the relation $\arelation = \fold(\nrr{\alphabet}{\width}{\arityrelation},2,\arityrelation+1) \directsum \rr{\alphabet^{\tensorproduct \arityrelation}}{\width}$.
  Note that, $\arelation$ consists of all tuples of the form $\tuple*{\aodd, \astring_{1} \tensorproduct \cdots \tensorproduct \astring_{\arityrelation}, \bodd, \bstring_{1} \tensorproduct \cdots \tensorproduct \bstring_{\arityrelation}}$ satisfying the following conditions: 
  \begin{itemize}
    \item $\aodd \in \odddefiningset{\alphabet}{\width}{\oddlength}$, $\astring_{1}, \ldots, \astring_{\arityrelation} \in \alphabet^{\leq \oddlength}$ but, for some $i \in \bset{\arityrelation}$, $\astring_{i} \not\in \oddlang{\aodd}$;
    \item $\bodd \in \odddefiningset{\alphabet}{\width}{\oddlength^{\prime}}$, and $\bstring_{1} \tensorproduct \cdots \tensorproduct \bstring_{\arityrelation} \in \oddlang{\bodd}$, 
  \end{itemize}
  where $\oddlength, \oddlength^{\prime} \in \pN$. 
  Let $\brelation = \identify(\arelation, 2, 4)$. 
  By definition, $\brelation$ is the sub-relation of $\arelation$ comprising all tuples $\tuple{\aodd,\astring_{1} \tensorproduct \cdots \tensorproduct \astring_{\arityrelation}, \bodd, \bstring_{1} \tensorproduct \cdots \tensorproduct \bstring_{\arityrelation}} \in \arelation$ such that $\astring_{i} = \bstring_{i}$ for each $i \in \bset{\arityrelation}$.
  Thus, $\projection(\brelation, \set{2, 4})$ consists of all tuples $\tuple*{\aodd,\bodd} \in \odddefiningset{\alphabet}{\width}{\oddlength} \cartesianproduct \odddefiningset{\alphabet^{\tensorproduct \arityrelation}}{\width}{\oddlength^{\prime}}$ such that there exist $\astring_{1}, \ldots, \astring_{\arityrelation} \in \alphabet^{\leq \oddlength}$ with $\astring_{1} \tensorproduct \cdots \tensorproduct \astring_{\arityrelation} \in \oddlang{\bodd}$ but, for some $i \in \bset{\arityrelation}$, $\astring_{i} \not\in \oddlang{\aodd}$, where $\oddlength, \oddlength^{\prime} \in \pN$.
  In other words, we have that
  \begin{equation*}
      \projection(\brelation, \set{2, 4}) = \big\{\tuple*{\aodd,\bodd} \setst \aodd \in \podddefiningset{\alphabet}{\width}, \bodd \in \podddefiningset{\alphabet^{\tensorproduct \arityrelation}}{\width}, \oddlang{\bodd} \nsubseteq \oddlang{\aodd}^{\tensorproduct \arityrelation}\big\}\text{,}
  \end{equation*}
  Now, let $\drr{\alphabet}{\alphabet^{\tensorproduct \arityrelation}}{\width} = \set{\tuple{\aodd, \bodd} \setst \aodd \in \podddefiningset{\alphabet}{\width},\, \bodd \in \podddefiningset{\alphabet^{\tensorproduct \arityrelation}}{\width}}$.
  From Lemma~\ref{lemma:drr}, $\drr{\alphabet}{\alphabet^{\tensorproduct \arityrelation}}{\width}$ is regular.
  Therefore, $\srr{\alphabet}{\width}{\arityrelation}$ is regular, since it may be simply regarded as the relation $\drr{\alphabet}{\alphabet^{\tensorproduct \arityrelation}}{\width} \cap \neg\projection(\brelation, \set{2, 4})$.
\end{proof}

\subsection{Regular-Structural Relations}

\newcommand{\frr}[4]{\ensuremath{\mathcal{R}(#1,#2,#3,#4)}}

Let $\alphabet$ be an alphabet, $\width \in \pN$, $\vocabulary = \tuple{\relationsymbol_{1},\ldots,\relationsymbol_{\nrelations}}$ be a relational vocabulary and $\nfreevariables \in \N$.
We let $\frr{\alphabet}{\width}{\vocabulary}{\nfreevariables}$ be the relation defined as follows: 
\begin{equation*}
  \begin{multlined}[0.9\displaywidth]\textstyle
    \frr{\alphabet}{\width}{\vocabulary}{\nfreevariables} \defeq \big\{\tuple{\odd_{0}, \odd_{1}, \ldots, \odd_{\nrelations}, \vertex_{1}, \ldots, \vertex_{\nfreevariables}} \setst \\ \tuple*{\odd_{0}, \odd_{1}, \ldots, \odd_{\nrelations}} \in \derivedrelation(\alphabet,\width,\vocabulary),\, \vertex_{1},\ldots,\vertex_{\nfreevariables} \in \oddlang{\odd_{0}}\big\}\text{.}
  \end{multlined}
\end{equation*}
In particular, note that, if $\nfreevariables = 0$, then $\frr{\alphabet}{\width}{\vocabulary}{\nfreevariables} = \derivedrelation(\alphabet,\width,\vocabulary)$.

\begin{lemma}\label{lemma:frr_regular}
 Let $\alphabet$ be an alphabet, $\width \in \pN$, $\vocabulary = \tuple{\relationsymbol_{1}, \ldots, \relationsymbol_{\nrelations}}$ be a relational vocabulary and $\nfreevariables \in \N$.
  The relation $\frr{\alphabet}{\width}{\vocabulary}{\nfreevariables}$ is regular.
\end{lemma}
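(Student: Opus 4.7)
The plan is to realize $\frr{\alphabet}{\width}{\vocabulary}{\nfreevariables}$ as a Boolean combination of the regular relations built in the previous subsections, assembled via the closure operations from Proposition~\ref{proposition:BasicProperties1}.

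First, I would inductively combine the ``subset'' conditions across the relations of $\vocabulary$. For each $i\in[\nrelations]$, Proposition~\ref{proposition:srr} yields the regular binary relation $\srr{\alphabet}{\width}{\arity_{i}}$ on pairs $(\odd_{0},\odd_{i})$ with $\oddlang{\odd_i}\subseteq\oddlang{\odd_0}^{\tensorproduct \arity_i}$. Starting from $\srr{\alphabet}{\width}{\arity_{1}}$ and iterating over $i=2,\dots,\nrelations$, I take the direct sum with $\srr{\alphabet}{\width}{\arity_{i}}$, apply $\identify$ to equate the accumulated $\odd_{0}$-coordinate with the freshly introduced one, and project the duplicate away. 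By Proposition~\ref{proposition:BasicProperties1}, each operation preserves regularity, and the result is a regular $(\nrelations+1)$-ary relation capturing all the subset constraints simultaneously.

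Next, to incorporate the free variables, I take the direct sum of this relation with $\crr{\alphabet}{\width}{\nfreevariables}$ (regular by item~\ref{item:crr} of Lemma~\ref{lemma:BasicAutomaticStructures}), identify the two copies of $\odd_{0}$, and project the duplicate away. The resulting regular relation consists exactly of the tuples $(\odd_{0},\odd_{1},\dots,\odd_{\nrelations},\vertex_{1},\dots,\vertex_{\nfreevariables})$ with $\odd_0\in\podddefiningset{\alphabet}{\width}$, each $\odd_i\in\podddefiningset{\alphabet^{\tensorproduct \arity_i}}{\width}$, $\oddlang{\odd_i}\subseteq\oddlang{\odd_0}^{\tensorproduct \arity_i}$ for every $i\in[\nrelations]$, and $\vertex_j\in\oddlang{\odd_0}$ for every $j\in[\nfreevariables]$.

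The only remaining obstacle is the requirement that $\odd_{0},\odd_{1},\dots,\odd_{\nrelations}$ share a common length $\oddlength\in\pN$, which is part of the structural-tuple definition but is not enforced by $\srr{\alphabet}{\width}{\arity_{i}}$ alone. In the tensor-product encoding, however, this condition is easily expressed: it holds precisely when at every position of the combined string the padding symbol $\paddingsymbol$ appears in the first $\nrelations+1$ coordinates either simultaneously or not at all. This ``synchronised padding'' property on a fixed set of coordinates is a regular sublanguage of the ambient tensor-product alphabet, recognised by a one-state automaton that inspects each tuple symbol locally. Intersecting the previously obtained regular relation with this constraint yields exactly $\frr{\alphabet}{\width}{\vocabulary}{\nfreevariables}$, establishing its regularity.
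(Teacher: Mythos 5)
Your construction is essentially the paper's: the paper likewise forms the direct sum $\srr{\alphabet}{\width}{\arity_{1}} \directsum \cdots \directsum \srr{\alphabet}{\width}{\arity_{\nrelations}} \directsum \crr{\alphabet}{\width}{\nfreevariables}$, identifies all the copies of the $\odd_{0}$-coordinate, and projects the duplicates away; whether this is done in one shot or iteratively over $i$ is immaterial. The one substantive difference is your final step enforcing that $\odd_{0},\odd_{1},\dots,\odd_{\nrelations}$ share a common length. The paper's proof simply asserts that after identification the surviving tuples satisfy $\tuple{\odd_{0},\dots,\odd_{\nrelations}} \in \derivedrelation(\alphabet,\width,\vocabulary)$, but the relations $\srr{\alphabet}{\width}{\arity_{i}}$ only impose membership in $\podddefiningset{\cdot}{\width}$ and the containment $\oddlang{\odd_{i}}\subseteq\oddlang{\odd_{0}}^{\tensorproduct \arity_{i}}$, neither of which forces the ODDs to have equal length (e.g.\ a longer $\odd_{i}$ whose language happens to contain only short strings still satisfies the containment); so without an extra constraint the construction yields a superset of $\frr{\alphabet}{\width}{\vocabulary}{\nfreevariables}$. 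Your observation that equal length is a purely local condition on the tensored string --- at each position the padding symbol occurs in the first $\nrelations+1$ coordinates either simultaneously or not at all --- is correct, gives a regular constraint, and closes that gap. So your proof is correct and, on this point, more careful than the one in the paper.
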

\begin{proof}
  Consider the relations $\arelation = \srr{\alphabet}{\width}{\arityrelation_{1}} \directsum \cdots \directsum \srr{\alphabet}{\width}{\arityrelation_{\nrelations}} \directsum \crr{\alphabet}{\width}{\nfreevariables}$ 
and $\brelation = \identify(\arelation,\set{\tuple{2i+1, 2i+3} \setst i \in \dbset{\nrelations}})$.
  One can verify that $\brelation$ consists of tuples of the form $\tuple{\odd_{0}, \odd_{1}, \ldots, \odd_{0}, \odd_{\nrelations}, \odd_{0}, \vertex_{1}, \ldots, \vertex_{\nfreevariables}}$ satisfying the conditions: 
$\tuple*{\odd_{0}, \odd_{1}, \ldots, \odd_{\nrelations}} \in \derivedrelation(\alphabet,\width,\vocabulary)$ 
  and $\vertex_{1},\ldots,\vertex_{\nfreevariables} \in \oddlang{\odd_{0}}$.
  Therefore,  $\frr{\alphabet}{\width}{\vocabulary}{\nfreevariables}$ is regular, since it may regarded as the relation $\projection(\brelation,\set{2i+1 \setst i \in \bset{\nrelations}})$.
\end{proof}

Let $\alphabet$ be a finite alphabet, $\width\in \pN$, $\vocabulary$ be a relational vocabulary, and 
$\formula(\variable_{1},\ldots,\variable_{\nfreevariables})$ be an $\fol{\vocabulary}$-formula, with 
free variables $\variable_1,\dots,\variable_{\nfreevariables}$. Consider the following relation. 
\begin{equation*}
\begin{multlined}[0.9\displaywidth]\textstyle
	\derivedrelation(\alphabet,\width,\vocabulary,\formula(\avariable_1,\dots,\avariable_{\nfreevariables})) \defeq \big\{\tuple{\odd_{0}, \odd_{1}, \ldots, \odd_{\nrelations}, \vertex_{1}, \ldots, \vertex_{\nfreevariables}} \in \frr{\alphabet}{\width}{\vocabulary}{\nfreevariables} \setst \\ \derivedstructure(\odd_{0}, \odd_{1}, \ldots, \odd_{\nrelations}) \models \formula[\vertex_{1}, \ldots, \vertex_{\nfreevariables}]\}\text{.}
  \end{multlined}
\end{equation*}

In Theorem \ref{theorem:RegularInduction}, we show that
$\derivedrelation(\alphabet,\width,\vocabulary,\formula(\avariable_1,\dots,\avariable_{\nfreevariables}))$
is a regular relation. We remark that if $\nfreevariables = 0$, then $\formula$ is an $\fol{\vocabulary}$-sentence and, 
in this case, the relation  $\derivedrelation(\alphabet,\width,\vocabulary,\formula(\variable_1,\dots,\variable_{\nfreevariables}))$
coincides with the relation $\derivedrelation(\alphabet,\width,\vocabulary,\formula)$. Therefore, 
Theorem \ref{theorem:RegularInduction} implies Theorem~\ref{theorem:FirstOrderRegular}. 

The proof of Theorem~\ref{theorem:RegularInduction} is by induction on the structure of the input formula
$\formula(\variable_1,\ldots,\variable_{\nfreevariables})$. The base case follows by combining the auxiliary 
lemmas and propositions proven in this section. The induction step follows 
from the fact that regular languages are closed under, negation, union, intersection and projection.

\begin{theorem}
\label{theorem:RegularInduction}
Let $\alphabet$ be an alphabet, $\width \in \pN$, $\vocabulary = \tuple{\relationsymbol_{1}, \ldots, \relationsymbol_{\nrelations}}$ 
be a relational vocabulary and $\formula(\variable_1,\ldots,\variable_{\nfreevariables})$ be an
$\fol{\vocabulary}$-formula. The relation $\derivedrelation(\alphabet,\width,\vocabulary,\formula(\avariable_1,\dots,\avariable_{\nfreevariables}))$
is regular. 
\end{theorem}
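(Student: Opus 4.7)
The plan is to proceed by structural induction on the formula $\formula(\variable_1,\ldots,\variable_{\nfreevariables})$, combining the regular relations constructed in the previous lemmas of this section with the closure properties established in Proposition~\ref{proposition:BasicProperties1} and the additional closure of regular relations under union, intersection and complementation. At each step, I would express $\derivedrelation(\alphabet,\width,\vocabulary,\formula(\avariable_1,\dots,\avariable_{\nfreevariables}))$ as a regularity-preserving combination of relations already known to be regular, so that the inductive hypothesis can be propagated.

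For the base cases, there are two subcases. If $\formula \equiv (\variable_i = \variable_j)$, then $\derivedrelation(\alphabet,\width,\vocabulary,\formula(\avariable_1,\dots,\avariable_{\nfreevariables}))$ is obtained from $\frr{\alphabet}{\width}{\vocabulary}{\nfreevariables}$ by applying $\identify$ to the two coordinates holding $\vertex_i$ and $\vertex_j$ (positions $\nrelations+1+i$ and $\nrelations+1+j$), which is regular by Lemma~\ref{lemma:frr_regular} and item \ref{item:ident} of Proposition~\ref{proposition:BasicProperties1}. If $\formula \equiv \relationsymbol_k(\variable_{c(1)},\ldots,\variable_{c(\arity_k)})$, the additional constraint on top of $\frr{\alphabet}{\width}{\vocabulary}{\nfreevariables}$ is $\vertex_{c(1)}\tensorproduct\cdots\tensorproduct\vertex_{c(\arity_k)} \in \oddlang{\odd_k}$. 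I would start from the direct sum $\frr{\alphabet}{\width}{\vocabulary}{\nfreevariables} \directsum \crr{\alphabet}{\width}{\arity_k} \directsum \rr{\alphabet^{\tensorproduct \arity_k}}{\width}$, then: (i) identify the ODD coordinate of the $\crr$-part with $\odd_0$ and the ODD coordinate of the $\rr$-part with $\odd_k$; (ii) for each $i\in\bset{\arity_k}$, identify the $i$-th auxiliary $\crr$-string with the coordinate holding $\vertex_{c(i)}$. After this, the $\arity_k$ auxiliary strings are forced to coincide with $\vertex_{c(1)},\ldots,\vertex_{c(\arity_k)}$ and all lie in $\oddlang{\odd_0}$. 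I would then apply $\perm$ to bring these $\arity_k$ auxiliary strings into consecutive positions adjacent to the $\rr$-string, apply $\fold$ to collapse them into a single tensor-product coordinate, identify that new coordinate with the $\rr$-string, and finally use $\projection$ to remove the redundant ODD and string coordinates introduced by the direct sum. Every intermediate step preserves regularity by Proposition~\ref{proposition:BasicProperties1}.

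For the induction step, the Boolean connectives and quantifiers are mapped to closure operations on regular relations. Conjunction $\formula_1 \wedge \formula_2$ and disjunction $\formula_1 \vee \formula_2$ correspond to intersection and union of the inductively regular relations (which share the same coordinate structure). Negation $\neg\formula$ corresponds to the relative complement $\frr{\alphabet}{\width}{\vocabulary}{\nfreevariables} \setminus \derivedrelation(\alphabet,\width,\vocabulary,\formula(\avariable_1,\dots,\avariable_{\nfreevariables}))$, which is regular by Lemma~\ref{lemma:frr_regular} together with closure of regular relations under intersection and complementation. Existential quantification $\exists \bvariable.\formula'(\variable_1,\ldots,\variable_{\nfreevariables},\bvariable)$ corresponds to projecting out the $\bvariable$-coordinate from the inductively regular relation for $\formula'$ (item~\ref{item:proj} of Proposition~\ref{proposition:BasicProperties1}). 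Universal quantification is handled by the standard equivalence $\forall\bvariable.\formula' \equiv \neg\exists\bvariable.\neg\formula'$, reducing it to previous cases; implication and bi-implication are handled via the usual Boolean rewrites.

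The main obstacle, as anticipated, is the atomic relation base case: enforcing a tensor-product constraint that ties a string in $\oddlang{\odd_k}$ to a subtuple $(\vertex_{c(1)},\ldots,\vertex_{c(\arity_k)})$ of the free-variable coordinates under a map $c\colon\bset{\arity_k}\rightarrow\bset{\nfreevariables}$ which may be non-injective and need not respect any linear order on coordinates. Because $\fold$ acts only on consecutive positions and $\identify$ only equates coordinates pointwise, the careful sequencing of direct sum (to introduce fresh copies of the required $\vertex$'s), identification (to pin those copies to the corresponding $\vertex_{c(i)}$), permutation (to place them in consecutive positions), fold (to form the tensor product), a final identification with the ODD-language string, and projection (to remove the scaffolding) is essential to stay inside the regular class. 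Once this atomic case is in place, the induction step is routine, and Theorem~\ref{theorem:FirstOrderRegular} follows immediately as the specialisation $\nfreevariables = 0$.
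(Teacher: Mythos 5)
Your proposal is correct and follows essentially the same strategy as the paper: structural induction with $\identify$ on $\frr{\alphabet}{\width}{\vocabulary}{\nfreevariables}$ for equality atoms, a direct-sum/identify/project construction for relational atoms, and union, intersection, relative complement and projection for the connectives and $\exists$. The only (cosmetic) divergence is in the relational base case, where the paper applies $\unfold$ to $\rr{\alphabet^{\tensorproduct\arity_i}}{\width}$ to expose the $\arity_i$ component strings directly and identifies them with the variable coordinates, whereas you introduce fresh copies and $\fold$ them back into a tensor coordinate; both routes rest on the same closure properties from Proposition~\ref{proposition:BasicProperties1}.
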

\begin{proof}
We prove by induction on the structure of the formula $\formula(\variable_1,\ldots,\variable_{\nfreevariables})$
that the relation $\derivedrelation(\alphabet,\width,\vocabulary,\formula(\avariable_1,\dots,\avariable_{\nfreevariables}))$
is regular.

  \textit{Base case.} Suppose that $\formula(\avariable_1,\ldots,\avariable_{\nfreevariables})$ is an atomic formula.
 Then, there are two subcases to be analyzed.
 In the first subcase, $\formula(\variable_1,\ldots,\variable_{\nfreevariables}) \equiv (\variable_{i} = \variable_{j})$
for some $i,j \in \bset{\nfreevariables}$.
In this case, one can verify that
$\derivedrelation(\alphabet,\width,\vocabulary,\formula,\freevariableset_{\nfreevariables})$
is equal to the relation 
$\identify(\rchanged{\frr{\alphabet}{\width}{\vocabulary}{\nfreevariables}}, \rchanged{\nrelations + i+1, \nrelations  + j+1})$ 
which consists of the subset of tuples in $\frr{\alphabet}{\width}{\vocabulary}{\nfreevariables}$ where the 
coordinate corresponding to variable $\avariable_i$ is equal to the coordinate corresponding to variable $\avariable_j$.
We note that we need to add $\nrelations+1$ to $i$ and $j$ because the first 
$\nrelations+1$ coordinates of each tuple in $\frr{\alphabet}{\width}{\vocabulary}{\nfreevariables}$ are ODDs, while 
the last $\nfreevariables$ coordinates are strings representing assignments to the variables. 

In the second subcase, $\formula(\variable_1,\ldots,\variable_{\nfreevariables}) \equiv \relationsymbol_{i}(\avariable_{\cmap(1)}, \ldots, \avariable_{\cmap(\arityrelation_{i})})$ for some $\cmap \colon \bset{\arityrelation_{i}} \rightarrow \bset{\nfreevariables}$ and some $i \in \bset{\nrelations}$.
  Let $\relation = \unfold(\rr{\alphabet^{\tensorproduct \arityrelation_{i}}}{\width},2)$. Intuitively, $\relation$ consists of all tuples of the form $\tuple{\odd, \bvertex_{1}, \ldots, \bvertex_{\arityrelation_{i}}}$ such that $\odd \in \odddefiningset{\alphabet^{\tensorproduct \arityrelation_{i}}}{\width}{\oddlength}$ and $\bvertex_{1} \tensorproduct \cdots \tensorproduct \bvertex_{\arityrelation_{i}} \in \oddlang{\odd}$.
  Let $\relation^{\prime} = \identify(\relation \directsum \rchanged{\frr{\alphabet}{\width}{\vocabulary}{\nfreevariables}}, 1, i+\arityrelation_{i}+2)$.
  One can verify that $\relation^{\prime}$ comprises all tuples of the form $\tuple{\odd_{i}, \bvertex_{1}, \ldots, \bvertex_{\arityrelation_{i}}, \odd_{0}, \odd_{1}, \ldots, \odd_{i}, \ldots, \odd_{\nrelations}, \avertex_{1}, \ldots, \avertex_{\nfreevariables}}$ such that $\tuple{\odd_{0}, \odd_{1}, \ldots, \odd_{\nrelations}, \avertex_{1}, \ldots, \avertex_{\nfreevariables}} \in \rchanged{\frr{\alphabet}{\width}{\vocabulary}{\nfreevariables}}$ and $\bvertex_{1} \tensorproduct \cdots \tensorproduct \bvertex_{\arityrelation_{i}} \in \oddlang{\odd_{i}}$.
  Let $\relation^{\prime\prime} = \identify(\relation^{\prime}, \set{\tuple{j+1, \cmap(j)+\arityrelation_{i}+\nrelations+2} \setst j \in \bset{\arityrelation_{i}}})$.
  In other words, $\relation^{\prime\prime}$ consists of all tuples $\tuple{\odd_{i}, \bvertex_{1}, \ldots, \bvertex_{\arityrelation_{i}}, \odd_{0}, \odd_{1}, \ldots, \odd_{i}, \ldots, \odd_{\nrelations}, \avertex_{1}, \ldots, \avertex_{\nfreevariables}} \in \relation^{\prime}$ such that $\bvertex_{j} = \avertex_{\cmap(j)}$ for each $j \in \bset{\arityrelation_{i}}$, which implies $\avertex_{\cmap(1)} \tensorproduct \cdots \tensorproduct \avertex_{\cmap(\arityrelation_{i})} \in \oddlang{\odd_{i}}$.
  As a result, we obtain the regularity of $\derivedrelation(\alphabet,\width,\vocabulary,\formula,\freevariableset_{\nfreevariables})$, since in this case $\derivedrelation(\alphabet,\width,\vocabulary,\formula,\freevariableset_{\nfreevariables})$ 
can be defined as the relation $\projection(\relation^{\prime\prime}, \set{1, \ldots, \arityrelation_{i}+1})$.

  \textit{Inductive step.} It suffices to analyze the following four subcases: $\formula \equiv \formula_{1} \lor \formula_{2}$, $\formula \equiv \formula_{1} \land \formula_{2}$, $\formula \equiv \neg\formula_{1}$ and $\formula(\variable_1,\ldots,\variable_{\nfreevariables}) \equiv \exists\,\bvariable\; \formula_{3}(\avariable_{1},\ldots,\avariable_{\nfreevariables},\bvariable)$ for 
$\fol{\vocabulary}$-formulas $\formula_1$, $\formula_2$ and $\formula_3$ and variable $\bvariable$ 
that is free in $\formula_3$. 

First, assume that $\formula \equiv \formula_{1} \lor \formula_{2}$.
By the induction hypothesis, 
$\derivedrelation(\alphabet,\width,\vocabulary,\formula_{1}(\avariable_1,\dots,\avariable_{\nfreevariables}))$
and $\derivedrelation(\alphabet,\width,\vocabulary,\formula_{2},\formula(\avariable_1,\dots,\avariable_{\nfreevariables}))$
are regular relations. This implies that
$\derivedrelation(\alphabet,\width,\vocabulary,\formula(\avariable_1,\dots,\avariable_{\nfreevariables}))$
is a regular relation, since it is equal to the union
$$\derivedrelation(\alphabet,\width,\vocabulary,\formula_{1}(\avariable_1,\dots,\avariable_{\nfreevariables}))\cup
	\derivedrelation(\alphabet,\width,\vocabulary,\formula_{2}(\avariable_1,\dots,\avariable_{\nfreevariables}))$$
of two regular relations. Similarly, assume that $\formula \equiv \formula_{1} \land \formula_{2}$. 
By the induction hypothesis, the relations 
$\derivedrelation(\alphabet,\width,\vocabulary,\formula_{1}(\avariable_1,\dots,\avariable_{\nfreevariables}))$
and $\derivedrelation(\alphabet,\width,\vocabulary,\formula_{2}(\avariable_1,\dots,\avariable_{\nfreevariables}))$
are regular. Then
$\derivedrelation(\alphabet,\width,\vocabulary,\formula(\avariable_1,\dots,\avariable_{\nfreevariables})$
is a regular relation, since it is equal to the intersection 
$$\derivedrelation(\alphabet,\width,\vocabulary,\formula_{1}(\avariable_1,\dots,\avariable_{\nfreevariables}))
\cap
 \derivedrelation(\alphabet,\width,\vocabulary,\formula_{2}(\avariable_1,\dots,\avariable_{\nfreevariables}))$$
of two regular relations. 

  Now, assume that $\formula \equiv \neg\formula_{1}$.
By the induction hypothesis, we have that the relation 
$\derivedrelation(\alphabet,\width,\vocabulary,\formula_{1}(\avariable_1,\dots,\avariable_{\nfreevariables}))$
is regular.
Therefore, $\derivedrelation(\alphabet,\width,\vocabulary,\formula(\avariable_1,\dots,\avariable_{\nfreevariables}))$
is regular, since it is equal to the Boolean combination
$\neg \derivedrelation(\alphabet,\width,\vocabulary,\formula_{1}(\avariable_1,\dots,\avariable_{\nfreevariables})) 
\cap \rchanged{\frr{\alphabet}{\width}{\vocabulary}{\nfreevariables}}$ of two regular relations. 

Finally, assume that
$\formula(\avariable_{1},\ldots,\avariable_{\nfreevariables}) \equiv \exists\bvariable\;
\formula_{3}(\avariable_{1},\ldots,\avariable_{\nfreevariables},\bvariable)$ for some variable
$\bvariable$ that is free in $\formula_{3}$. By the induction hypothesis,
$\derivedrelation(\alphabet,\width,\vocabulary,\formula_{3}(\avariable_1,\dots,\avariable_{\nfreevariables},\bvariable))$
is a regular relation.
Therefore, $\derivedrelation(\alphabet,\width,\vocabulary,\formula(\avariable_1,\dots,\avariable_{\nfreevariables}))$
is regular, since it can be obtained from 
$\derivedrelation(\alphabet,\width,\vocabulary,\formula_{3}(\avariable_1,\dots,\avariable_{\nfreevariables},\bvariable))$
by deleting the coordinate corresponding to $\bvariable$ from each tuple in 
$\derivedrelation(\alphabet,\width,\vocabulary,\formula_{3}(\avariable_1,\dots,\avariable_{\nfreevariables},\bvariable))$.
	In other words, $\derivedrelation(\alphabet,\width,\vocabulary,\formula,\freevariableset_{\nfreevariables}) = \projection(\derivedrelation(\alphabet,\width,\vocabulary,\formula_{3}(\avariable_1,\dots,\avariable_{\nfreevariables},\bvariable)),
\nrelations + \nfreevariables + 2)$.
\end{proof}

\section{Counting Satisfying Assignments} 
\label{section:CountingSatisfyingAssignments}

Let $\vocabulary=(\relationsymbol_1,\ldots,\relationsymbol_{\nrelations})$ be a relational vocabulary, 
$\formula(\variable_1,\ldots,\variable_{\nfreevariables})$ be an $\fol{\vocabulary}$-formula 
with $\freevariables(\formula) \subseteq \set{\variable_1,\ldots,\variable_{\nfreevariables}}$, 
and let $\astructure$ be a finite $\vocabulary$-structure. 
We say that an assignment $(\vertex_1,\ldots,\vertex_{\nfreevariables})\in \relationsymbol_0(\structure)^{\nfreevariables}$
of the variables $(\variable_1,\ldots,\variable_{\nfreevariables})$ \emph{satisfies $\formula$ with respect to $\structure$}
if $\structure \models \formula[\vertex_1,\ldots,\vertex_{\nfreevariables}]$. 

The next theorem states that given a $(\alphabet,\width,\vocabulary)$-structural tuple $\structuraltuple$ such that 
$\derivedstructure(\structuraltuple) = \structure$, the problem of counting the number of satisfying assignments
of $\formula$ with respect to $\structure$ 
is fixed parameter tractable when parameterized by $\alphabet$, $\width$, $\vocabulary$, $\formula$ and $\nfreevariables$. 
More specifically, such a problem can be solved in time quadratic in the length of the $(\alphabet,\width,\vocabulary)$-structural tuple $\structuraltuple$. 
We remark that, in most applications, the parameters $\alphabet,\vocabulary,\formula,\nfreevariables$ are naturally already fixed, and therefore the only complexity parameter that is indeed relevant in these situations is the width $\width$. 

\begin{theorem}\label{theorem:solutions}
Let $\alphabet$ be an alphabet, $\width \in \pN$, $\vocabulary = \tuple{\relationsymbol_{1}, \ldots, \relationsymbol_{\nrelations}}$ be a 
relational vocabulary and $\formula(\variable_{1},\ldots,\variable_{\nfreevariables})$ be an $\fol{\vocabulary}$-formula. 
Given a $(\alphabet,\width,\vocabulary)$-structural tuple $\structuraltuple = (\odd_0,\odd_1\ldots\odd_{\nrelations})$ 
of length $\oddlength$, one can count in time $f(\alphabet,\width,\vocabulary,\formula,\nfreevariables) \cdot \oddlength^2$, 
for some computable function $f$, the number of satisfying assignments for $\formula$ with respect to $\derivedstructure(\structuraltuple)$. 
\end{theorem}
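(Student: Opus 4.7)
The plan is to leverage Theorem~\ref{theorem:RegularInduction} to obtain a finite automaton $\finiteautomaton_{\formula}$ over the alphabet $\layervocabularyalphabet{\alphabet}{\width}{\vocabulary} \tensorproduct (\alphabet \dcup \set{\paddingsymbol})^{\tensorproduct \nfreevariables}$ whose language is precisely $\associatedlang{\derivedrelation(\alphabet,\width,\vocabulary,\formula(\avariable_{1},\dots,\avariable_{\nfreevariables}))}$. Crucially, $\finiteautomaton_{\formula}$ depends only on $\alphabet,\width,\vocabulary,\formula,\nfreevariables$, so both its size and the time needed to build it are absorbed into the function $f$ of the statement.

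View the input $\structuraltuple$ as a string $\astring = \alayer_{1}\cdots\alayer_{\oddlength}$ of length $\oddlength$ over $\layervocabularyalphabet{\alphabet}{\width}{\vocabulary}$. By the characterisation above, the number of satisfying assignments of $\formula$ in $\derivedstructure(\structuraltuple)$ coincides with the number of strings $\bstring = \bsymbol_{1}\cdots\bsymbol_{\oddlength} \in ((\alphabet \dcup \set{\paddingsymbol})^{\tensorproduct \nfreevariables})^{\oddlength}$ for which $\astring\tensorproduct\bstring \in \automatonlang{\finiteautomaton_{\formula}}$. I will compute this count by standard forward dynamic programming on $\finiteautomaton_{\formula}$: for each $i \in \dbset{\oddlength+1}$ and each $\state \in \automatonstates(\finiteautomaton_{\formula})$, maintain a counter $c_{i}[\state]$ giving the number of prefixes $\bsymbol_{1}\cdots\bsymbol_{i}$ driving $\finiteautomaton_{\formula}$ from some initial state to $\state$ upon reading $(\alayer_{1},\bsymbol_{1})\cdots(\alayer_{i},\bsymbol_{i})$. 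Initialise $c_{0}[\state] = 1$ for each initial state, update $c_{i+1}[\state']$ as the sum of $c_{i}[\state]$ over all transitions $(\state, (\alayer_{i+1},\bsymbol), \state') \in \automatontransitions(\finiteautomaton_{\formula})$, and output $\sum_{\state \in \automatonfinalstates(\finiteautomaton_{\formula})} c_{\oddlength}[\state]$.

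For the complexity analysis, each counter $c_{i}[\state]$ is bounded by $(\abs{\alphabet}+1)^{\nfreevariables \cdot i}$ and hence has bit-length $\BigOh(i)$, where the hidden constant depends on $\alphabet$ and $\nfreevariables$. The $i$-th update performs $\BigOh(\abs{\automatontransitions(\finiteautomaton_{\formula})})$ additions of integers of this size, each costing $\BigOh(i)$ time. Summing over $i \in \bset{\oddlength}$ yields a total running time bounded by $f(\alphabet,\width,\vocabulary,\formula,\nfreevariables) \cdot \oddlength^{2}$ for a suitable computable $f$, as required.

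The main obstacle, which I expect to require the most care, is verifying that the final counter really equals the number of satisfying tuples $(\vertex_{1},\ldots,\vertex_{\nfreevariables}) \in \oddlang{\odd_{0}}^{\nfreevariables}$. This reduces to two observations. First, each such tuple has a unique encoding as a string $\bstring$ of length $\oddlength$ over $(\alphabet \dcup \set{\paddingsymbol})^{\tensorproduct \nfreevariables}$, since $\vertex_{j} \in \oddlang{\odd_{0}} \subseteq \alphabet^{\leq \oddlength}$ forces a ``symbols followed by padding'' pattern in each coordinate. Second, $\finiteautomaton_{\formula}$ rejects every $\bstring$ whose decoding is not a valid satisfying tuple, because by construction its language only contains well-formed tensor products drawn from $\frr{\alphabet}{\width}{\vocabulary}{\nfreevariables}$ (cf.\ Lemma~\ref{lemma:frr_regular}), in which the condition $\vertex_{j} \in \oddlang{\odd_{0}}$ is already enforced.
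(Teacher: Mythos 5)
Your overall strategy is the same as the paper's: obtain from Theorem~\ref{theorem:RegularInduction} an automaton for $\associatedlang{\derivedrelation(\alphabet,\width,\vocabulary,\formula(\avariable_1,\dots,\avariable_{\nfreevariables}))}$, fix the first $\nrelations+1$ coordinates to the input ODDs, and count by levelled dynamic programming over the $\oddlength$ positions, with the bit-growth of the counters accounting for the second factor of $\oddlength$. (The paper phrases the restriction to the input as an explicit product with an acyclic levelled automaton $\finiteautomaton_2$ built from $\structuraltuple$, but that is the same computation as your on-line DP.)

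There is, however, one genuine gap. Your recurrence $c_{i+1}[\state'] = \sum c_{i}[\state]$ over transitions computes the number of accepting \emph{runs}, not the number of accepted \emph{strings} $\bstring$ — and these coincide only when the automaton is deterministic (or at least unambiguous). Theorem~\ref{theorem:RegularInduction} does not hand you a deterministic automaton: its construction uses projections and homomorphic images (e.g.\ for existential quantification and for $\projection$/$\fold$), which naturally produce nondeterministic automata. On an ambiguous NFA, a single string $\astring\tensorproduct\bstring$ may admit several accepting runs, or be accepted at several final states, and your final sum $\sum_{\state\in\automatonfinalstates}c_{\oddlength}[\state]$ would over-count the satisfying assignments. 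So as written, $c_i[\state]$ is \emph{not} ``the number of prefixes driving $\finiteautomaton_\formula$ to $\state$'' — that quantity does not satisfy your recurrence. The fix is standard and cheap: determinize (or otherwise disambiguate) $\finiteautomaton_\formula$ before running the DP; since $\finiteautomaton_\formula$ depends only on $\alphabet,\width,\vocabulary,\formula,\nfreevariables$, the blow-up is absorbed into $f$. This is exactly why the paper's proof insists on taking $\finiteautomaton_1$ to be the \emph{minimum deterministic} automaton for the relation. You should state this step explicitly; the rest of your argument, including the identification of assignments with their padded length-$\oddlength$ encodings and the complexity accounting, is correct.
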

\begin{proof}
     \rchanged{Let $\freevariableset_{\nfreevariables} = \set{\variable_1,\ldots,\variable_{\nfreevariables}}$ and $\finiteautomaton_{1}$ be the minimum deterministic finite automaton with language $\automatonlang{\finiteautomaton_{1}} =
     \associatedlang{\derivedrelation(\alphabet,\width,\vocabulary,\formula,\freevariableset_{\nfreevariables})}$. 
     
     Assume that $\odd_{0} =
     \layer_{1}^{0}\cdots\layer_{\oddlength}^{0} \in
     \odddefiningset{\alphabet}{\width}{\oddlength}$, and 
     for each $i\in [\oddlength]$, $\odd_{i} = \layer_{1}^{i}\cdots\layer_{\oddlength}^{i} \in
     \odddefiningset{\alphabet^{\tensorproduct \arityrelation_{i}}}{\width}{\oddlength}$, where $\arityrelation_i$ denotes 
     the arity of $\relationsymbol_i$.  
     Then, consider} the finite automaton $\finiteautomaton_{2}$ defined as follows: \begin{itemize}
   \item $\automatonstates(\finiteautomaton_{2}) = \set{\state_{0}} \cup
     \set{\state_{i,
     [\padasymbol_{1},\ldots,\padasymbol_{\nfreevariables}]} \setst
     \padasymbol_{1}, \ldots, \padasymbol_{\nfreevariables} \in
     \alphabet \dcup \set{\paddingsymbol},\, i \in
     \bset{\oddlength}}\text{;}$\vspace{1.5ex}

    \item
      $\begin{aligned}[t]\automatontransitions(\finiteautomaton_{2})
        & = \big\{\tuple{\state_{0},
        \tuple{\layer_{1}^{0}, \layer_{1}^{1}, \ldots,
        \layer_{1}^{\nrelations}, \asymbol_{1}, \ldots,
        \asymbol_{\nfreevariables}}, \state_{1,
        [\asymbol_{1}, \ldots,
        \asymbol_{\nfreevariables}]}} \setst
        \asymbol_{1}, \ldots,
        \asymbol_{\nfreevariables} \in \alphabet\big\}
        \\[-0.5ex] & \cup
        \begin{multlined}[t][0.835\textwidth]\Big\{\tuple{\state_{i,
        [\padbsymbol_{1}, \ldots,
        \padbsymbol_{\nfreevariables}]},
        \tuple{\layer_{i+1}^{0}, \layer_{i+1}^{1},
        \ldots, \layer_{i+1}^{\nrelations},
        \padasymbol_{1}, \ldots,
        \padasymbol_{\nfreevariables}}, \state_{i+1,
        [\padasymbol_{1}, \ldots,
        \padasymbol_{\nfreevariables}]}} \setst
        \\[0.2ex] \hfill \rchanged{\padbsymbol_{j}, \padasymbol_{j} \in \alphabet\dcup\set{\paddingsymbol}},\, (\padbsymbol_{j} =
        \paddingsymbol \Rightarrow \padasymbol_{j} =
        \paddingsymbol) \text{ for each } j \in
        \bset{\nfreevariables}, \hspace{4.0ex}
        \\[-1.0ex] \hfill i \in
        \bset{\oddlength-1}\Big\}\text{;}\end{multlined}\end{aligned}$\vspace{1.5ex}

    \item $\automatoninitialstates(\finiteautomaton_{2}) =
      \set{\state_{0}}$;
      \rchanged{$\automatonfinalstates(\finiteautomaton_{2}) =
            \set{\state_{\oddlength, [\padasymbol_{1}, \ldots,
            \padasymbol_{\nfreevariables}]} \setst \padasymbol_{1}, \ldots, \padasymbol_{\nfreevariables} \in
           \alphabet \dcup \set{\paddingsymbol}}$}.
      \end{itemize} 
    
     \rchanged{One can verify that} the language $\automatonlang{\finiteautomaton_2}$ consists of all strings of the form
     
     $$\odd_0\tensorproduct \odd_1\tensorproduct \cdots \tensorproduct \odd_{\nrelations} \tensorproduct 
      \vertex_1\tensorproduct \cdots \tensorproduct \vertex_{\nfreevariables}$$
     such that, for each $i\in \bset{\nfreevariables}$, $\vertex_i$ is a string in $\alphabet^{\leq \stringlength}$.

      Note that, the set of states of $\finiteautomaton_2$ \rchanged{can be} split into $\oddlength$ levels \rchanged{such that} states in level $i$ only have transitions to states in level $i+1$. Therefore, $\finiteautomaton_2$ is acyclic. Additionally, one can readily check that $\finiteautomaton_2$
      is deterministic and \rchanged{that} the number of states in each level of \rchanged{$\finiteautomaton_2$ can be described as} a function of $\abs{\alphabet}$ and $\nfreevariables$. 

      Now, consider the automaton $\finiteautomaton_{\cap} =  \finiteautomaton_{1} \cap \finiteautomaton_{2}$ 
      whose language is the intersection of the languages of $\finiteautomaton_1$ and of $\finiteautomaton_2$.
      Note that $\finiteautomaton_{\cap}$ accepts precisely the strings of the form $\odd_0\tensorproduct \odd_1\tensorproduct \cdots \tensorproduct \odd_{\nrelations} \tensorproduct 
            \vertex_1\tensorproduct\cdots \tensorproduct \vertex_{\nfreevariables}$
      such that $(\vertex_1\ldots\vertex_{\nfreevariables})$ is a satisfying assignment for
      $\formula$ with respect to the $\vocabulary$-structure $\derivedstructure(\structuraltuple)$.
      Furthermore, \rchanged{one can verify that $\finiteautomaton_{\cap}$} is also \rchanged{an acyclic deterministic automaton whose states can be split into $\oddlength$ levels such that} there are only transitions going from level $i$ to level $i+1$.
      
      Therefore, counting the number of assignments \rchanged{that satisfy
      $\formula$ with respect to $\derivedstructure(\structuraltuple)$} amounts
      to counting the number of accepting paths in $\finiteautomaton_{\cap}$.
      This can be done in time $\BigOh(\abs{\finiteautomaton_{\cap}}^2)$ 
      using standard dynamic programming, starting from
      \rchanged{the states of $\finiteautomaton_{\cap}$ that are in the last
      level and, then, recording, for each state of each level, the number of
      paths from that state to a final state.} \rchanged{Since
      $\finiteautomaton_{\cap}$ can be constructed in time
      $g(\alphabet,\width,\vocabulary,\formula,\nfreevariables)\cdot \oddlength$, for some computable function $g$,} 
      the overall algorithm takes time $\BigOh(g(\alphabet,\width,\vocabulary,\formula,\nfreevariables)^2 \cdot \oddlength^2)$. By setting 
      $f(\alphabet,\width,\vocabulary,\formula,\nfreevariables) = c\cdot g(\alphabet,\width,\vocabulary,\formula,\nfreevariables)^2+d$ for 
      some sufficiently large constants $c$ and $d$, the theorem follows.\end{proof}

\section{Conclusion}

In this work, we have introduced the notion of decisional-width of a relational structure, 
as a measure that intuitively tries to provide a quantification of how difficult it is to 
define the relations of the structure. Subsequently we provided a suitable way of defining
infinite classes of structures of small width. Interestingly, there exist classes of structures 
of constant decisional-width that have very high width with respect to most width measures defined 
so far in structural graph theory. As an example, we have shown that the class of hypercube graphs has
regular-decisional width $2$, while it is well known that they have unbounded treewidth and cliquewidth. Additionally, 
this family is not nowhere-dense. Therefore, first-order model-checking and validity-testing techniques 
developed for these well studied classes of graphs do not generalize to graphs of constant decisional
width. Other examples of families of graphs of constant decisional width 
are paths, cliques (which have unbounded treewidth), unlabeled grids (which have unbounded treewidth and cliquewidth)
and many others. It is interesting to note that these mentioned classes have all a regular structure
and, therefore, are ``easy'' to describe.

\subsubsection*{Acknowledgements} 

Alexsander Andrade de Melo acknowledges support from the Brazilian National Council for Scientific and Technological Development (CNPQ 140399/2017-8)
and from the Brazilian Federal Agency for Support and Evaluation of Graduate Education (CAPES 88881.187636/2018-01). 
Mateus de Oliveira Oliveira acknowledges support from the Bergen Research Foundation, from the Research Council of Norway (288761) and from 
the Sigma2 network (NN9535K).

\bibliographystyle{spmpsci}      
\bibliography{implicitGraphs}   

\end{document}